\newcommand{\ii}{\mathrm{i}}
\newcommand{\cH}{\mathcal{H}}
\newcommand{\ran}{\mathrm{ran}}
\newcommand{\ud}{\mathrm{d}}
 \newtheorem{thm}{Theorem}[section]
 \newtheorem{lem}[thm]{Lemma}
 \newtheorem{prop}[thm]{Proposition}
 \theoremstyle{definition}
 \theoremstyle{remark}
 \newtheorem{rem}[thm]{Remark}
 \numberwithin{equation}{section}
\begin{document}\hyphenation{Cou-lomb}

%
%
%
%
%
%
%
%
%

\title{Self-adjoint realisations of the Dirac-Coulomb Hamiltonian for heavy nuclei\footnote{This work was partially supported by the 2014-2017 MIUR-FIR grant ``\emph{Cond-Math: Condensed Matter and Mathematical Physics}'' code RBFR13WAET.}}

\author{Matteo Gallone\footnote{International School for Advanced Studies -- SISSA, via Bonomea 265, 34136 Trieste, Italy. e-mail 
\texttt{mgallone@sissa.it}} $\,$ and Alessandro Michelangeli\footnote{International School for Advanced Studies -- SISSA, via Bonomea 265 34136 Trieste, Italy.
e-mail 	\texttt{alemiche@sissa.it}}}


%
%
%
%

\date{}
\maketitle
\begin{abstract}
We derive a classification of the self-adjoint extensions of the three-dimensional Dirac-Coulomb operator in the critical regime of the Coulomb coupling. Our approach is solely based upon the Kre{\u\i}n-Vi\v{s}ik-Birman extension scheme, or also on Grubb's universal classification theory, as opposite to previous works within the standard von Neumann framework. This let the boundary condition of self-adjointness emerge, neatly and intrinsically, as a multiplicative constraint between regular and singular part of the functions in the domain of the extension, the multiplicative constant giving also immediate information on the invertibility property and on the resolvent and spectral gap of the extension. 

\textbf{Mathematics Subject Classification (2010).} 47B25; 47N20; 47N50; 81Q10.

\textbf{Keywords.} Dirac-Coulomb operator, self-adjoint extensions, Kre{\u\i}n-Vi\v{s}ik-Birman extension theory, Grubb's universal classification
\end{abstract}


\section{Introduction}

In quantum mechanics a relativistic electron or positron (or more generally a relativistic spin-$\frac{1}{2}$ particle) which moves freely in the three-dimensional space is described by elements of the Hilbert space
\begin{equation}
 \cH\;:=\;L^2(\mathbb{R}^3)\otimes\mathbb{C}^4\;\cong\;L^2(\mathbb{R}^3,\mathbb{C}^4,\ud x)
\end{equation}
and by the (formal) Hamiltonian
\begin{equation}
 H_0\;:=\;-\ii c \hbar \,\bm{\alpha}\cdot \bm{\nabla}+\beta m c^2
\end{equation}
acting on $\cH$, where $\hbar$ is Planck's constant, $c$ is the speed of light, $m$ is the mass of the particle, and $\bm{\alpha}\equiv(\alpha_1,\alpha_2,\alpha_3)$ and $\beta$ are the $4\times 4$ matrices
\begin{equation}
 \beta\;=\;\begin{pmatrix} 
            \mathbbm{1} & \mathbbm{O} \\
            \mathbbm{O} & -\mathbbm{1}
           \end{pmatrix}\,,\qquad
 \alpha_j\;=\;\begin{pmatrix}
               \mathbbm{O} & \sigma_j \\
               \sigma_j & \mathbbm{O}
              \end{pmatrix}\,,\qquad j\in\{1,2,3\}\,,
\end{equation}
having denoted by $\mathbbm{1}$ and $\mathbbm{O}$, respectively, the identity and the zero $2\times 2$ matrix, and by $\sigma_j$, as customary, the Pauli matrices
\begin{equation}
 \sigma_1\;=\;\begin{pmatrix}
               0 & 1 \\ 1 & 0
              \end{pmatrix}\,,\qquad
 \sigma_2\;=\;\begin{pmatrix}
               0 & -\ii \\ \ii & 0
              \end{pmatrix}\,,\qquad
 \sigma_3\;=\;\begin{pmatrix}
               1 & 0 \\ 0 & -1
              \end{pmatrix}\,.
\end{equation}
Explicitly, the scalar product between any two elements $\psi\equiv(\psi_1,\psi_2,\psi_3,\psi_4)$ and $\phi\equiv(\phi_1,\phi_2,\phi_3,\phi_4)$ in $\cH$ is given by
\begin{equation}
 \langle\psi,\phi\rangle_\cH\;=\;\sum_{j=1}^4\int_{\mathbb{R}^3}\overline{\psi_j(x)}\,\phi_j(x)\,\ud x\,,
\end{equation}
and $H_0$ is the first order matrix-valued differential operator
\begin{equation}
 H_0\;=\;\begin{pmatrix}
          m c^2\mathbbm{1} & -\ii\hbar c \,\bm{\sigma}\cdot\bm{\nabla} \\
          -\ii\hbar c \,\bm{\sigma}\cdot\bm{\nabla} &  -mc^2\mathbbm{1}
         \end{pmatrix}
\end{equation}
(where $\bm{\sigma}\equiv(\sigma_1,\sigma_2,\sigma_3)$), known as the free Dirac operator.

The properties of $H_0$ are well known \cite{Thaller-Dirac-1992}. $H_0$ is essentially self-adjoint on $C^\infty_0(\mathbb{R}^3\!\setminus\!\{0\})\otimes\mathbb{C}^4$ with domain of self-adjointness
\begin{equation}
 H^1(\mathbb{R}^3)\otimes\mathbb{C}^4\;\cong\; H^1(\mathbb{R}^3,\mathbb{C}^4)\,,
\end{equation}
 and its spectrum (as a self-adjoint operator on $\cH$) is purely absolutely continuous and given by
\begin{equation}
 \sigma(H_0)\;=\;\sigma_{\mathrm{ac}}(H_0)\;=\;(-\infty,-mc^2]\cup[mc^2,+\infty)\,.
\end{equation}
In fact, $H_0$ is unitarily equivalent to
\begin{equation}
 \widetilde{H_0}\;:=\;\begin{pmatrix}
                       \mathbbm{1}\sqrt{-c^2\Delta+m^2c^4} & 0 \\
                       0 & -\mathbbm{1}\sqrt{-c^2\Delta+m^2c^4}
                      \end{pmatrix}.
\end{equation}

When the particle is subject to the external scalar field due to the Coulomb interaction with a nucleus of atomic number $Z$ placed in the origin of $\mathbb{R}^3$, this is accounted for by the so-called Dirac-Coulomb Hamiltonian
\begin{equation}\label{eq:Hformal}
 H\;:=\;-\ii c \hbar \,\bm{\alpha}\cdot \bm{\nabla}+\beta m c^2-\frac{e^2 Z}{\hbar\,|x|}\mathbbm{1}\;=\;H_0-\frac{\,cZ\alpha_\mathrm{f}\,}{|x|}\mathbbm{1}\,,
\end{equation}
where now $\mathbbm{1}$ is the $4\times 4$ identity matrix (no confusion should arise here and henceforth on the symbol $\mathbbm{1}$, being its meaning of identity self-explanatory from the context), $e$ is the elementary charge, and 
\begin{equation}
 \alpha_\mathrm{f}\;=\;\frac{e^2}{\hbar c}\;\approx\;\frac{1}{137}
\end{equation}
is the fine-structure constant. The operator $H$ can at least be defined minimally on $C^\infty_0(\mathbb{R}^3\!\setminus\!\{0\})\otimes\mathbb{C}^4$, in which case it is densely defined and symmetric on $\cH$.
However, the possibility that this yields an unambiguous physical realisation of $H$ depends on the magnitude of the coupling $Z\alpha_\mathrm{f}$, hence of the nuclear charge $Z$. It is indeed well known \cite{Thaller-Dirac-1992} that the formal operator \eqref{eq:Hformal} is essentially self-adjoint on $C^\infty_0(\mathbb{R}^3\!\setminus\!\{0\},\mathbb{C}^4)$ \emph{only} when $Z\alpha_\mathrm{f}\leqslant\frac{\sqrt{3}}{\,2}$ (i.e., $Z\leqslant 118$), in which case the domain of self-adjointness is $\mathcal{D}(H)=\mathcal{D}(H_0)=H^1(\mathbb{R}^3,\mathbb{C}^4)$ and the spectrum consists of the same essential part $\sigma_\mathrm{ess}(H)=(-\infty,-mc^2]\cup[mc^2,+\infty)$ as for $H_0$, plus a discrete spectrum in the `gap' $(-mc^2,mc^2)$ consisting of eigenvalues $E_{n,\kappa}$ given by Sommerfeld's celebrated fine-structure formula
\begin{equation}
E_{n,\kappa}\;=\;mc^2\Big(1+\frac{(Z\alpha_\mathrm{f})^2}{\big(n+\sqrt{\kappa^2-(Z\alpha_\mathrm{f})^2}\,\big)^{\!2}}\Big)^{\!-\frac{1}{2}},\quad n\in\mathbb{N}_0,\,\kappa\in\mathbb{Z}\!\setminus\!\{0\}\,.
\end{equation}

Although the above regime of $Z$ covers all currently known elements (the last one to be discovered, the Oganesson ${}^{294}_{118}$Og, thus $Z=118$, was first synthesized in 2002 and formally named in 2016), the problem of the self-adjoint realisation of the Dirac-Coulomb Hamiltonian above the threshold $Z\alpha_\mathrm{f}=\frac{\sqrt{3}}{\,2}$ has been topical since long and so is still today. Even the consideration that the problem only arises due to the idealisation of point-like nuclei (and also because one neglects the anomalous magnetic moment of the electron) does not diminish its relevance, given the extreme experimental precision, for example, of Sommerfeld's fine-structure formula for the eigenvalues of $H$ when $Z\leqslant 118$.

From the mathematical side, the study of the self-adjoint extensions of the Dirac-Coulomb Hamiltonian has a long and active history \cite{Evans-1970,Weidmann-1971,Rejto-1971,Schmincke-1972,Schmincke-1972-distinguished,Gustafson-Rejto-1973,Wust-1975,Kalf-Schmincke-Wust-1975,Nenciu-1976,Wust-1977,Chernoff-1977,Klaus-Wust-1978,Landgren-Rejto-JMP1979,Landgren-Rejto-Martin-JMP1980,Landgren-Rejto-1981,Burnap-Brysk-Zweifel-NuovoCimento1981,Arai-Yamada-RIMS-1982,Arai-Yamada-RIMS-1983,Kato-1983,Berthier-Georgescu-JFA1987,Thaller-Dirac-1992,Xia-1999,Georgescu-Mantoiu-JOT2001,Esteban-Loss-JMP2007,Voronov-Gitman-Tyutin-TMP2007,Arrizabalaga-JMP2011,Arrizabalaga-Duoandikoetxea-Vega_2012_JMP2013,Hogreve-2013_JPhysA,Esteban-Lewin-Sere-2017_DC-minmax-levels}.
A concise survey of this vast literature is discussed in \cite{Gallone-AQM2017}. Let us cast in Theorem \ref{thm:recap} here below the main relevant facts known today.
For the clarity of presentation, let us adopt natural units $c=\hbar=m=e=1$ henceforth, so as to get rid of mathematically inessential parameters, the coupling constant of relevance thus becoming $\nu\equiv - Z\alpha_\mathrm{f}$.

\begin{thm}[Self-adjoint extensions of the minimal Dirac-Coulomb]\label{thm:recap}
On the Hilbert space $\cH=L^2(\mathbb{R}^3,\mathbb{C}^4,\ud x)$ consider, for fixed $\nu\in\mathbb{R}$, the operator
\begin{equation}
 \begin{split}
  H\;&=\;H_0+\frac{\nu}{\,|x|\,}\mathbbm{1}\,,\qquad H_0\;=\;-\ii\,\bm{\alpha}\cdot \bm{\nabla}+\beta\,, \\
  \mathcal{D}(H)\;&=\;\mathcal{D}(H_0)\;=\;C^\infty_0(\mathbb{R}^3\!\setminus\!\{0\},\mathbb{C}^4)\,.
 \end{split}
\end{equation}
$H_0$ is essentially self-adjoint and the domain (of self-adjointness) of its operator closure $\overline{H_0}$ is $H^1(\mathbb{R}^3,\mathbb{C}^4)$. Moreover, the following holds.
\begin{itemize}
 \item[(i)] \emph{(Sub-critical regime.)} If $|\nu|\leqslant\frac{\sqrt{3}}{2}$, then $H$ is essentially self-adjoint and $\mathcal{D}(\overline{H})=H^1(\mathbb{R}^3,\mathbb{C}^4)$.
 \item[(ii)] \emph{(Critical regime.)} If $\frac{\sqrt{3}}{2}<|\nu|<1$, then $H$ admits an infinity of self-adjoint extensions, among which there is a `distinguished' one, $H_D$, uniquely characterised by the properties
 \begin{equation}
  \mathcal{D}(H_D)\,\subset\,\mathcal{D}(|H_0|^{1/2})\quad\mathrm{or}\quad  \mathcal{D}(H_D)\,\subset\,\mathcal{D}(|x|^{-1/2})\,,
 \end{equation}
 that is, the unique extension whose operator domain is both in the kinetic energy form domain $\mathcal{D}[H_0]=\mathcal{D}(|H_0|^{1/2})$ and in the potential energy form domain $\mathcal{D}[|x|^{-1}]=\mathcal{D}(|x|^{-1/2})$. Moreover, $0\notin\sigma(H_D)$.
 \item[(iii)] \emph{(Super-critical regime.)} If $|\nu|\geqslant 1$, then $H$ admits an infinity of self-adjoint extensions, without a distinguished one in the sense of the operator $H_D$ in the critical regime. In fact, when $|\nu|>1$ every self-adjoint extension of $H$ has infinitely many eigenfunctions not belonging to $\mathcal{D}(|x|^{-1/2})$.
\end{itemize}
In either regime, the spectrum of any self-adjoint extension $\widetilde{H}$ of $H$ is such that
\begin{equation}
 \begin{split}
  \sigma_\mathrm{ess}(\widetilde{H})\;&=\;\sigma(\overline{H_0})\;=\;(-\infty,-1]\cup[1,+\infty) \\
  \sigma_\mathrm{disc}(\widetilde{H})\;&\subset\;(-1,1)\,.
 \end{split}
\end{equation}
\end{thm}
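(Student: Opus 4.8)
Since Theorem~\ref{thm:recap} collects a body of classical facts, the plan is to re-derive all of them from one device, the partial-wave decomposition of $H$. Exploiting invariance under rotations and parity, one has the orthogonal decomposition $\cH\cong\bigoplus_{\kappa,m}\big(L^2(\mathbb{R}_+,\ud r)\otimes\mathbb{C}^2\big)$, indexed by the spin-orbit quantum number $\kappa\in\mathbb{Z}\!\setminus\!\{0\}$ (each value occurring with the finite multiplicity $2|\kappa|$ due to the magnetic number $m$) and realised by the spherical-spinor harmonics, under which $C^\infty_0(\mathbb{R}^3\!\setminus\!\{0\},\mathbb{C}^4)$ is mapped onto $\bigoplus_{\kappa,m}\big(C^\infty_0(\mathbb{R}_+)\otimes\mathbb{C}^2\big)$ and $H$ onto $\bigoplus_{\kappa,m}h_\kappa$, with the radial operator
\begin{equation*}
 h_\kappa\;=\;\begin{pmatrix} 1+\dfrac{\nu}{r} & -\dfrac{\ud}{\ud r}+\dfrac{\kappa}{r} \\[1.8ex] \dfrac{\ud}{\ud r}+\dfrac{\kappa}{r} & -1+\dfrac{\nu}{r} \end{pmatrix},\qquad \mathcal{D}(h_\kappa)\;=\;C^\infty_0(\mathbb{R}_+)\otimes\mathbb{C}^2\,.
\end{equation*}
Everything then reduces to the self-adjointness and spectral theory of the one-dimensional first-order systems $h_\kappa$, the assertions about $H$ being recovered by reassembling the channels. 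In particular the essential self-adjointness of $H_0$ with $\mathcal{D}(\overline{H_0})=H^1(\mathbb{R}^3,\mathbb{C}^4)$ is the special case $\nu=0$ (equivalently, a Fourier-transform computation using $H_0^2=(-\Delta+1)\mathbbm{1}$, together with the fact that $\{0\}$ has vanishing $H^1$-capacity in $\mathbb{R}^3$, so that removing it from the domain does not enlarge the closure).

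The heart of the matter is the Weyl limit-point/limit-circle classification of $h_\kappa$. At $r=+\infty$ the term $\nu/r$ vanishes, so $h_\kappa$ behaves like the free radial Dirac system, whose fundamental solutions are of the type $e^{\pm\sqrt{1-E^2}\,r}$ for $E\in(-1,1)$ and oscillatory for $|E|\geqslant1$; hence $h_\kappa$ is in the limit-point case at infinity for every $\kappa$ and every $E$. At $r=0$ I would insert the Frobenius ansatz $\binom{f}{g}\sim r^\gamma\binom{a}{b}$: the indicial equation forces $\gamma^2=\kappa^2-\nu^2$, i.e.\ $\gamma=\pm s_\kappa$ with $s_\kappa:=\sqrt{\kappa^2-\nu^2}$, and \emph{both} fundamental solutions are square-integrable near $0$ — so $h_\kappa$ is in the limit-circle case at $0$, with deficiency indices $(1,1)$ — precisely when $\kappa^2-\nu^2<\tfrac14$ (this covering the oscillatory regime $\kappa^2<\nu^2$, where $\gamma=\pm\ii\sqrt{\nu^2-\kappa^2}$), and in the limit-point case otherwise. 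Since $\min_\kappa\kappa^2=1$, the channels $\kappa=\pm1$ are the only ones that can turn limit-circle unless $|\nu|$ is very large (concretely, as long as $|\nu|<\tfrac{\sqrt{15}}{2}$), and they do so exactly for $|\nu|>\tfrac{\sqrt3}{2}$. Consequently: for $|\nu|\leqslant\tfrac{\sqrt3}{2}$ every $h_\kappa$, hence $H$, is essentially self-adjoint, which is (i); for $\tfrac{\sqrt3}{2}<|\nu|<1$ exactly the channels $\kappa=\pm1$ contribute non-trivially, so $H$ has finite, non-zero deficiency indices and therefore a continuum of self-adjoint extensions; and for $|\nu|\geqslant1$ this holds a fortiori (with further channels joining as $|\nu|$ grows) — which is the existence part of (ii)--(iii).

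The principle singling out $H_D$, and the spectral statements, follow from the same asymptotics. Restoring the radial Jacobian, a channel solution with leading behaviour $r^\gamma$ corresponds to a four-spinor $\sim r^{\gamma-1}$ near the origin, and a short computation gives that such a spinor lies in $\mathcal{D}(|x|^{-1/2})$ — equivalently, because the two local integrability thresholds coincide, in $H^{1/2}(\mathbb{R}^3,\mathbb{C}^4)=\mathcal{D}(|H_0|^{1/2})$ — if and only if $\mathrm{Re}\,\gamma>0$. Hence in the critical regime, where in each limit-circle channel $\kappa=\pm1$ one has $0<s_\kappa<\tfrac12$, exactly one admissible boundary behaviour, $r^{s_\kappa}$, is compatible with both form domains; imposing it in every such channel defines the unique self-adjoint extension contained in $\mathcal{D}(|x|^{-1/2})$, resp.\ in $\mathcal{D}(|H_0|^{1/2})$, namely $H_D$. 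In the super-critical regime $|\nu|\geqslant1$ neither exponent of the channels $\kappa=\pm1$ has positive real part (for $|\nu|=1$ the double exponent is $0$), so no self-adjoint realisation has domain inside $\mathcal{D}(|x|^{-1/2})$; moreover every self-adjoint extension has infinitely many eigenvalues in $(-1,1)$ accumulating at a threshold (a robustness feature of the $1/r$ tail, since the various extensions differ by finite-rank resolvent perturbations), and for $|\nu|>1$ their eigenfunctions, having non-zero component in the oscillatory channels $\kappa=\pm1$, fail to belong to $\mathcal{D}(|x|^{-1/2})$ — which is (iii). For the spectrum: because $\nu/r\to0$ at infinity while the limit-circle endpoint $r=0$ contributes no essential spectrum, each $h_\kappa$ (and each self-adjoint realisation thereof) has $\sigma_\mathrm{ess}=(-\infty,-1]\cup[1,\infty)$; reassembling the channels, with the needed uniformity in $\kappa$, and using that all self-adjoint extensions of a symmetric operator share the same essential spectrum, yields $\sigma_\mathrm{ess}(\widetilde H)=(-\infty,-1]\cup[1,\infty)$ for every extension $\widetilde H$, whereupon $\sigma_\mathrm{disc}(\widetilde H)\subset(-1,1)$ is immediate. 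Finally $0\notin\sigma(H_D)$: lying in the gap, $0$ could at most be an eigenvalue, but the zero-energy system $h_\kappa\binom{f}{g}=0$ with the distinguished behaviour at $0$ and the forced decaying behaviour at $\infty$ admits only the trivial solution, as one checks either by solving it explicitly through confluent hypergeometric (Whittaker) functions or, more structurally, via a Hardy--Dirac inequality valid for $|\nu|<1$ that bounds $H_D^2$ away from zero.

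I expect the genuine difficulty to be concentrated in part (ii): showing that the two form-domain conditions single out \emph{one and the same} extension and that this extension is \emph{unique} requires controlling the full local asymptotics of the domain functions, not merely the leading Frobenius exponents; and the strict invertibility $0\notin\sigma(H_D)$ is not reachable by the soft reasoning that already gives $\sigma_\mathrm{disc}\subset(-1,1)$, so one really needs the explicit zero-energy solutions or the Hardy-type inequality. By contrast, the limit-point/limit-circle bookkeeping and the essential-spectrum patching — the latter needing only a uniform-in-$\kappa$ form of standard one-dimensional results — are routine, though to be carried out with care.
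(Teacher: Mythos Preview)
The paper does not actually prove Theorem~\ref{thm:recap}: it is presented as a recap of well-known facts, supported only by the references listed in the surrounding text and by the survey \cite{Gallone-AQM2017}. So there is no ``paper's own proof'' to compare against. That said, your outline is essentially the standard route through the literature and is consistent with the tools the paper itself deploys later (the partial-wave decomposition \eqref{eq:Dirac_operator_decomposition}--\eqref{eq:def_operator_h_mj_kj} and the limit-point/limit-circle statement of Proposition~\ref{prop:deficiency_indices}).

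A few places in your sketch would need tightening if this were to become a self-contained proof. In part~(i) you only argue essential self-adjointness, not the domain identification $\mathcal{D}(\overline{H})=H^1(\mathbb{R}^3,\mathbb{C}^4)$; the latter is not a byproduct of the limit-point analysis and, beyond the Kato--Rellich range $|\nu|<\tfrac12$, requires a separate argument. Your claim that the $\mathcal{D}(|x|^{-1/2})$ and $\mathcal{D}(|H_0|^{1/2})$ conditions ``coincide'' is morally right at the level of leading asymptotics but, as you yourself note at the end, turning this into a genuine uniqueness statement for $H_D$ needs control of the full domain functions, not just the Frobenius exponents---this is exactly what the paper does carefully in Section~\ref{sec:distinguished} and Section~\ref{sec:closure} on a single radial block. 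Finally, the assertion that every self-adjoint extension in the super-critical regime has infinitely many eigenfunctions outside $\mathcal{D}(|x|^{-1/2})$ is not quite reducible to ``non-zero component in the oscillatory channels'': one also needs that infinitely many eigenvalues exist in the gap and that the corresponding eigenfunctions have such a component, which you assert but do not derive.
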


It is worth remarking that for Coulomb-like matrix-valued interactions $V(x)$ that are \emph{not} of the form $\nu|x|^{-1}\mathbbm{1}$ but still satisfy  $|V(x)|\leqslant\;\nu|x|^{-1}$, the sub-critical regime described in Theorem \ref{thm:recap}(i) only ranges up to $|\nu|<\frac{1}{2}$, and counterexamples are well known of operators $H_0+V$ with $|V(x)|\leqslant\;(\frac{1}{2}+\varepsilon)|x|^{-1}$ for arbitrary $\varepsilon>0$ and failing to be essentially self-adjoint on $C^\infty_0(\mathbb{R}^3\!\setminus\!\{0\},\mathbb{C}^4)$ \cite{Arai-Yamada-RIMS-1983}.


In this work we are primarily focused on the critical regime, $|\nu|\in(\frac{\sqrt{3}}{2},1)$. This is a regime of ultra-heavy nuclei, in fact nuclei of elements that one expects to discover in the next future. It is the first regime where the Kato-Rellich-like perturbative arguments, applicable for small $\nu$'s, cease to work. It is also regarded as a physically meaningful regime, because as long as $|\nu|<1$ Sommerfeld's fine-structure formula still provides, formally, bound states for real energy levels, which only become complex when $|\nu|>1$, thus predicting an instability of the atom (the `$Z=137$ catastrophe').

In fact, the critical regime for the Dirac-Coulomb operator is already intensively studied, with a special focus on the `distinguished' self-adjoint extension $H_D$ \cite{Evans-1970,Weidmann-1971,Schmincke-1972-distinguished,Wust-1975,Nenciu-1976,Wust-1977,Klaus-Wust-1978,Landgren-Rejto-JMP1979,Arai-Yamada-RIMS-1983,Kato-1983,Xia-1999,Esteban-Loss-JMP2007,Voronov-Gitman-Tyutin-TMP2007,Arrizabalaga-JMP2011,Arrizabalaga-Duoandikoetxea-Vega_2012_JMP2013,Hogreve-2013_JPhysA,Esteban-Lewin-Sere-2017_DC-minmax-levels}. Conceptually, and qualitatively, this has very much in common with the analogous, scaling-critical problem of the self-adjoint realisation of the (formal) non-relativistic and pseudo-relativistic Schr\"{o}dinger operators
\[
 -\Delta+\frac{\nu}{\,|x|^2}\qquad\mathrm{or}\qquad \sqrt{1-\Delta}+\frac{\nu}{\,|x|\,}\qquad\mathrm{on}\;L^2(\mathbb{R}^3)
\]
when $\nu$ is out of the perturbative regime, 
an issue that is both standard textbook material \cite[Appendix to X.1]{rs2} and object of recent research \cite{LeYaouanc-Oliver-Raynal-JMP1997,B-Derezinski-G-AHP2011,Fall-Felli-JFA2014}.

Our perspective in the present work is that of a convenient \emph{classification} of all self-adjoint extensions of the minimally defined $H$, both in terms of explicit \emph{boundary conditions} for the functions in the domain of each extension, and in terms of an intrinsic, canonical structure of the domain of each extension. Moreover, unlike recent classifications \cite{Voronov-Gitman-Tyutin-TMP2007,Hogreve-2013_JPhysA} based on von Neumann's extension theory, we put the emphasis on the straightforward applicability of the so-called Kre{\u\i}n-Vi\v{s}ik-Birman extension theory \cite{GMO-KVB2017}, and in fact of its non-semi-bounded version, namely Grubb's universal classification theory \cite[Chapter 13]{Grubb-DistributionsAndOperators-2009}, which as a matter of fact turns out to be particularly versatile and informative in this context.

In order to give a first formulation of our main result, let us exploit, as customary, the canonical decomposition of $H$ into partial wave operators \cite[Section 4.6]{Thaller-Dirac-1992}, which is induced by its spherical symmetry. By expressing $x\equiv(x_1,x_2,x_3)\in\mathbb{R}^3$ in polar coordinates $x=(r,\Omega)\in\mathbb{R}^+\!\times\mathbb{S}^2$, $r:=|x|$, the map $\psi(x)\mapsto r\psi(x_1(r,\Omega),x_2(r,\Omega),x_3(r,\Omega))$ induces a unitary isomorphism
\[
 L^2(\mathbb{R}^3,\mathbb{C}^4,\ud x)\;\xrightarrow[]{\cong}\;L^2(\mathbb{R}^+,\ud r)\otimes L^2(\mathbb{S}^2,\mathbb{C}^4,\ud\Omega)\,.
\]
In terms of the observables
\[
\begin{split}
 \bm{L}&=\bm{x}\times(-\ii\bm{\nabla})\,,\qquad\qquad\quad\;\bm{S}=-\frac{1}{4}\bm{\alpha}\times\bm{\alpha}\,, \\
 \bm{J}&=\bm{L}+\bm{S}\equiv(J_1,J_2,J_3)\,,\quad K=\beta(2\bm{L}\cdot\bm{S}+\mathbbm{1})\,,
\end{split}
\]
one further decomposes
\begin{equation}
 L^2(\mathbb{S}^2,\mathbb{C}^4,\ud\Omega)\;\cong\;\bigoplus_{j\in\frac{1}{2}\mathbb{N}}\;\;\;\bigoplus_{m_j=-j}^j\;\bigoplus_{\kappa_j=\pm(j+\frac{1}{2})}\mathcal{K}_{m_j,\kappa_j}\,,
\end{equation}
where 
\begin{equation}
 \mathcal{K}_{m_j,\kappa_j}:=\;\mathrm{span}\{\Psi^+_{m_j,\kappa_j},\Psi^-_{m_j,\kappa_j}\}\;\cong\;\mathbb{C}^2
\end{equation}
and $\Psi^+_{m_j,\kappa_j}$ and $\Psi^-_{m_j,\kappa_j}$ are two orthonormal vectors in $\mathbb{C}^4$, and simultaneous eigenvectors of the observables $J^2\!\upharpoonright\! L^2(\mathbb{S}^2,\mathbb{C}^4,\ud\Omega)$, $J_3\!\upharpoonright\! L^2(\mathbb{S}^2,\mathbb{C}^4,\ud\Omega)$, and $K\!\upharpoonright\! L^2(\mathbb{S}^2,\mathbb{C}^4,\ud\Omega)$  with eigenvalue, respectively, $j(j+1)$, $m_j$, and $\kappa_j$. It then turns out that each subspace
\begin{equation}\label{eq:def_space_H_mj_kj}
 \cH_{m_j,\kappa_j}\;:=\;L^2(\mathbb{R}^+,\ud r)\otimes\mathcal{K}_{m_j,\kappa_j}\;\cong\;L^2(\mathbb{R}^+,\mathbb{C}^2,\ud r)
\end{equation}
is a reducing subspace for the Dirac-Coulomb Hamiltonian $H$, which, through the overall isomorphism
\begin{equation}
U\;:\;L^2(\mathbb{R}^3,\mathbb{C}^4,\ud x)\;\xrightarrow[]{\cong}\;\bigoplus_{j\in\frac{1}{2}\mathbb{N}}\;\;\;\bigoplus_{m_j=-j}^j\;\bigoplus_{\kappa_j=\pm(j+\frac{1}{2})}\cH_{m_j,\kappa_j}\,,
\end{equation}
is therefore unitarily equivalent to
\begin{equation}\label{eq:Dirac_operator_decomposition}
 UHU^*\;=\;\bigoplus_{j\in\frac{1}{2}\mathbb{N}}\;\;\;\bigoplus_{m_j=-j}^j\;\bigoplus_{\kappa_j=\pm(j+\frac{1}{2})}\;h_{m_j,\kappa_j}\,,
\end{equation}
where
\begin{equation}\label{eq:def_operator_h_mj_kj}
\begin{split}
  h_{m_j,\kappa_j}\;&:=\;\begin{pmatrix}
                   1+\frac{\nu}{r} & -\frac{\ud}{\ud r}+\frac{\kappa_j}{r} \\
                   \frac{\ud}{\ud r}+\frac{\kappa_j}{r} & -1+\frac{\nu}{r}
                  \end{pmatrix}, \\
 \qquad\mathcal{D}(h_{m_j,\kappa_j})\;&:=\;C^\infty_0(\mathbb{R}^+)\otimes \mathcal{K}_{m_j,\kappa_j}\;\cong\;C^\infty_0(\mathbb{R}^+,\mathbb{C}^2)\,.
\end{split}
\end{equation}
Thus, \eqref{eq:def_operator_h_mj_kj} defines a densely defined and symmetric operator on the Hilbert space \eqref{eq:def_space_H_mj_kj} and the overall problem of the self-adjoint realisation of $H$ is reduced to the same problem in each reducing subspace.

In particular, it is of physical relevance to consider each operator
\begin{equation}
 h_{m_j}\;:=\;h_{m_j,\kappa_j=j+\frac{1}{2}}\oplus h_{m_j,\kappa_j=-(j+\frac{1}{2})}
\end{equation}
acting block-diagonal-wise, with the two different spin-orbit components, on the Hilbert eigenspace $L^2(\mathbb{R}^+,\mathbb{C}^4,\ud r)$ of $(j,m_j)$-eigenvalue for $J^2$ and $J_3$.

Now, the following property is well known, as one can see by means of standard limit-point limit-circle arguments \cite[Chapter 6.B]{Weidmann-book1987}. Its proof is discussed, e.g., in \cite[Section 2]{Gallone-AQM2017}.

\begin{prop}\label{prop:deficiency_indices}
 The operator $h_{m_j,\kappa_j}$ is essentially self-adjoint on its domain with respect to the Hilbert space $\cH_{m_j,\kappa_j}$ if and only if
 \begin{equation}
  \nu^2\;\leqslant\;\kappa_j^2-\textstyle{\frac{1}{4}}\,,
 \end{equation}
 and it has deficiency indices $(1,1)$ otherwise. In particular, in the regime $|\nu|\in(\frac{\sqrt{3}}{2},1)$ only the operators of the decomposition \eqref{eq:Dirac_operator_decomposition} with $\kappa_j^2=1$, thus
 \begin{equation}\label{eq:4op}
  h_{\frac{1}{2},1}\,,\quad h_{-\frac{1}{2},1}\,,\quad h_{\frac{1}{2},-1}\,,\quad h_{-\frac{1}{2},-1}\,,
 \end{equation}
 have deficiency indices $(1,1)$, all others being essentially self-adjoint.
\end{prop}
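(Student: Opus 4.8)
The plan is to regard each $h_{m_j,\kappa_j}$ as a first-order $2\times 2$ (Dirac-type) differential system on the half-line $(0,+\infty)$ and to run Weyl's limit-point/limit-circle analysis at the two endpoints $r=0$ and $r=+\infty$, in the form appropriate to such systems \cite[Chapter 6.B]{Weidmann-book1987}. Recall that for a symmetric operator of this type the deficiency index of the minimal realisation equals $n_0+n_\infty-2$, where $n_a\in\{1,2\}$ is the number of linearly independent solutions of $(h_{m_j,\kappa_j}-z)u=0$ that lie in $L^2$ near the endpoint $a$, for one (equivalently, any) non-real $z$. Essential self-adjointness is thus equivalent to being in the limit-point case $n_a=1$ at both endpoints, and since $n_0,n_\infty$ cannot exceed $2$ the only alternative below will turn out to be deficiency indices $(1,1)$.

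The endpoint at infinity is disposed of quickly: since $\nu/r\to 0$ as $r\to+\infty$, there the equation $(h_{m_j,\kappa_j}-z)u=0$ is a decaying perturbation of the free radial Dirac equation, and a standard asymptotic (Levinson-type) analysis shows that its solutions behave like $e^{\pm\sqrt{1-z^2}\,r}$ up to slowly varying (power-law in $r$) corrections produced by the long-range Coulomb tail. Hence for $z\notin\mathbb{R}$ exactly one solution is square-integrable near $+\infty$, i.e. $n_\infty=1$ for every $\nu$ and every $\kappa_j$.

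The substance of the proof is the local analysis at $r=0$. Writing the eigenvalue equation as a first-order system and multiplying by $r$ puts it in the form $r\,u'=(B_0+r\,B_1(r))\,u$, exhibiting $r=0$ as a regular singular point with leading matrix $B_0=\left(\begin{smallmatrix}-\kappa_j & -\nu\\ \nu & \kappa_j\end{smallmatrix}\right)$, whose eigenvalues are $\pm\gamma$ with $\gamma:=\sqrt{\kappa_j^{2}-\nu^{2}}$. Frobenius theory for systems then produces a fundamental set of solutions with the behaviour $r^{\gamma}$ and $r^{-\gamma}$ near $0$ (with a possible logarithmic factor on the second one when $2\gamma$ is a positive integer; with $1$ and $\log r$ when $\gamma=0$, in which case $B_0$ is a non-trivial nilpotent; and with the purely oscillatory behaviour $r^{\pm\,\ii|\gamma|}$ when $\nu^{2}>\kappa_j^{2}$). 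The solution $\sim r^{\gamma}$ is always in $L^2$ near $0$, whereas the one $\sim r^{-\gamma}$ lies in $L^2(0,1;\ud r)$ if and only if $\mathrm{Re}\,\gamma<\tfrac12$, the logarithmic corrections being immaterial and the borderline value $\gamma=\tfrac12$ giving $r^{-1/2}\notin L^2$ in any case. Therefore $n_0=1$ exactly when $\gamma$ is real and $\gamma\geqslant\tfrac12$, i.e. when $\kappa_j^{2}-\nu^{2}\geqslant\tfrac14$, and $n_0=2$ otherwise.

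Combining the two endpoints, $h_{m_j,\kappa_j}$ is essentially self-adjoint precisely when $n_0=1$, i.e. when $\nu^{2}\leqslant\kappa_j^{2}-\tfrac14$, and has deficiency indices $(1,1)$ otherwise; this is the claimed dichotomy. For the final assertion it remains to note that the admissible values of $\kappa_j$ are the non-zero integers, so $\kappa_j^{2}-\tfrac14\geqslant\tfrac34$ with equality iff $\kappa_j^{2}=1$; when $\tfrac{\sqrt3}{2}<|\nu|<1$ one has $\tfrac34<\nu^{2}<1<\kappa_j^{2}-\tfrac14$ for every $|\kappa_j|\geqslant 2$, so only the four channels with $\kappa_j^{2}=1$ — namely $j=\tfrac12$, $\kappa_j=\pm1$, $m_j=\pm\tfrac12$, that is $h_{\frac12,1},h_{-\frac12,1},h_{\frac12,-1},h_{-\frac12,-1}$ — fail essential self-adjointness, each then carrying deficiency indices $(1,1)$. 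The one delicate point is the Frobenius bookkeeping at $0$ when $\pm\gamma$ differ by a positive integer (possible resonant logarithmic terms) and in the degenerate cases $\gamma=0$ and $\gamma$ purely imaginary; but none of these affects the $L^2$-count near $0$, which is dictated solely by the position of $\mathrm{Re}\,\gamma$ relative to $\tfrac12$, so the dichotomy is robust, and the rest is the routine limit-point/limit-circle machinery of \cite[Chapter 6.B]{Weidmann-book1987}.
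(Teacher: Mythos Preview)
Your proof is correct and follows exactly the approach the paper indicates, namely the standard limit-point/limit-circle analysis for first-order $2\times 2$ systems on the half-line as in \cite[Chapter~6.B]{Weidmann-book1987}; the paper does not spell out the argument either, merely citing that reference and \cite{Gallone-AQM2017}. Your Frobenius computation at $r=0$ with leading matrix $B_0$ and indicial roots $\pm\gamma=\pm\sqrt{\kappa_j^2-\nu^2}$, together with the observation that $r=+\infty$ is always limit-point, is precisely the standard route.
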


Therefore the operator $h_{\frac{1}{2},1}\oplus h_{\frac{1}{2},-1}\oplus h_{-\frac{1}{2},1}\oplus h_{-\frac{1}{2},-1}$, and hence $H$ itself, has deficiency indices $(4,4)$. 
This means that there is a 16-real-parameter family of self-adjoint extensions of $H$, hence of physically inequivalent realisations of the Dirac-Coulomb Hamiltonian. From the operator-theoretic point of view, the analysis of the self-adjoint extensions of $h_{\frac{1}{2},1}$ is the very same as for the other three operators (and in fact $h_{\frac{1}{2},1}$ and $h_{-\frac{1}{2},1}$ have the same formal action on $L^2(\mathbb{R}^+,\mathbb{C}^2)$, and so have $h_{\frac{1}{2},-1}$ and $h_{-\frac{1}{2},-1}$), and hence we will discuss only the first case.

There is room for extensions only on the sector $j=\frac{1}{2}$ of lowest total angular momentum $J^2$ and, as we shall discuss in Section \ref{sec:mainresults}, each extension corresponds to a particular prescription on the wave functions of the domain in the vicinity of the centre $x=0$ of the Coulomb interaction. For higher $j$'s the large angular momentum makes the Coulomb singularity lesser and lesser relevant, and on such sectors $H$ is already essentially self-adjoint.

Physically, the relevant class of extensions is rather the \emph{one}-parameter sub-family consisting of the same extension for each elementary operators \eqref{eq:4op}, in a sense that will be evident in the next Section, that is, extensions where the same boundary conditions of self-adjointness occurs on each block of $H$ -- it would be non-physical to have a different behaviour of the physical Hamiltonian on different sectors $\mathcal{H}_{m_j,\kappa_j}$ of its symmetry.

We can now conclude this Introduction by anticipating an informal version of the main results that we will present rigorously in Section \ref{sec:mainresults}. First and foremost, we do \emph{not} apply the self-adjoint extension theory of von Neumann, unlike what is done ubiquitously in the previous literature, and we exploit instead (and to our knowledge for the first time) the Kre{\u\i}n-Vi\v{s}ik-Birman / Grubb extension scheme. To our taste such a scheme produces in this context the most informative version of the classification of the self-adjoint extensions of $H$ in a considerably less laborious way.

Informally, our results can be summarised as follows.

\begin{thm}[Classification of Dirac-Coulomb extensions -- informal version]~ \newline

\vspace{-0.7cm}
\noindent Let $|\nu|\in(\frac{\sqrt{3}}{2},1)$.
\begin{itemize}
 \item[(i)] On each of the four sectors $(j,m_j,\kappa_j)=(\frac{1}{2},\pm\frac{1}{2},\pm 1)$ of non-self-adjoint\-ness, the operator $H$ admits a one-parameter family $(S_\beta)_{\beta\in\mathbb{R}\cup\{\infty\}}$ of self-adjoint extensions -- which are then restrictions of $H^*$.
 \item[(ii)] Whereas the domain of $H^*$ in each sector consists of spinors $g$ with $H^1$-regularity on $[\varepsilon,+\infty)$ for all $\varepsilon>0$ and the short-distance asymptotics
\begin{equation}\label{eq:g_asympt}
\begin{split}
 g(r)\;&=\; g_0 r^{-\sqrt{1-\nu^2}} + g_1 r^{\sqrt{1-\nu^2}} + o(r^{1/2})\quad\textrm{as }\;r\downarrow 0 
\end{split}
\end{equation}
for some $g_0,g_1\in\mathbb{C}^2$ dependent on $\nu$ only, 
the domain $\mathcal{D}(S_\beta)$ of the extension $S_\beta$ consists of those such spinors for which a prescribed ratio holds between the corresponding components of $g_1$ and $g_0$, for concreteness 
\begin{equation}
\frac{g_1^+}{g_0^+}\;=\;c_\nu\,\beta+d_\nu
\end{equation}
for some explicitly known constants $c_\nu,d_\nu\in\mathbb{C}$.
 \item[(iii)] The extension $\beta=\infty$ is the restriction $S_D$, on the considered sector $(j,m_j,\kappa_j)$, of the distinguished extension $H_D$ of $H$ discussed in Theorem \ref{thm:recap}(ii): the functions in its domain have the asymptotics \eqref{eq:g_asympt} with $g_0\equiv 0$, i.e., without singular term.
 \item[(iv)] All those extensions $S_\beta$ with $\beta\neq 0$ are invertible with everywhere defined and bounded inverse, in which case the inverse $S_\beta^{-1}$ is an explicit rank-one perturbation of $S_D^{-1}$.
 \item[(v)] The gap in the spectrum of $S_\beta$ around $\lambda=0$ has a direct estimate in terms of $\beta$ and $\|S_D^{-1}\|$ and must be at least the interval
\begin{equation}
 (-E_0(\beta),E_0(\beta))\,,\qquad E_0(\beta)\,:=\,\frac{|\beta|}{|\beta|\|S_D^{-1}\|+1}\,.
\end{equation}
\end{itemize}
\end{thm}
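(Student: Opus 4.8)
The plan is to reduce everything, as explained after Proposition \ref{prop:deficiency_indices}, to the single operator $S:=\overline{h_{1/2,1}}$ acting on $L^2(\mathbb{R}^+,\mathbb{C}^2,\ud r)$, with deficiency indices $(1,1)$, and then to run the non-semibounded Kre{\u\i}n--Vi\v{s}ik--Birman scheme in Grubb's formulation with the distinguished extension as reference operator, instead of von Neumann's decomposition. The first block of work is purely ODE-theoretic: one analyses the behaviour near the regular singular point $r=0$ of those $g$ with $g,\,h_{1/2,1}g\in L^2(\mathbb{R}^+,\mathbb{C}^2,\ud r)$, i.e.\ of $g\in\mathcal{D}(S^*)$. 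The indicial matrix at $r=0$ has eigenvalues $\pm\sqrt{1-\nu^2}$, so a Frobenius/perturbative analysis yields the two-term expansion \eqref{eq:g_asympt}; here each of $g_0,g_1\in\mathbb{C}^2$ is in fact proportional to a fixed ($\nu$-dependent) eigenvector of the indicial matrix, so the genuine boundary data are the two scalars $g_0^+,g_1^+$, the next correction terms carry exponents $1\mp\sqrt{1-\nu^2}>\tfrac12$ (no logarithms occur since $2\sqrt{1-\nu^2}\in(0,1)$ is never an integer), whence the remainder is $o(r^{1/2})$. Since both behaviours $r^{\pm\sqrt{1-\nu^2}}$ are square-integrable near $0$ while the system is limit-point at $+\infty$ for $\lambda$ in the gap, the maps $g\mapsto g_0^+$ and $g\mapsto g_1^+$ are functionals that are \emph{bounded in the graph norm} on $\mathcal{D}(S^*)$, are linearly independent there, and whose common kernel is exactly $\mathcal{D}(S)$ (elements of $C^\infty_0(\mathbb{R}^+)$ vanish near $0$, then close in graph norm); this is the content of (ii), and it provides a boundary triplet $(\mathbb{C},\Gamma_0,\Gamma_1)$ with $\Gamma_0 g\propto g_0^+$, $\Gamma_1 g\propto g_1^+$.

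Next I would fix the reference extension. Let $S_D$ be the restriction of $S^*$ to $\{g:\ g_0^+=0\}$. One checks that it is self-adjoint and that it is the Dirac analogue of the Friedrichs extension: its domain lies in the kinetic and the Coulomb form domains, because $r^{-\sqrt{1-\nu^2}}$ is not square-integrable near $0$ against the weight $r^{-1}\,\ud r$, so the singular channel is excluded. Transported through the reducing decomposition this shows that $S_D$ is the restriction, on the sector at hand, of the distinguished extension $H_D$ of Theorem \ref{thm:recap}(ii), and in particular $0\in\rho(S_D)$. I would then write $S_D^{-1}$ explicitly through the Green's kernel built from the solution of $h_{1/2,1}u=0$ regular at $0$ (the $r^{+\sqrt{1-\nu^2}}$ one) and the unique solution that is $L^2$ at $+\infty$; the latter, suitably normalised, spans $\ker S^*$ and will be the vector $u$ below.

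With these data the Kre{\u\i}n--Vi\v{s}ik--Birman / Grubb classification gives a bijection between self-adjoint extensions of $S$ and self-adjoint relations on the one-dimensional defect space, hence between the extensions and a single real parameter: self-adjointness forces exactly one linear relation $g_1^+=(c_\nu\beta+d_\nu)\,g_0^+$ between the two boundary functionals, with $\beta\in\mathbb{R}\cup\{\infty\}$ and $c_\nu,d_\nu$ read off from the boundary form of $h_{1/2,1}$; this is (i) together with the explicit shape in (ii)--(iii), the value $\beta=\infty$ returning $S_D$, whose domain has $g_0\equiv0$. The abstract Kre{\u\i}n resolvent formula then gives $S_\beta^{-1}=S_D^{-1}+\beta^{-1}\,\langle u,\,\cdot\,\rangle\,u$ for a suitably normalised $u$ spanning $\ker S^*$, valid whenever $\beta\neq0$; thus $S_\beta^{-1}$ is an everywhere-defined bounded, injective, self-adjoint rank-one perturbation of $S_D^{-1}$, so $S_\beta$ is invertible with bounded inverse, which is (iv). Finally, for a self-adjoint invertible operator $\mathrm{dist}(0,\sigma(S_\beta))=\|S_\beta^{-1}\|^{-1}$, and the rank-one formula gives $\|S_\beta^{-1}\|\leqslant\|S_D^{-1}\|+|\beta|^{-1}$; combined with $\sigma_\mathrm{ess}(S_\beta)=(-\infty,-1]\cup[1,+\infty)$ from Theorem \ref{thm:recap}, this yields the gap $(-E_0(\beta),E_0(\beta))$ with $E_0(\beta)=|\beta|/(|\beta|\,\|S_D^{-1}\|+1)$, which is (v).

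The main obstacle is the first block: extracting \eqref{eq:g_asympt} with a remainder controlled uniformly enough to make $g_0^+,g_1^+$ bounded functionals on the whole of $\mathcal{D}(S^*)$ (not just on the deficiency subspaces), together with the explicit computation of the boundary form of $h_{1/2,1}$. This is where the special-function structure of the solutions (Whittaker / confluent hypergeometric functions) enters, and where the constants $c_\nu,d_\nu$ and the normalisation of $u$ — needed for the clean statements (iv) and (v) — must be pinned down. Once the boundary triplet is in place and $S_D$ is identified as the distinguished extension with $0\in\rho(S_D)$, parts (i), (iii), (iv), (v) follow from the general Kre{\u\i}n--Vi\v{s}ik--Birman / Grubb theory essentially by inspection.
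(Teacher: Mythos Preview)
Your plan is correct and matches the paper's strategy in all essentials: reduce to one block, use the Kre{\u\i}n--Vi\v{s}ik--Birman/Grubb scheme with the distinguished extension $S_D$ as reference, build $S_D^{-1}$ from the Green kernel assembled out of the $r^{+B}$ solution and the unique $L^2$-at-infinity solution (which spans $\ker S^*$), and read off (iv)--(v) from the rank-one Kre{\u\i}n resolvent formula and the operator-norm bound $\|S_\beta^{-1}\|\leqslant\|S_D^{-1}\|+|\beta|^{-1}$.

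The one organisational difference worth noting is in how the asymptotic expansion \eqref{eq:g_asympt} is obtained. You propose to extract it directly for every $g\in\mathcal{D}(S^*)$ by Frobenius/indicial analysis of the inhomogeneous system (variation of parameters against the two homogeneous solutions), and then package $g\mapsto g_0^+$, $g\mapsto g_1^+$ as a boundary triplet. The paper instead first establishes the KVB decomposition $g=f+a\,S_D^{-1}\Phi+b\,\Phi$ with $f\in\mathcal{D}(\overline{S})$, proves separately that $f=o(r^{1/2})$, that $S_D^{-1}\Phi\sim p\,r^{B}$, and that $\Phi\sim \gamma\,r^{-B}+q\,r^{B}$, and only then reads off the asymptotics and the boundary condition by superposition. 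Your route is slightly more direct and closer in spirit to classical Weyl--Titchmarsh theory; the paper's route has the advantage that the $o(r^{1/2})$ control on $\mathcal{D}(\overline{S})$ and the identification of $S_D$ with the distinguished extension come out as by-products of the same estimates used to bound the Green kernel, so the constants $c_\nu,d_\nu$ are automatically tied to the KVB parameter $\beta$ without a separate computation of the boundary form. Either way the ``main obstacle'' you flag is exactly the one the paper spends most of its effort on.
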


Last, here is how the material is organised. As mentioned already, in Section \ref{sec:mainresults} we state rigorously our main results, whose proof, outlined in Section \ref{sec:mainresults} itself, is based on intermediate results that we prove in Sections \ref{sec:kernel}, \ref{sec:distinguished}, and \ref{sec:closure}. In Section \ref{sec:resolvent} we discuss further properties of the Dirac-Coulomb extensions involving the resolvent and the spectral gap at zero.

\textbf{Notation.} Essentially all the notation adopted here is standard, let us only emphasize the following. Concerning the various sums of spaces that will occur, we denote by $\dotplus$ the direct sum of vector spaces, by $\oplus$ the direct orthogonal sum of \emph{closed} Hilbert subspaces of the same underlying Hilbert space,
and by $\boxplus$ the direct sum of subspaces of $\cH$ that are orthogonal to each other but are not a priori all closed. 
\emph{Operator} domain and \emph{form} domain of any given densely defined and symmetric operator $S$ are denoted, respectively, by $\mathcal{D}(S)$ and $\mathcal{D}[S]$. 
As customary, $\mathbb{R}^+=(0,+\infty)$, and $\sigma(T)$ and $\rho(T)$ denote, respectively, the spectrum and the resolvent set of an operator $T$ on Hilbert space. The notation $A\lesssim B$ stands for the inequality $A\leqslant cB$ for some constant that is universal or is clear from the context not to depend on the other variables of the inequality itself. We refer to  elements $g\in\mathbb{C}^2$ or in $L^2(\mathbb{R}^+,\mathbb{C}^2)$ as spinors $g=\begin{pmatrix} g^+\! \\ g^-\! \end{pmatrix}$. $G^T$ denotes the transpose of a matrix $G$.

\section{Classification scheme and main results}\label{sec:mainresults}

The original Kre{\u\i}n-Vi\v{s}ik-Birman scheme \cite{Alonso-Simon-1980,GMO-KVB2017} for the determination and classification of the self-adjoint extensions of a given densely defined and symmetric operator on Hilbert space was developed for \emph{semi-bounded} operators: for this case one can non-restrictively assume that the bottom of the operator $S$ to extend is strictly positive and hence a canonical extension exists, the Friedrichs extension $S_F$, with the same bottom and hence with everywhere defined bounded inverse $S_F^{-1}$.

In fact, to a large extent, the role of $S_F$ in the theory can be played as well by any other `distinguished' self-adjoint extension $S_D$ of $S$ which is itself invertible with everywhere defined and bounded inverse $S_D^{-1}$, and this makes many results of the theory applicable also to a (densely defined and symmetric) non-semi-bounded $S$, provided that $S$ admits such an extension $S_D$. In this spirit, Grubb's `universal classification' scheme was later developed \cite{Grubb-1968} (a modern survey of which may be found in \cite[Chapter 13]{Grubb-DistributionsAndOperators-2009}), which only makes reference to the existence of an invertible extension and, in the case of symmetric operators, it reproduces many features of the Kre{\u\i}n-Vi\v{s}ik-Birman scheme.

For our next purposes, let us single out from such extension theories the following result, for a discussion of which we refer to \cite[Theorem 3.4]{GMO-KVB2017}.

\begin{thm}[Classification of self-adjoint extensions -- operator version]\label{thm:VB-representaton-theorem_Tversion} 
Let $S$ be a densely defined symmetric operator on a Hilbert space $\cH$ and assume that $S$ admits a self-adjoint extension $S_D$ which is invertible with everywhere defined inverse.
Then there is a one-to-one correspondence between the  family of all self-adjoint extensions of  $S$ on $\cH$ and the family of the self-adjoint operators on Hilbert subspaces of $\ker S^*$.
If $T$ is any such operator, in the correspondence $T\leftrightarrow S_T$ each self-adjoint extension $S_T$ of $S$ is given by
\begin{equation}\label{eq:ST}
\begin{split}
S_T\;&=\;S^*\upharpoonright\mathcal{D}(S_T) \\
\mathcal{D}(S_T)\;&=\;\left\{f+S_D^{-1}(Tv+w)+v\left|\!\!
\begin{array}{c}
f\in\mathcal{D}(\overline{S})\,,\;v\in\mathcal{D}(T) \\
w\in\ker S^*\cap\mathcal{D}(T)^\perp
\end{array}\!\!
\right.\right\}.
\end{split}
\end{equation}
\end{thm}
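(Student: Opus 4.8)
The plan is to establish the bijection between self-adjoint extensions of $S$ and self-adjoint operators acting on subspaces of $\ker S^*$ by reducing everything to the classical von Neumann--Krein--Vi\v{s}ik--Birman picture, but organised around the invertible reference extension $S_D$ rather than around $\pm\ii$. First I would record the basic decomposition: since $S_D$ is self-adjoint and boundedly invertible, $S_D^{-1}$ is a bounded self-adjoint operator on all of $\cH$, and $S\subset S_D\subset S^*$, so $\mathcal{D}(S^*)$ decomposes. The key algebraic identity to verify is
\begin{equation}\label{eq:firstvN}
 \mathcal{D}(S^*)\;=\;\mathcal{D}(\overline{S})\;\dotplus\;S_D^{-1}(\ker S^*)\;\dotplus\;\ker S^*\,,
\end{equation}
the analogue of the first von Neumann formula. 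One inclusion is immediate: all three summands lie in $\mathcal{D}(S^*)$ (for the middle one, $S^*S_D^{-1}u = S_D S_D^{-1}u = u$ for $u\in\ker S^*$). For the reverse inclusion, given $g\in\mathcal{D}(S^*)$, set $u:=P_{\ker S^*}\,S^* g$ where $P_{\ker S^*}$ is the orthogonal projection onto $\ker S^*$; then $g-S_D^{-1}S^*g + S_D^{-1}u$ has $S^*$-image in $(\ker S^*)^\perp=\overline{\ran S}$, and a short argument with the boundedly invertible $S_D$ shows this vector lies in $\mathcal{D}(\overline S)$ after subtracting a further element of $\ker S^*$. Directness of the sum follows because $\overline S$ is injective on $\mathcal{D}(\overline S)$ modulo... more precisely, if $f+S_D^{-1}u+w=0$ with the indicated memberships, applying $S^*$ gives $\overline S f + u = 0$, and since $\overline S f\in\overline{\ran S}\perp\ker S^*\ni u$ we get $u=0$ and $\overline S f=0$, hence $f=0$ (as $S_D\supset\overline S$ is injective), hence $w=0$.

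Next I would introduce the abstract boundary maps. Writing a generic $g\in\mathcal{D}(S^*)$ via \eqref{eq:firstvN} as $g=f+S_D^{-1}u+w$ with $f\in\mathcal{D}(\overline S)$, $u,w\in\ker S^*$, the pair $(u,w)$ is uniquely determined by $g$; call them $\Gamma_1 g := w$ and $\Gamma_0 g := u$. The crucial computation is the Green-type (Lagrange) identity: for $g_1,g_2\in\mathcal{D}(S^*)$,
\begin{equation}\label{eq:green}
 \langle S^* g_1, g_2\rangle_\cH - \langle g_1, S^* g_2\rangle_\cH \;=\; \langle \Gamma_1 g_1,\Gamma_0 g_2\rangle_{\ker S^*} - \langle \Gamma_0 g_1,\Gamma_1 g_2\rangle_{\ker S^*}\,,
\end{equation}
which one checks by expanding each $g_i$, using $S^* S_D^{-1}=\mathbbm{1}$ on $\ker S^*$, the self-adjointness of $S_D$ (so $\langle S_D^{-1}u_1,u_2\rangle=\langle u_1,S_D^{-1}u_2\rangle$, and these symmetric terms cancel), the symmetry of $\overline S$, and orthogonality of $\overline{\ran S}$ to $\ker S^*$. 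Granting \eqref{eq:green}, the standard extension-theory dictionary applies verbatim: self-adjoint extensions $\widetilde S$ of $S$ with $\overline S\subset\widetilde S\subset S^*$ correspond bijectively to self-adjoint relations $T$ on $\ker S^*$ via $\mathcal{D}(\widetilde S)=\{g : (\Gamma_0 g,\Gamma_1 g)\in T\}$, and here every such relation is the graph of a self-adjoint operator on a (closed) subspace $\overline{\mathcal{D}(T)}\subseteq\ker S^*$ together with the trivial multi-valued part $\{0\}\oplus(\mathcal{D}(T)^\perp)$. Translating $(\Gamma_0 g,\Gamma_1 g)\in T$ back through \eqref{eq:firstvN} — i.e. $g=f+S_D^{-1}u+w$ with $w=\Gamma_1 g$, $u=\Gamma_0 g$, and the relation says $u=v\in\mathcal{D}(T)$, $w=Tv+w'$ with $w'\perp\mathcal{D}(T)$ — yields exactly the stated form $\mathcal{D}(S_T)=\{f+S_D^{-1}(Tv+w)+v\}$ with $f\in\mathcal{D}(\overline S)$, $v\in\mathcal{D}(T)$, $w\in\ker S^*\cap\mathcal{D}(T)^\perp$, after renaming.

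The main obstacle — and the step deserving the most care — is the bijectivity claim in full: that the correspondence $T\leftrightarrow S_T$ is one-to-one and exhausts all self-adjoint extensions, including extensions $\widetilde S$ that are \emph{not} a priori restrictions of $S^*$. Here one invokes that any self-adjoint extension of $S$ is automatically a restriction of $S^*$ (since $S\subset\widetilde S=\widetilde S^*\subset S^*$), so the reduction to the boundary-triple picture on $\ker S^*$ is complete; injectivity then follows from uniqueness of $(\Gamma_0 g,\Gamma_1 g)$ in \eqref{eq:firstvN}, and surjectivity from the fact that every self-adjoint operator $T$ on every closed subspace of $\ker S^*$ produces, via \eqref{eq:ST}, an operator that is symmetric (by \eqref{eq:green}) and maximal symmetric, hence self-adjoint. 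A secondary technical point is checking that $\ker S^*$ together with $\|\cdot\|_\cH$ is the correct Hilbert space structure on which the $T$'s act (as opposed to a graph-norm space); this is where invertibility of $S_D$ is used essentially — it makes $S_D^{-1}(\ker S^*)$ a genuine complement with bounded associated projections, so the decomposition \eqref{eq:firstvN} is topological and the maps $\Gamma_0,\Gamma_1$ are well-behaved. Throughout, I would simply cite \cite[Theorem 3.4]{GMO-KVB2017} and \cite[Chapter 13]{Grubb-DistributionsAndOperators-2009} for the parts that are verbatim the known Krein--Vi\v{s}ik--Birman / Grubb statement, and only spell out \eqref{eq:firstvN}--\eqref{eq:green} in the generality needed here.
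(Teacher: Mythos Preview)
The paper does not prove this theorem; it is quoted from the literature with a reference to \cite[Theorem~3.4]{GMO-KVB2017} (and implicitly to Grubb's framework \cite[Chapter~13]{Grubb-DistributionsAndOperators-2009}). So there is no in-paper proof to compare against --- your sketch already goes beyond what the paper supplies, and you correctly note that one would ultimately cite those same sources for the general statement.

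That said, your outline contains a bookkeeping slip worth fixing. With $g=f+S_D^{-1}u+w$ and your labelling $\Gamma_0 g=u$, $\Gamma_1 g=w$, the Lagrange computation actually yields
\[
 \langle S^*g_1,g_2\rangle-\langle g_1,S^*g_2\rangle\;=\;\langle u_1,w_2\rangle-\langle w_1,u_2\rangle\;=\;\langle\Gamma_0 g_1,\Gamma_1 g_2\rangle-\langle\Gamma_1 g_1,\Gamma_0 g_2\rangle\,,
\]
the opposite sign to what you stated. More to the point, with your labelling the condition $(\Gamma_0 g,\Gamma_1 g)\in T$ reads $u\in\mathcal{D}(T)$ and $w-Tu\perp\mathcal{D}(T)$, and plugging back gives $g=f+S_D^{-1}v+(Tv+w')$, which is \emph{not} the formula \eqref{eq:ST}: the roles of the $\ker S^*$ summand and the argument of $S_D^{-1}$ are swapped, and no mere renaming fixes that. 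The correction is to set $\Gamma_0 g:=w$ (the $\ker S^*$ component) and $\Gamma_1 g:=u$ (the argument of $S_D^{-1}$); then the Green identity has the standard sign, the self-adjointness condition becomes $w\in\mathcal{D}(T)$, $u=Tw+w'$ with $w'\perp\mathcal{D}(T)$, and $g=f+S_D^{-1}(Tw+w')+w$ matches \eqref{eq:ST} after renaming $w\to v$, $w'\to w$. With this swap your approach is the standard Kre{\u\i}n--Vi\v{s}ik--Birman/boundary-triple argument and is sound.
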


When applying Theorem \ref{thm:VB-representaton-theorem_Tversion}  to the extension problem for the operator $h_{m_j,\kappa_j}$ defined in  \eqref{eq:def_operator_h_mj_kj} on the Hilbert space $\cH_{m_j,\kappa_j}$ defined in \eqref{eq:def_space_H_mj_kj}, it is natural that the reference extension is the distinguished extension of Theorem \ref{thm:recap}(ii).

%
%

Acting on the Hilbert space $L^2(\mathbb{R}^+,\mathbb{C}^2,\ud r)$ with scalar product
\begin{equation}\label{eq:scalar_product}
 \begin{split}
  \langle \psi,\phi\rangle_{L^2(\mathbb{R}^+,\mathbb{C}^2)}\,&=\,\int_0^{+\infty}\langle\psi(r),\phi(r)\rangle_{\mathbb{C}^2}\,\ud r \,=\,\sum_{\alpha=\pm}\int_0^{+\infty}\overline{\psi^\alpha(r)}\,\phi^\alpha(r)\,\ud r \\
  &\quad\psi\equiv\begin{pmatrix}
                   \psi^+\! \\ \psi^-\!
                  \end{pmatrix},\,
        \phi\equiv\begin{pmatrix}
                   \phi^+\! \\ \phi^-\!
                  \end{pmatrix}\,\in\,L^2(\mathbb{R}^+,\mathbb{C}^2)\,,
 \end{split} 
\end{equation}
we consider the operator
\begin{equation}\label{eq:def_operator_S}
\begin{split}
  S\;&:=\;\begin{pmatrix}
                   1+\frac{\nu}{r} & -\frac{\ud}{\ud r}+\frac{1}{r} \\
                   \frac{\ud}{\ud r}+\frac{1}{r} & -1+\frac{\nu}{r}
                  \end{pmatrix}, \\
 \qquad\mathcal{D}(S)\;&:=\;C^\infty_0(\mathbb{R}^+,\mathbb{C}^2)\,.
\end{split}
\end{equation}
$S$ is non-semi-bounded, densely defined, and symmetric, and following from Proposition \ref{prop:deficiency_indices} it has deficiency indices $(1,1)$. For convenience, let us also denote by $\widetilde{S}$ the differential operator defined by
\begin{equation}\label{eq:tildeS}
 \widetilde{S}
 \begin{pmatrix}
  f^+\\ f^-
 \end{pmatrix}\;:=\;\begin{pmatrix}
                   1+\frac{\nu}{r} & -\frac{\ud}{\ud r}+\frac{1}{r} \\
                   \frac{\ud}{\ud r}+\frac{1}{r} & -1+\frac{\nu}{r}
                  \end{pmatrix}\begin{pmatrix}
  f^+ \\ f^-
 \end{pmatrix}.
\end{equation}
Since $\widetilde{S}$ has real smooth coefficients, and is formally self-adjoint, it is a standard fact \cite[Section 4.1]{Grubb-DistributionsAndOperators-2009} that the operator closure $\overline{S}$ and the adjoint $S^*$ of $S$ are nothing but, respectively, the \emph{minimal} and the \emph{maximal realisation} of $\widetilde{S}$, that is, they both act as $\widetilde{S}$ respectively on
\begin{equation}\label{DSclosureDS*}
 \begin{split}
  \mathcal{D}(\widetilde{S})\;&=\;\overline{C^\infty_0(\mathbb{R}^+,\mathbb{C}^2)}^{\|\cdot\|_S} \\
  \mathcal{D}(S^*)\;&=\;\{\psi\in L^2(\mathbb{R}^+,\mathbb{C}^2)\,|\,\widetilde{S}\psi \in L^2(\mathbb{R}^+,\mathbb{C}^2)\}\,,
 \end{split}
\end{equation}
where $\|\cdot\|_S$ is the graph norm associated with $S$. One has $\overline{S}\subset S^*$, and by \emph{self-adjoint realisation} of $S$ we shall mean any operator $R=R^*$ on $L^2(\mathbb{R}^+,\mathbb{C}^2)$ such that $\overline{S}\subset R\subset S^*$.

In order to identify the self-adjoint realisations of $S$ using the Kre{\u\i}n-Vi\v{s}ik-Birman scheme of Theorem \ref{thm:VB-representaton-theorem_Tversion}, we shall collect the intermediate results of Propositions \ref{prop:kerS*}, \ref{prop:SD}, and \ref{prop:Sclosure} below, whose proof is deferred to the following Sections.

For convenience, let us introduce the parameter
\begin{equation}
 B\;:=\;\sqrt{1-\nu^2}\,.
\end{equation}
It will be important to remember throughout our analysis that $B\in(0,\textstyle{\frac{1}{2}})$.

First one needs a characterisation of $\ker S^*$.

\begin{prop}\label{prop:kerS*}
 For every $|\nu|\in(\frac{\sqrt{3}}{2},1)$ the operator $S^*$ has a one dimensional kernel, spanned by the function
 \begin{equation}\label{eq:vector_Phi}
  \Phi\:=\:\begin{pmatrix}
            \Phi^+\! \\ \Phi^-
           \!\end{pmatrix}
 \end{equation}
 with
 \begin{equation}\label{eq:vector_Phi_components}
  \Phi^{\pm}(r)\;:=\;e^{-r}r^{-B}\big( \textstyle{\frac{\pm(1+\nu)+B}{1+\nu}}\,U_{-B,1-2B}(2r)-\textstyle{\frac{2rB}{1+\nu}}\,U_{1-B,2-2B}(2r)\big)\,,
 \end{equation}
where
$U_{a,b}(r)$ is the Tricomi function \cite[Sec.~13.1.3]{Abramowitz-Stegun-1964}. $\Phi$ is analytic on $(0,+\infty)$ with asymptotics
\begin{equation}\label{eq:Phi_asymptotics}
 \begin{split}
  \Phi(r)\;&=\;r^{-B}\,\textstyle{\frac{\Gamma(2B)}{\Gamma(B)}}
  \begin{pmatrix}
   \;\frac{1+\nu+B}{1+\nu} \\
   -\frac{1+\nu-B}{1+\nu}
  \end{pmatrix}+
  \begin{pmatrix}
   q^+\! \\ q^-\!
  \end{pmatrix}r^B+O(r^{1-B})\quad\textrm{as }\;r\downarrow 0 \\
   \Phi(r)\;&=\;2^B\begin{pmatrix}
               1 \\ -1
              \end{pmatrix}r^{-B}e^{-r}
              (1+O(r^{-1}))\quad\,\textrm{as }\;r\to +\infty\,,
 \end{split}
\end{equation}
where $q^\pm$ are both non-zero and explicitly given by \eqref{eq:def_qpm} below.
\end{prop}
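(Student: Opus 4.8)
The plan is to read the condition $\psi\in\ker S^*$ through the maximal‑operator description \eqref{DSclosureDS*}: it means that $\psi$ and $\widetilde S\psi$ both lie in $L^2(\mathbb R^+,\mathbb C^2)$ and $\widetilde S\psi=0$ pointwise on $\mathbb R^+$. Writing out \eqref{eq:tildeS}, the equation $\widetilde S\psi=0$ is equivalent to the first‑order linear system $\psi'=A(r)\psi$ with
\begin{equation*}
 A(r)\;=\;\begin{pmatrix} -\frac1r & 1-\frac{\nu}{r} \\ 1+\frac{\nu}{r} & \frac1r \end{pmatrix},
\end{equation*}
whose entries are real‑analytic on $\mathbb R^+$; hence its solution space is two‑dimensional and consists of functions real‑analytic on $(0,+\infty)$, and $\ker S^*$ is exactly the subspace of those solutions that are globally square‑integrable.

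First I would pin down $\dim\ker S^*$ by a Frobenius analysis at the two singular endpoints. Near $r=0$ the leading term of $A$ is $\frac1r\!\left(\begin{smallmatrix}-1&-\nu\\ \nu&1\end{smallmatrix}\right)$, whose eigenvalues are $\pm B$ with $B=\sqrt{1-\nu^2}$; since $B\in(0,\tfrac12)$, the two fundamental solutions behave like $r^{\pm B}$ at the origin and both are square‑integrable on $(0,1)$. Near $r=+\infty$ the leading term is $\left(\begin{smallmatrix}0&1\\1&0\end{smallmatrix}\right)$, with eigenvalues $\pm1$, so one fundamental solution decays like $e^{-r}$ and the other grows like $e^{r}$; the $L^2$ solutions near infinity therefore form a one‑dimensional space. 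Consequently $\dim\ker S^*\le 1$, and it remains to exhibit a nonzero element.

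To construct it explicitly I would decouple the system: setting $g^{-}:=\psi^{+}-\psi^{-}$ one checks that $g^{-}$ solves $r^2(g^{-})''+r(g^{-})'-(r^2-r+B^2)g^{-}=0$, and the substitution $g^{-}(r)=e^{-r}r^{-B}w(2r)$ turns this into Kummer's confluent hypergeometric equation $zw''+(1-2B-z)w'+Bw=0$, i.e. the one with parameters $(a,b)=(-B,1-2B)$. Its solution decaying at $z=+\infty$ — hence the branch producing the $L^2$ behaviour — is the Tricomi function $w=U(-B,1-2B,z)$, so $g^{-}(r)\propto e^{-r}r^{-B}U_{-B,1-2B}(2r)$, which I would normalise as $g^{-}=2\,e^{-r}r^{-B}U_{-B,1-2B}(2r)$. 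The remaining combination $g^{+}:=\psi^{+}+\psi^{-}$ is then recovered algebraically from the corresponding first‑order equation (here $1+\nu\ne0$ is used), and differentiating $U$ via $\frac{\ud}{\ud z}U(a,b,z)=-a\,U(a+1,b+1,z)$ introduces precisely the contiguous function $U_{1-B,2-2B}$; solving back for $\psi^{\pm}=\tfrac12(g^{+}\pm g^{-})$ yields \eqref{eq:vector_Phi_components}. (Equivalently, one may simply insert \eqref{eq:vector_Phi_components} into $\widetilde S\Phi=0$ and verify it directly via the standard derivative and contiguity relations for $U$.)

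It then remains to read off the asymptotics. For $r\to+\infty$ the relation $U(a,b,z)\sim z^{-a}$ gives the second line of \eqref{eq:Phi_asymptotics} at once, and in particular $\Phi\in L^2$ near infinity; combined with $|\Phi(r)|\lesssim r^{-B}$ near $0$ and $2B<1$ this gives $\Phi\in L^2(\mathbb R^+,\mathbb C^2)$, whence $\Phi\in\mathcal D(S^*)$, $S^*\Phi=0$, and by the dimension count $\ker S^*=\mathrm{span}\{\Phi\}$. For $r\downarrow0$ I would feed in the two‑term small‑$z$ expansions $U(a,b,z)=\frac{\Gamma(1-b)}{\Gamma(a-b+1)}+\frac{\Gamma(b-1)}{\Gamma(a)}z^{1-b}+\cdots$ of $U_{-B,1-2B}$ (leading power $z^{0}$) and $U_{1-B,2-2B}$ (dominant power $z^{2B-1}$), both branches being available since $1-2B\in(0,1)$ and $2-2B\in(1,2)$ are non‑integers in the critical regime, multiply by $e^{-r}r^{-B}=r^{-B}(1+O(r))$, and collect the powers $r^{-B}$, $r^{B}$, and the $O(r^{1-B})$ remainder; the $r^{-B}$ coefficient reproduces the vector in \eqref{eq:Phi_asymptotics}, while the $r^{B}$ coefficient comes out (up to rearrangement with Gamma‑function identities) as $q^{\pm}=2^{2B}\,\frac{\Gamma(-2B)}{\Gamma(-B)}\cdot\frac{\pm(1+\nu)-B}{1+\nu}$, which is nonzero because $B,2B\in(0,1)$ keep the Gamma factors finite and nonvanishing and $1+\nu\pm B=0$ would force $\nu=0$. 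The main obstacle is exactly this last step — the decoupling reduction and the precise bookkeeping of constants in the $r\downarrow0$ expansion, in particular tracking how the $z^{2B}$ tail of $U_{-B,1-2B}$ and the dominant $z^{2B-1}$ term of $U_{1-B,2-2B}$ combine (together with the first correction of $e^{-r}$) to produce $q^{\pm}$.
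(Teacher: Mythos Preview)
Your proposal is correct and follows essentially the same route as the paper: both reduce $\widetilde S\psi=0$ to Kummer's equation with parameters $(a,b)=(-B,1-2B)$ via the combination $\psi^+-\psi^-$ (the paper packages this as the matrix transformation $\mathbf{A}$ together with the rescaling $r\mapsto 2r$ and the factor $e^{r/2}$, then sets $\xi=r^B\varphi^-$; you do it in one substitution $g^-=e^{-r}r^{-B}w(2r)$), select the Tricomi branch for $L^2$-decay at infinity, and read off the stated asymptotics and the coefficients $q^\pm$. The only cosmetic difference is that you establish $\dim\ker S^*\le 1$ by a Frobenius argument at both endpoints, whereas the paper produces both fundamental solutions $u_0,u_\infty$ explicitly and checks that only $u_\infty$ is square-integrable; also note that the $e^{-r}$ correction contributes to the $O(r^{1-B})$ remainder rather than to $q^\pm$, since $1-B>B$.
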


Next, one needs to identify a reference extension $S_D$ of $S$ which be self-adjoint and with everywhere defined inverse, and to characterise the action of $S_D$ on $\ker S^*$.

\begin{prop}\label{prop:SD}~
\begin{itemize}
 \item[(i)]  There exists a self-adjoint realisation $S_D$ of $S$ with the property that
 \begin{equation}\label{eq:SD_uniqueness_properties}
  \mathcal{D}(S_D)\subset H^{1/2}(\mathbb{R}^+,\mathbb{C}^2)\quad\textrm{or}\quad \mathcal{D}(S_D)\subset\mathcal{D}[r^{-1}]\,,
 \end{equation}
 where the latter is the form domain of the multiplication operator by $r^{-1}$ on each component of $L^2(\mathbb{R}^+,\mathbb{C}^2)$ (the space of `finite potential energy'). $S_D$ is the only self-adjoint realisation of $S$ satisfying \eqref{eq:SD_uniqueness_properties}.
 \item[(ii)] $S_D$ is invertible on $L^2(\mathbb{R}^+,\mathbb{C}^2)$ with everywhere defined and bounded inverse. The \emph{explicit} integral kernel of $S_D$ is given by \eqref{eq:Green}.
 \item[(iii)] In terms of the spaces $\mathcal{D}(\overline{S})$ and $\ker S^*$ one has
 \begin{equation}\label{eq:decomp_DSD}
  \mathcal{D}(S_D)\;=\;\mathcal{D}(\overline{S})\dotplus S_D^{-1}\ker S^*\,.
 \end{equation}
 Moreover,
 \begin{equation}\label{eq:Krein_decomp_formula}
  \begin{split}
  \mathcal{D}(S^*)\;&=\; \mathcal{D}(S_D)\dotplus\ker S^*\,,\\
  &=\;\mathcal{D}(\overline{S})\dotplus S_D^{-1}\ker S^*\dotplus\ker S^*\,.
  \end{split}
 \end{equation}
 \item[(iv)] For the vector $S_D^{-1}\Phi$, where $\Phi\in\ker S^*$ is given by \eqref{eq:vector_Phi}-\eqref{eq:vector_Phi_components}, one has the following point-wise asymptotics:
 \begin{equation}\label{SDPhi_vanishing}
  S_D^{-1}\Phi(r)\;\sim\;\begin{pmatrix} p^+\! \\ p^-\!\end{pmatrix} r^B+o(r^{1/2})\qquad \textrm{as }\;r\downarrow 0\,,
 \end{equation}
\end{itemize}
where  $p^{\pm}$ are both non-zero and explicitly given in \eqref{ed:defppm} below.\footnote{In fact, with a slightly more elaborate argument we can better estimate the reminder in \eqref{SDPhi_vanishing} as a $O(r^{1-B})$ term; however, this is not needed in the analysis that follows.}
\end{prop}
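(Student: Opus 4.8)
The plan is to establish the four items in order, drawing the analytic input from Proposition~\ref{prop:kerS*} and invoking Theorem~\ref{thm:VB-representaton-theorem_Tversion} only at the structural step~(iii). For~(i) I would start from the Lagrange (boundary) form of $\widetilde{S}$: a single integration by parts in \eqref{eq:tildeS} gives, for all $u,v\in\mathcal{D}(S^*)$,
\[
 \langle S^*u,v\rangle-\langle u,S^*v\rangle\;=\;-\lim_{r\downarrow 0}\big(\overline{u^+(r)}\,v^-(r)-\overline{u^-(r)}\,v^+(r)\big)\,,
\]
the $r\to+\infty$ endpoint contributing nothing because $\infty$ is a limit-point endpoint for $\widetilde{S}$ (the homogeneous solutions there behave like $e^{\pm r}$ and only the decaying one lies in $\mathcal{D}(S^*)$). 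Coupling this with the short-distance structure of $\mathcal{D}(S^*)$ -- produced by the same Tricomi-function analysis underlying Proposition~\ref{prop:kerS*} and spelled out in Section~\ref{sec:closure}: every $u\in\mathcal{D}(S^*)$ has $u(r)=u_0\,r^{-B}+u_1\,r^{B}+o(r^{1/2})$ as $r\downarrow 0$ with $u_0$ a scalar multiple of the fixed vector $v_-:=\frac{\Gamma(2B)}{\Gamma(B)(1+\nu)}\big(1+\nu+B,\,-(1+\nu-B)\big)^T$ and $u_1$ along a fixed direction as well -- one sees that $u\mapsto u_0$ and $u\mapsto u_1$ form a boundary triple and the boundary form reduces, up to a nonzero constant, to $\overline{u_0}\,u_1-\overline{u_1}\,u_0$ (the would-be $r^{-2B}$ term drops because $v_-$ is isotropic for $(\cdot)^+(\cdot)^--(\cdot)^-(\cdot)^+$). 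The self-adjoint realisations of $S$ are thus parametrised by self-adjoint relations between $u_0$ and $u_1$, and I would single out $S_D$ as the one defined by $u_0\equiv 0$. It is the unique realisation with $\mathcal{D}\subset H^{1/2}(\mathbb{R}^+,\mathbb{C}^2)$, resp.\ $\mathcal{D}\subset\mathcal{D}[r^{-1}]$, because any other realisation contains in its domain a function with $u_0\neq 0$, hence an honest $r^{-B}$ singularity along $v_-$: since $B\in(0,\tfrac12)$ one has $\int_0 r^{-1-2B}\,\ud r=+\infty$ and $r^{-B}\notin H^{1/2}$ near the origin, whereas $r^{B}$ lies in both spaces, so \eqref{eq:SD_uniqueness_properties} indeed characterises $S_D$.

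For~(ii) I would write down the resolvent at zero explicitly. Besides $\Phi$ of Proposition~\ref{prop:kerS*}, which solves $\widetilde{S}\Phi=0$, decays like $e^{-r}$ at infinity and is $\sim r^{-B}$ at the origin, the same Whittaker/Tricomi computation furnishes the linearly independent homogeneous solution $\Phi_0$ that is \emph{regular} at $0$, with $\Phi_0(r)=(q^+,q^-)^T r^{B}+o(r^{1/2})$ as $r\downarrow 0$ -- the same direction as the regular part of $\Phi$ in \eqref{eq:Phi_asymptotics}, since $\Phi$ is itself a superposition of the two indicial solutions -- and grows like $e^{+r}$ at infinity. Setting $W:=\Phi_0^+\Phi^--\Phi_0^-\Phi^+$, a nonzero constant equal by the $r\downarrow 0$ asymptotics to $q^+v_-^--q^-v_-^+$, the kernel $\mathcal{G}(r,s)$ -- built in the standard way for a $2\times 2$ first-order system as $W^{-1}$ times the tensor of $\Phi_0$ at $\min\{r,s\}$ and $\Phi$ at $\max\{r,s\}$ -- is the one displayed in \eqref{eq:Green}. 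By construction the associated integral operator sends $L^2(\mathbb{R}^+,\mathbb{C}^2)$ into $\mathcal{D}(S_D)$ (regular at $0$, $L^2$ at infinity) and inverts $\widetilde{S}$ there, and it is bounded on $L^2$ by Schur's test: near the origin its factors are $O(r^{-B})$ and $O(r^{B})$, both integrable because $2B<1$, while near infinity only the decaying solution ever appears on each of $r<s$ and $r>s$, so $\sup_r\int_0^{\infty}|\mathcal{G}(r,s)|\,\ud s$ and $\sup_s\int_0^{\infty}|\mathcal{G}(r,s)|\,\ud r$ are both finite. Finally $\ker S_D=\ker S^*\cap\mathcal{D}(S_D)=\mathbb{C}\Phi\cap\{u_0=0\}=\{0\}$, because the $r^{-B}$-coefficient of $\Phi$ equals $\tfrac{\Gamma(2B)}{\Gamma(B)}\tfrac{1+\nu+B}{1+\nu}\neq 0$; hence $S_D$ is invertible, its inverse is everywhere defined, bounded, and coincides with this integral operator, and $0\in\rho(S_D)$.

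Item~(iii) is then purely structural. Since $0\in\rho(S_D)$, the restriction $\overline{S}=S_D\!\upharpoonright\!\mathcal{D}(\overline{S})$ is bounded below, so $\ran\overline{S}$ is closed and $L^2(\mathbb{R}^+,\mathbb{C}^2)=\ran\overline{S}\oplus\ker S^*$. Given $\psi\in\mathcal{D}(S^*)$, decompose $S^*\psi=\overline{S}\chi+\eta$ with $\chi\in\mathcal{D}(\overline{S})$, $\eta\in\ker S^*$; applying $S^*$ to $\psi-\chi-S_D^{-1}\eta$ yields $0$, hence $\psi-\chi-S_D^{-1}\eta\in\ker S^*\cap\mathcal{D}(S_D)=\{0\}$, which gives the second line of \eqref{eq:Krein_decomp_formula}; directness of the three summands follows from $\ran\overline{S}\cap\ker S^*=\{0\}$, and running the same argument for $\psi\in\mathcal{D}(S_D)$ produces \eqref{eq:decomp_DSD} and the first line of \eqref{eq:Krein_decomp_formula}. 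For~(iv), put $u:=S_D^{-1}\Phi\in\mathcal{D}(S_D)$, so $u(r)=p\,r^{B}+o(r^{1/2})$ with $p$ in the fixed regular direction $\mathbb{C}(q^+,q^-)^T$ and $q^\pm\neq 0$ by Proposition~\ref{prop:kerS*}. That $p\neq 0$ is immediate: if $p=0$ then $u\in\mathcal{D}(\overline{S})$, whence $\Phi=S_D u=\overline{S}u\in\ran\overline{S}$, contradicting $0\neq\Phi\in\ker S^*=(\ran\overline{S})^\perp$; so $p=c_\nu\,(q^+,q^-)^T$ with $c_\nu\neq 0$ and both $p^\pm\neq 0$. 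The explicit $c_\nu$, hence $p^\pm$ in \eqref{ed:defppm}, I would obtain by integrating the Lagrange identity $\tfrac{\ud}{\ud r}\big(u^+\Phi^--u^-\Phi^+\big)=(\Phi^+)^2+(\Phi^-)^2$ -- valid since $\widetilde{S}u=\Phi$, $\widetilde{S}\Phi=0$ and $\Phi$ is real-valued -- over $(0,+\infty)$: the endpoint at infinity vanishes, the one at the origin equals $p^+v_-^--p^-v_-^+$, so $p^-v_-^+-p^+v_-^-=\|\Phi\|_{L^2}^2$, which together with $p\parallel(q^+,q^-)^T$ gives $c_\nu=\|\Phi\|_{L^2}^2\big(q^-v_-^+-q^+v_-^-\big)^{-1}\neq 0$; inserting the known $v_-$, $q^\pm$ and $\|\Phi\|_{L^2}^2$ yields the closed form, and pushing the Green's-function expansion one order further improves the remainder in \eqref{SDPhi_vanishing} to $O(r^{1-B})$.

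The main obstacle is step~(i): rigorously establishing the two-term short-distance expansion of every element of $\mathcal{D}(S^*)$, with prescribed singular and regular directions and an $o(r^{1/2})$ remainder controlled by $\|\widetilde{S}\psi\|_{L^2}$, which is the analytic heart of the matter and is postponed to Section~\ref{sec:closure}. Granting that expansion, the rest of Proposition~\ref{prop:SD} comes down to the explicit -- but essentially routine -- construction and Schur-type estimate of the Green's function in~(ii) and to the bookkeeping of constants in~(iv).
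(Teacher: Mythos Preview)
Your route is valid but differs from the paper's in two places. For (i)--(ii) the paper proceeds in the opposite order: it first constructs the Green operator $R_G$ with kernel \eqref{eq:Green}, proves it is bounded, self-adjoint (Lemma~\ref{lem:RG_bounded}, via a four-region splitting with Hilbert--Schmidt bounds on three pieces and a Schur test on the large-$r$, large-$\rho$ piece) and that $\ran R_G\subset\mathcal{D}[r^{-1}]$ (Lemma~\ref{RG_maps_in_energyform}), and then \emph{invokes} Theorem~\ref{thm:recap}(ii) from the literature to identify $R_G^{-1}$ as the distinguished extension. You instead define $S_D$ by the boundary condition $u_0=0$ and deduce \eqref{eq:SD_uniqueness_properties} directly from $r^{-B}\notin\mathcal{D}[r^{-1}]$; this is more self-contained but front-loads the two-term expansion of $\mathcal{D}(S^*)$ (your declared ``main obstacle''), which the paper only develops in Section~\ref{sec:closure}. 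For (iv) the paper evaluates $S_D^{-1}\Phi$ through the representation \eqref{eq:fpart_alternative_repr}--\eqref{eq:fpart_alternative_repr-Thetas}, reading off $p^\pm=q^\pm\|v_\infty\|^2/W_0^\infty$; your Lagrange-identity trick of integrating $\tfrac{\ud}{\ud r}(u^+\Phi^--u^-\Phi^+)=|\Phi|^2$ over $(0,\infty)$ is a neat alternative that yields the same constant, since $q^-v_-^+-q^+v_-^-=-W_0^\infty$.

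Two slips to fix. In (iii), $\psi-\chi-S_D^{-1}\eta$ lies in $\ker S^*$ but has no reason to lie in $\mathcal{D}(S_D)$; set it equal to $\zeta\in\ker S^*$ and you get $\psi=\chi+S_D^{-1}\eta+\zeta$, which is the second line of \eqref{eq:Krein_decomp_formula} --- as written, your argument would collapse $\mathcal{D}(S^*)$ to $\mathcal{D}(S_D)$. In (ii), your Schur-test sentence at infinity is misleading: on each of $\{r<s\}$ and $\{r>s\}$ the \emph{growing} solution $\Phi_0$ sits at $\min(r,s)$, so the kernel is not decaying in each variable separately; the bound still works because the product is $\lesssim (rs)^B e^{-|r-s|}$, which is precisely the estimate carried out in Lemma~\ref{lem:RG_bounded}.
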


Last, an amount of information is needed on the domain of the operator closure $\overline{S}$ of $S$. Although $\mathcal{D}(\overline{S})$ is canonically constructed as the closure of $\mathcal{D}(S)$ in the operator norm, it does not correspond to a standard functional space. In fact in Section \ref{sec:closure} we will present a complete characterisation of $\mathcal{D}(\overline{S})$, from which we will be able to deduce the following properties, relevant for our main results.

\begin{prop}\label{prop:Sclosure}~
Let $f\in\mathcal{D}(\overline{S})$. Then $f\in H^1_{\mathrm{loc}}(\mathbb{R}^+)$ and 
\begin{equation}\label{eq:f_vanishing}
 f(r)\;=\;o(r^{1/2})\qquad \textrm{as } \quad r\downarrow 0\,.
\end{equation}
\end{prop}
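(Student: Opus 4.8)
The plan is to work at the level of the radial system \eqref{eq:tildeS}: first I would extract the short-distance behaviour of a \emph{generic} element of the maximal domain $\mathcal{D}(S^*)$, and then isolate the stronger decay that membership in the minimal domain $\mathcal{D}(\overline{S})$ forces, through the vanishing of the Lagrange boundary form of $\widetilde{S}$. The $H^1_{\mathrm{loc}}$ assertion is quick: since $f\in\mathcal{D}(\overline{S})\subset\mathcal{D}(S^*)$ we have $h:=\widetilde{S}f\in L^2(\mathbb{R}^+,\mathbb{C}^2)$, and on every compact $[\varepsilon,R]\subset\mathbb{R}^+$ the coefficients $\nu/r$ and $1/r$ are smooth and bounded, so solving the first-order system for $f'$ gives $f'=A(r)f+B(r)h$ with $A,B$ bounded on $[\varepsilon,R]$; together with $f,h\in L^2_{\mathrm{loc}}$ this yields $f'\in L^2[\varepsilon,R]$ and hence $f\in H^1_{\mathrm{loc}}(\mathbb{R}^+)$.

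The core step is to prove that every $\psi\in\mathcal{D}(S^*)$ admits the expansion $\psi(r)=\psi_0\,r^{-B}+\psi_1\,r^{B}+o(r^{1/2})$ as $r\downarrow 0$, with $\psi_0$ along a fixed real direction $\mathbf{v}_-\in\mathbb{R}^2$ and $\psi_1$ along a fixed real direction $\mathbf{v}_+\in\mathbb{R}^2$ — the eigenvectors, relative to the eigenvalues $\mp B$, of the constant $2\times 2$ matrix $M$ that governs the Coulomb-singular part of the system (rewriting $\widetilde{S}\psi=h$ near $r=0$ in the form $\psi'=(r^{-1}M+N)\psi+(\text{source built from }h)$, one checks that $M$ has spectrum $\{\pm B\}$ and that the remainder $N$ is in fact bounded, indeed constant). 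Diagonalising $M$, passing to the two decoupled scalar equations and using variation of parameters, the two homogeneous modes produce $\psi_0 r^{-B}$ and $\psi_1 r^{B}$ with $O(r^{1-B})=o(r^{1/2})$ corrections (since $B\in(0,\tfrac12)$, $1-B>\tfrac12$), while the Duhamel term is controlled by Cauchy--Schwarz using only $h\in L^2$ near $0$: e.g. for the regular mode $r^{B}\int_0^r s^{-B}|h(s)|\,ds\le(1-2B)^{-1/2}\,r^{1/2}\,\|h\|_{L^2(0,r)}=o(r^{1/2})$, and a symmetric estimate handles the singular mode and the bounded perturbation $N$. (That the homogeneous exponents are exactly $\pm B$ and $\mathbf{v}_-\propto\binom{1+\nu+B}{-(1+\nu-B)}$ is also visible from Proposition \ref{prop:kerS*}, where the homogeneous system is integrated in terms of Tricomi functions.) In particular $f\in\mathcal{D}(\overline{S})$ has $f(r)=f_0 r^{-B}+f_1 r^{B}+o(r^{1/2})$, so it remains to prove $f_0=f_1=0$.

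For this I would use the boundary-form characterisation of the minimal domain. An integration by parts over $[\varepsilon,R]$, in which the real symmetric potential terms $\nu/r$, $\pm 1$, $1/r$ cancel, gives $\langle\widetilde{S}u,v\rangle-\langle u,\widetilde{S}v\rangle=\lim_{r\to\infty}W(u,v)(r)-\lim_{r\downarrow 0}W(u,v)(r)$ for $u,v\in H^1_{\mathrm{loc}}$ with $u,v,\widetilde{S}u,\widetilde{S}v\in L^2$, where $W(u,v)=\overline{u^+}\,v^--\overline{u^-}\,v^+$ is the Wronskian-type form; since $\widetilde{S}$ is limit-point at $+\infty$ the limit there vanishes, and since $\overline{S}=(S^*)^*$ the left-hand side vanishes whenever $u=f\in\mathcal{D}(\overline{S})$ and $v=g\in\mathcal{D}(S^*)$, so $\lim_{r\downarrow 0}W(f,g)(r)=0$ for every $g\in\mathcal{D}(S^*)$. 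Inserting the two expansions into $W(f,g)$: the potentially divergent $r^{-2B}$ coefficient vanishes identically because $f_0$ and $g_0$ are both proportional to the single vector $\mathbf{v}_-$; every term of positive order tends to $0$, and the $r^{\mp B}$ parts paired with the $o(r^{1/2})$ remainders give $o(r^{1/2-B})\to0$. What survives is the constant term, equal to $\omega\,(\overline{\alpha_f}\,\delta_g-\overline{\beta_f}\,\gamma_g)$, where $f_0=\alpha_f\mathbf{v}_-$, $f_1=\beta_f\mathbf{v}_+$, $g_0=\gamma_g\mathbf{v}_-$, $g_1=\delta_g\mathbf{v}_+$ and $\omega=\det[\mathbf{v}_-\,|\,\mathbf{v}_+]\neq0$. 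Taking $g=\Phi$, for which both $\gamma_g,\delta_g\neq0$ by Proposition \ref{prop:kerS*}, and $g=S_D^{-1}\Phi$, for which $\gamma_g=0$ but $\delta_g\neq0$ by Proposition \ref{prop:SD}(iv), gives two linearly independent pairs $(\gamma_g,\delta_g)$ spanning $\mathbb{C}^2$, so the linear functional $(\gamma,\delta)\mapsto\overline{\alpha_f}\,\delta-\overline{\beta_f}\,\gamma$ vanishes identically; hence $\alpha_f=\beta_f=0$, i.e. $f_0=f_1=0$, which is precisely $f(r)=o(r^{1/2})$.

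The hard part is the second step: turning the variation-of-parameters heuristic into a rigorous $o(r^{1/2})$ estimate uniformly down to $r=0$ — controlling the bounded perturbation $N$ and the source term so they do not spoil the remainder, and correctly identifying the constant-coefficient model (equivalently the Whittaker normal form) that dictates the exponents $\pm B$ and the fixed directions $\mathbf{v}_\pm$. Everything else — local regularity, the Lagrange identity, the limit-point property at infinity, and the final linear-algebra bookkeeping against the explicit asymptotics of $\Phi$ and $S_D^{-1}\Phi$ — is routine once that expansion is in hand.
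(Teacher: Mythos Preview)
Your strategy is the paper's: first expand every element of $\mathcal{D}(S^*)$ as $\psi_0 r^{-B}+\psi_1 r^B+o(r^{1/2})$, then use the vanishing of the Lagrange boundary form on $\mathcal{D}(\overline{S})\times\mathcal{D}(S^*)$ to force both coefficients to zero. The paper packages step one as Lemma~\ref{lem:decomp_a0ainf_b0binf} and step two as Proposition~\ref{prop:characterisationDS}.

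Two differences in execution are worth noting. First, the paper does \emph{not} diagonalise the leading matrix $M$ and treat $N$ perturbatively; instead it writes the particular solution via variation of constants relative to the exact fundamental pair $(v_0,v_\infty)$ of the full system (whose asymptotics are already computed from the Kummer/Tricomi representation), so the $o(r^{1/2})$ bound becomes a one-line Cauchy--Schwarz estimate and the ``hard part'' you flag disappears---after your diagonalisation the term $P^{-1}NP$ still couples the two scalar modes, so a genuine fixed-point argument would be needed your way. Second, for the two test vectors the paper uses the localisations $v_0\chi,\,v_\infty\chi\in\mathcal{D}(S^*)$ (equivalently the functionals $L_{v_0},L_{v_\infty}$), which read off $a_0^{(f)}$ and $a_\infty^{(f)}$ directly; your choice of $\Phi$ and $S_D^{-1}\Phi$ also yields two independent conditions, but it routes through Proposition~\ref{prop:SD}(iv) and hence through the Green-kernel analysis of Section~\ref{sec:distinguished}, which the paper's proof of Proposition~\ref{prop:Sclosure} never needs. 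Both variants are correct; the paper's is a little more self-contained.
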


With Propositions \ref{prop:kerS*}, \ref{prop:SD}, and \ref{prop:Sclosure} at hand, we can now formulate a general classification as follows.

\medskip

\begin{thm}[Classification of the self-adjoint realisations for the Dirac-Coulomb Hamiltonian -- structural version]\label{thm:classification_structure}~

\noindent The self-adjoint extensions of the operator $S$ on $L^2(\mathbb{R}^+,\mathbb{C}^2,\ud r)$ defined in \eqref{eq:def_operator_S} constitute a one-parameter family $(S_\beta)_{\beta\in\mathbb{R}\cup{\{\infty\}}}$ of restrictions of the adjoint operator $S^*$ determined in \eqref{DSclosureDS*}, each of which is given by
\begin{equation}\label{eq:Sbeta}
 \begin{split}
  S_\beta\;&:=\;S^*\upharpoonright\mathcal{D}(S_\beta) \\
  \mathcal{D}(S_\beta)\;&:=\;\left\{g=f+c(\beta S_D^{-1}\Phi+\Phi)\left|\!
  \begin{array}{c}
   f\in\mathcal{D}(\overline{S}) \\
   c\in\mathbb{C}
  \end{array}\right.\!\!\right\}.
 \end{split}
\end{equation}
Here $S_D$ is the distinguished self-adjoint extension of $S$ identified in Proposition \ref{prop:SD} and $\Phi$ is the spanning element of $\ker S^*$ identified in Proposition \ref{prop:kerS*}. In this parametrisation the distinguished extension $S_D$ corresponds to $\beta=\infty$. For each $g\in\mathcal{D}(S_\beta)$ the function $f\in\mathcal{D}(\overline{S})$ and the constant $c\in\mathbb{C}$ are uniquely determined. 
\end{thm}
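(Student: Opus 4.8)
The plan is to deduce Theorem \ref{thm:classification_structure} from the abstract classification in Theorem \ref{thm:VB-representaton-theorem_Tversion} specialised to the concrete operator $S$ of \eqref{eq:def_operator_S}, using the three Propositions as the input data. Since $S$ has deficiency indices $(1,1)$ (Proposition \ref{prop:deficiency_indices}) and admits an invertible self-adjoint extension $S_D$ with everywhere defined bounded inverse (Proposition \ref{prop:SD}(i)--(ii)), Theorem \ref{thm:VB-representaton-theorem_Tversion} applies and tells us that the self-adjoint extensions of $S$ are in bijection with the self-adjoint operators $T$ acting on (linear) subspaces of $\ker S^*$. By Proposition \ref{prop:kerS*}, $\ker S^*$ is one-dimensional, $\ker S^* = \mathrm{span}\{\Phi\}$, so the only subspaces are $\{0\}$ and $\ker S^*$ itself. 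A self-adjoint operator on $\{0\}$ is the trivial one; a self-adjoint operator on the one-dimensional space $\mathrm{span}\{\Phi\}$ is multiplication by a real scalar, i.e. $T\Phi = \beta\Phi$ for some $\beta\in\mathbb{R}$. This is exactly how the parameter $\beta\in\mathbb{R}$ arises, and the degenerate case $T$ on $\{0\}$ (equivalently $\mathcal{D}(T)^\perp = \ker S^*$, $w$ ranging over all of $\ker S^*$) yields the distinguished extension, which we label $\beta = \infty$.

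Next I would substitute these choices of $T$ into the domain formula \eqref{eq:ST}. For $T$ = multiplication by $\beta$ on $\mathrm{span}\{\Phi\}$: here $\mathcal{D}(T) = \ker S^*$, so $\mathcal{D}(T)^\perp\cap\ker S^* = \{0\}$, forcing $w = 0$; and $v = c\,\Phi$ with $c\in\mathbb{C}$, so $Tv = \beta c\,\Phi$. Formula \eqref{eq:ST} then gives
\[
 \mathcal{D}(S_\beta)\;=\;\bigl\{\,f + S_D^{-1}(\beta c\,\Phi) + c\,\Phi \;\bigm|\; f\in\mathcal{D}(\overline{S}),\;c\in\mathbb{C}\,\bigr\}\;=\;\bigl\{\,f + c(\beta S_D^{-1}\Phi + \Phi)\;\bigm|\;f\in\mathcal{D}(\overline{S}),\;c\in\mathbb{C}\,\bigr\},
\]
which is precisely \eqref{eq:Sbeta}, and $S_\beta = S^*\!\upharpoonright\!\mathcal{D}(S_\beta)$ as claimed. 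For the degenerate case, $\mathcal{D}(T) = \{0\}$, hence $v = 0$, $Tv = 0$, and $w$ ranges over all of $\ker S^* = \mathrm{span}\{\Phi\}$; \eqref{eq:ST} gives $\mathcal{D}(S_\infty) = \{ f + S_D^{-1}(c\,\Phi) \mid f\in\mathcal{D}(\overline{S}),\,c\in\mathbb{C}\} = \mathcal{D}(\overline{S})\dotplus S_D^{-1}\ker S^*$, which by \eqref{eq:decomp_DSD} is exactly $\mathcal{D}(S_D)$. Thus $\beta = \infty$ corresponds to $S_D$, as stated. The one-to-one part of Theorem \ref{thm:VB-representaton-theorem_Tversion} guarantees that distinct values of $\beta\in\mathbb{R}\cup\{\infty\}$ give distinct extensions and that every self-adjoint extension arises this way, so $(S_\beta)_{\beta\in\mathbb{R}\cup\{\infty\}}$ is the complete family.

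It remains to prove the uniqueness of the decomposition $g = f + c(\beta S_D^{-1}\Phi + \Phi)$ for each $g\in\mathcal{D}(S_\beta)$. For $\beta$ finite this follows from the direct-sum structure \eqref{eq:Krein_decomp_formula}: the sum $\mathcal{D}(\overline{S})\dotplus S_D^{-1}\ker S^* \dotplus \ker S^*$ is direct, and the vector $\beta S_D^{-1}\Phi + \Phi$ has a nonzero $\Phi$-component (its component in the last summand $\ker S^*$ is $\Phi\neq 0$), so if $f_1 + c_1(\beta S_D^{-1}\Phi+\Phi) = f_2 + c_2(\beta S_D^{-1}\Phi+\Phi)$ then comparing $\ker S^*$-components gives $c_1 = c_2$, whence $f_1 = f_2$. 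Concretely one can also read off $c$ and $f$ from the short-distance asymptotics: by \eqref{eq:Phi_asymptotics} the term $c\Phi$ contributes a nonzero $c\,r^{-B}\frac{\Gamma(2B)}{\Gamma(B)}(\ldots)$ singular head, while $S_D^{-1}\Phi = O(r^B)$ by \eqref{SDPhi_vanishing} and $f = o(r^{1/2})$ by Proposition \ref{prop:Sclosure} and $B<\tfrac12$, so $c$ is determined by the coefficient of $r^{-B}$ in $g$ and then $f = g - c(\beta S_D^{-1}\Phi + \Phi)\in\mathcal{D}(\overline{S})$. The case $\beta = \infty$ ($S_D$) is handled by \eqref{eq:decomp_DSD} together with the directness in \eqref{eq:Krein_decomp_formula}. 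The only genuinely delicate point is making sure the abstract formula \eqref{eq:ST} is being instantiated correctly — in particular that a self-adjoint $T$ on a one-dimensional Hilbert space is exactly multiplication by a real number and that the ``$\{0\}$-space'' case is not spuriously excluded — but this is a routine check; everything substantive has already been packed into Propositions \ref{prop:kerS*}, \ref{prop:SD}, and \ref{prop:Sclosure}.
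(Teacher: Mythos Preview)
Your proof is correct and follows essentially the same route as the paper's: specialise Theorem \ref{thm:VB-representaton-theorem_Tversion} using that $\ker S^*$ is one-dimensional (Proposition \ref{prop:kerS*}), identify the two possible domains for $T$ (the trivial one giving $S_D$ at $\beta=\infty$, the full line giving multiplication by $\beta\in\mathbb{R}$), and read off uniqueness from the direct-sum decomposition \eqref{eq:Krein_decomp_formula}. Your argument is in fact more fleshed out than the paper's terse proof, and your additional asymptotic reading of $c$ anticipates what the paper records separately in the remark containing \eqref{eq:comp_of_g}.
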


Mirror to the parametrisation formula \eqref{eq:Sbeta}, we can re-express the above result in terms of boundary conditions at the centre of the Coulomb singularity.

\begin{thm}[Classification of the self-adjoint realisations for the Dirac-Coulomb Hamiltonian -- boundary condition version]\label{thm:classification_bc}~

\begin{itemize}
 \item[(i)] Any function $g=\begin{pmatrix} g^+ \\ g^-\end{pmatrix}\in\mathcal{D}(S^*)$ satisfies the short-distance asymptotics
 \begin{equation}\label{eq:coeff_a_b}
 \begin{split}
  \lim_{r\downarrow 0} \,r^B g(r)\;&=\;g_0 \\
  \lim_{r\downarrow 0} \,r^{-B}(g(r)-g_0r^{-B})\;&=\;g_1
 \end{split}
 \end{equation}
 for some $g_0,g_1\in\mathbb{C}^2$.
 In particular,
 \begin{equation}\label{eq:coeff_a_b_BIS}
  g(r)\;=\;g_0\, r^{-B}+g_1r^B+o(r^{1/2})\qquad\textrm{as }\;r\downarrow 0\,.
 \end{equation}
 \item[(ii)] The self-adjoint extensions of the operator $S$ on $L^2(\mathbb{R}^+,\mathbb{C}^2)$ defined in \eqref{eq:def_operator_S} constitute a one-parameter family $(S_{\beta})_{\beta\in\mathbb{R}\cup{\{\infty\}}}$ of restrictions of the adjoint operator $S^*$, each of which is given by
\begin{equation}\label{eq:Sbeta_bc}
 \begin{split}
  S_{\beta}\;&:=\;S^*\upharpoonright\mathcal{D}(S_{\beta}) \\
  \mathcal{D}(S_{\beta})\;&:=\;\Big\{g\in\mathcal{D}(S^*)\,\Big|\,\frac{g_1^+}{g_0^+}=c_\nu \beta+d_\nu\Big\}\,,
 \end{split}
\end{equation}
where
\begin{equation}\label{eq:defcd}
 \begin{split}
  c_\nu\;&=\;p^+{\textstyle\Big(\frac{\Gamma(2B)}{\Gamma(B)}\,\frac{1+\nu+B}{1+\nu}\Big)^{\!-1}} \\
  d_\nu\;&=\;q^+{\textstyle\Big(\frac{\Gamma(2B)}{\Gamma(B)}\,\frac{1+\nu+B}{1+\nu}\Big)^{\!-1}},
 \end{split}
\end{equation}
and $p^+$ and $q^+$ are given, respectively, by \eqref{ed:defppm} and \eqref{eq:def_qpm}.
This is precisely the same parametrisation of the extension as in Theorem \ref{thm:classification_structure}.

\end{itemize}
\end{thm}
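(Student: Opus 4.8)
The plan is to derive Theorem~\ref{thm:classification_bc} as a direct translation of the structural classification in Theorem~\ref{thm:classification_structure}, using Propositions~\ref{prop:kerS*}, \ref{prop:SD}, and \ref{prop:Sclosure} to convert the abstract decomposition $g = f + c(\beta S_D^{-1}\Phi + \Phi)$ into a boundary-condition statement on the coefficients $g_0, g_1$ of the short-distance expansion.

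\emph{Part (i): the asymptotic expansion on $\mathcal{D}(S^*)$.} By the Kre{\u\i}n-type decomposition \eqref{eq:Krein_decomp_formula}, any $g\in\mathcal{D}(S^*)$ can be written uniquely as $g = f + a\,S_D^{-1}\Phi + b\,\Phi$ with $f\in\mathcal{D}(\overline{S})$ and $a,b\in\mathbb{C}$. I would then add the three contributions' short-distance behaviours: Proposition~\ref{prop:Sclosure} gives $f(r) = o(r^{1/2})$, Proposition~\ref{prop:SD}(iv) gives $S_D^{-1}\Phi(r) = (p^+,p^-)^T r^B + o(r^{1/2})$, and Proposition~\ref{prop:kerS*} gives $\Phi(r) = \frac{\Gamma(2B)}{\Gamma(B)}\big(\frac{1+\nu+B}{1+\nu}, -\frac{1+\nu-B}{1+\nu}\big)^T r^{-B} + (q^+,q^-)^T r^B + O(r^{1-B})$. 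Since $B\in(0,\tfrac12)$, the terms $r^B$ and $r^{1-B}$ are both $o(r^{1/2})$ only in the sense needed — actually $r^B = o(r^{1/2})$ fails since $B<\tfrac12$; rather $r^{-B}$ is the genuinely singular leading term, $r^B$ is the subleading term, and everything else ($f$, the $O(r^{1-B})$ remainder) is $o(r^{1/2})$ because $1-B > \tfrac12$. So summing, $g(r) = b\,\frac{\Gamma(2B)}{\Gamma(B)}(\ldots)^T r^{-B} + (a\,p^\pm + b\,q^\pm)^T r^B + o(r^{1/2})$, which is exactly \eqref{eq:coeff_a_b_BIS} with
\begin{equation*}
 g_0 \;=\; b\,\frac{\Gamma(2B)}{\Gamma(B)}\begin{pmatrix} \frac{1+\nu+B}{1+\nu} \\ -\frac{1+\nu-B}{1+\nu}\end{pmatrix},\qquad
 g_1 \;=\; \begin{pmatrix} a\,p^+ + b\,q^+ \\ a\,p^- + b\,q^-\end{pmatrix}.
\end{equation*}
The existence of the limits in \eqref{eq:coeff_a_b} follows because each of the three pieces has a genuine pointwise limit after multiplication by $r^B$ (resp.\ by $r^{-B}$ after subtracting the singular term); the only subtle point is that Proposition~\ref{prop:Sclosure} only asserts $f(r)=o(r^{1/2})$, so I must check that $r^B f(r)\to 0$ and $r^{-B}f(r)\to 0$, i.e.\ that $o(r^{1/2})$ controls both $r^{\pm B}$-rescalings — which holds precisely because $B<\tfrac12<1-B$, so $r^B f = o(r^{1/2+B}) = o(1)$ and $r^{-B}f = o(r^{1/2-B}) = o(1)$.

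\emph{Part (ii): the boundary condition.} Starting from the parametrisation \eqref{eq:Sbeta}, a function $g\in\mathcal{D}(S_\beta)$ (with $\beta\neq\infty$) has the form $g = f + c(\beta S_D^{-1}\Phi + \Phi)$, i.e.\ $a = c\beta$ and $b = c$ in the notation above. Feeding this into the formulas for $g_0, g_1$: the first components are $g_0^+ = c\,\frac{\Gamma(2B)}{\Gamma(B)}\frac{1+\nu+B}{1+\nu}$ and $g_1^+ = c\beta\,p^+ + c\,q^+$. Taking the ratio (legitimate since $g_0^+\neq 0$ when $c\neq 0$, as $\frac{\Gamma(2B)}{\Gamma(B)}\frac{1+\nu+B}{1+\nu}\neq 0$) cancels $c$ and yields
\begin{equation*}
 \frac{g_1^+}{g_0^+} \;=\; \frac{\beta p^+ + q^+}{\frac{\Gamma(2B)}{\Gamma(B)}\frac{1+\nu+B}{1+\nu}} \;=\; p^+\Big(\tfrac{\Gamma(2B)}{\Gamma(B)}\tfrac{1+\nu+B}{1+\nu}\Big)^{-1}\beta + q^+\Big(\tfrac{\Gamma(2B)}{\Gamma(B)}\tfrac{1+\nu+B}{1+\nu}\Big)^{-1} \;=\; c_\nu\beta + d_\nu,
\end{equation*}
which is \eqref{eq:defcd}. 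Conversely, given $g\in\mathcal{D}(S^*)$ with $g_1^+/g_0^+ = c_\nu\beta + d_\nu$, I write $g = f + a S_D^{-1}\Phi + b\Phi$ and must show $a = \beta b$: from $g_0^+\neq 0$ we get $b\neq 0$, and the ratio condition forces $(ap^+ + bq^+)/(b\frac{\Gamma(2B)}{\Gamma(B)}\frac{1+\nu+B}{1+\nu}) = c_\nu\beta + d_\nu = (p^+\beta + q^+)/(\frac{\Gamma(2B)}{\Gamma(B)}\frac{1+\nu+B}{1+\nu})$, hence $ap^+ + bq^+ = b(p^+\beta + q^+)$, i.e.\ $a p^+ = b\beta p^+$, and since $p^+\neq 0$ (Proposition~\ref{prop:SD}(iv)), $a = \beta b$, so $g = f + b(\beta S_D^{-1}\Phi + \Phi)\in\mathcal{D}(S_\beta)$. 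The case $\beta = \infty$ corresponds to $a = b = 0$ forced differently: it is the condition $g_0 = 0$ (equivalently $g_0^+ = 0$, hence $b=0$), recovering $S_D$; this matches Theorem~\ref{thm:classification_structure}. Finally, the claim ``this is precisely the same parametrisation as in Theorem~\ref{thm:classification_structure}'' is immediate since at each step we used $a=c\beta$, $b=c$ with the same $\beta$.

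\emph{Main obstacle.} The genuinely delicate point is not the algebra but the interplay of exponents: one must be careful that $p^+$ and $q^+$ are nonzero (so that $c_\nu\neq 0$ and the map $\beta\mapsto c_\nu\beta + d_\nu$ is a bijection of $\mathbb{R}\cup\{\infty\}$ onto itself, with $\infty\mapsto\infty$), and that the remainder terms $O(r^{1-B})$ from $\Phi$ and the $o(r^{1/2})$ from $\mathcal{D}(\overline{S})$ are both negligible compared to the $r^{\pm B}$ terms being extracted — this rests entirely on $B\in(0,\tfrac12)$, so $-B < B < \tfrac12 < 1-B$, and on the fact (to be double-checked) that the remainder in \eqref{SDPhi_vanishing} is $o(r^{1/2})$ and not merely $o(r^B)$, which is why the footnote's sharper $O(r^{1-B})$ bound, though not logically required, is reassuring. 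A second, more bookkeeping-level obstacle is to confirm that the limits in \eqref{eq:coeff_a_b} genuinely exist (rather than just $\limsup$ bounds), which requires each summand in the decomposition of $g$ to have an honest asymptotic expansion to the stated order — this is exactly the content of Propositions~\ref{prop:kerS*}, \ref{prop:SD}(iv), and \ref{prop:Sclosure}, so the proof reduces to quoting them and adding.
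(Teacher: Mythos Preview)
Your proposal is correct and follows essentially the same approach as the paper: decompose $g\in\mathcal{D}(S^*)$ via the Kre{\u\i}n formula \eqref{eq:Krein_decomp_formula}, read off $g_0,g_1$ from the asymptotics in Propositions~\ref{prop:kerS*}, \ref{prop:SD}(iv), \ref{prop:Sclosure}, and translate the structural condition of Theorem~\ref{thm:classification_structure} into the ratio $g_1^+/g_0^+=c_\nu\beta+d_\nu$. The only cosmetic difference is that the paper normalises the $\Phi$-coefficient as $b/\gamma$ (with $\gamma=\tfrac{\Gamma(2B)}{\Gamma(B)}\tfrac{1+\nu+B}{1+\nu}$) so that $g_0^+=b$ directly, whereas you write $b\Phi$ and absorb $\gamma$ into $g_0^+$; your added discussion of the converse, the $\beta=\infty$ case, and the exponent hierarchy $-B<B<\tfrac12<1-B$ is more explicit than the paper's but not a different route.
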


The proofs of Theorems \ref{thm:classification_structure} and \ref{thm:classification_bc} are an application of the general Kre{\u\i}n-Vi\v{s}ik-Birman Theorem \ref{thm:VB-representaton-theorem_Tversion}, through the intermediate results of Propositions  \ref{prop:kerS*}, \ref{prop:SD}, and \ref{prop:Sclosure}, as we shall show in a moment. Owing to further corollaries of Theorem \ref{thm:VB-representaton-theorem_Tversion}, which we work out in detail in Section \ref{sec:resolvent} (Theorem \ref{eq:thm_invertibility_and_resolvent} therein), we can supplement the above extension picture with an additional amount of information concerning the invertibility, the resolvent, and the spectral gap of each realisation $S_\beta$. This too is an example of relevant and non-trivial features of the self-adjoint extensions that can be established in a relatively cheap and elementary manner, unlike the counterpart way based on von Neumann's extension theory.

\begin{thm}[Invertibility, resolvent, and estimate on the spectral gap]\label{thm:DC-invertibility-resolvent-gap}~

\noindent The elements of the family $(S_\beta)_{\beta\in\mathbb{R}\cup{\{\infty\}}}$ of the self-adjoint extensions of the operator $S$ on $L^2(\mathbb{R}^+,\mathbb{C}^2,\ud r)$ defined in \eqref{eq:def_operator_S}, labelled according to the parametrisation of Theorem \ref{thm:classification_structure}, have the following properties.
\begin{itemize}
 \item[(i)] $S_\beta$ is invertible on the whole $L^2(\mathbb{R}^+,\mathbb{C}^2)$ if and only if $\beta\neq 0$.
 \item[(ii)] For each invertible extension $S_\beta$,
 \begin{equation}\label{eq:Sbeta-1}
  S_\beta^{-1}\;=\;S_D^{-1}+\frac{1}{\,\beta\|\Phi\|^{2}}\:|\Phi\rangle\langle\Phi|\,.
 \end{equation}
 \item[(iii)] For each invertible extension $S_\beta$,
 \begin{equation}\label{eq:sigmaess}
  \sigma_{\mathrm{ess}}(S_\beta)\;=\;\sigma_{\mathrm{ess}}(S_D)\;=\;(-\infty,-1]\cup[1,+\infty)\,,
 \end{equation}
 and the gap in the spectrum $\sigma(S_\beta)$ around $E=0$ is at least the interval $(-E(\beta),E(\beta))$, where
 \begin{equation}
  E(\beta)\;:=\;\frac{|\beta|}{\,|\beta| \|S_D^{-1}\|+1\,}\,.
 \end{equation}
\end{itemize}
\end{thm}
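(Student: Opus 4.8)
The plan is to derive everything from the abstract Kre\u\i n-Vi\v sik-Birman correspondence of Theorem~\ref{thm:VB-representaton-theorem_Tversion} applied to $S$, specialised to the present situation where $\ker S^*$ is one-dimensional (Proposition~\ref{prop:kerS*}) and the reference extension is $S_D$ (Proposition~\ref{prop:SD}). Since $\ker S^*=\mathrm{span}\{\Phi\}$, a self-adjoint operator $T$ on a subspace of $\ker S^*$ is either the zero operator on all of $\ker S^*$, or is multiplication by a real number $t$ on $\mathrm{span}\{\Phi\}$, or is defined on the trivial subspace $\{0\}$; in the parametrisation of Theorem~\ref{thm:classification_structure} these correspond to the full family $(S_\beta)_{\beta\in\mathbb{R}}$ together with $\beta=\infty$ (the case $T$ acting on $\{0\}$, giving back $S_D$). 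Concretely, writing $T\Phi = t\,\Phi$, the domain formula \eqref{eq:ST} yields $\mathcal{D}(S_T)=\{f+ S_D^{-1}(t\,c\,\Phi)+ c\,\Phi : f\in\mathcal{D}(\overline S),\,c\in\mathbb{C}\}$, which matches \eqref{eq:Sbeta} after the identification $\beta = t$ (and $\beta=\infty$ when $T$ is the operator on $\{0\}$, whose domain is exactly $\mathcal{D}(\overline S)\dotplus S_D^{-1}\ker S^* = \mathcal{D}(S_D)$ by \eqref{eq:decomp_DSD}). This is the bookkeeping underlying item~(i) of the theorem.

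For item~(ii), I would invoke the Kre\u\i n-type resolvent/inverse formula that accompanies Theorem~\ref{thm:VB-representaton-theorem_Tversion} (this is exactly the content referenced in \cite[Theorem~3.4]{GMO-KVB2017}, whose corollaries the authors promise to spell out in Section~\ref{sec:resolvent}): for a symmetric $S$ with invertible reference extension $S_D$, the extension $S_T$ is invertible precisely when $T$ is invertible on its subspace, and then
\begin{equation*}
 S_T^{-1}\;=\;S_D^{-1}+ P\,(T^{-1})\,P^*,
\end{equation*}
where $P$ denotes the natural embedding of the relevant subspace of $\ker S^*$. In the one-dimensional case, $T$ is multiplication by $\beta$ on $\mathrm{span}\{\Phi\}$, so $T$ is invertible iff $\beta\neq 0$, which is item~(i); and $T^{-1}$ is multiplication by $\beta^{-1}$, so $P(T^{-1})P^* = \beta^{-1}\,\|\Phi\|^{-2}\,|\Phi\rangle\langle\Phi|$, giving \eqref{eq:Sbeta-1}. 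The factor $\|\Phi\|^{-2}$ is just the normalisation needed to turn the scalar $\beta^{-1}$ acting on $\mathrm{span}\{\Phi\}$ into the rank-one projection-weighted operator on $L^2$; I would double-check the normalisation convention in \cite{GMO-KVB2017} but this is routine. When $\beta=\infty$, $S_\beta=S_D$ is invertible by Proposition~\ref{prop:SD}(ii), consistently with ``$\beta\neq 0$''.

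For item~(iii), the equality of essential spectra \eqref{eq:sigmaess} follows from \eqref{eq:Sbeta-1}: $S_\beta^{-1}$ and $S_D^{-1}$ differ by a rank-one (hence compact) operator, so by Weyl's theorem $\sigma_{\mathrm{ess}}(S_\beta^{-1})=\sigma_{\mathrm{ess}}(S_D^{-1})$, and then, since $0$ is not an eigenvalue of the bounded self-adjoint operators $S_\beta^{-1}$, $S_D^{-1}$ and is not isolated in either spectrum (as $\pm 1\in\sigma_{\mathrm{ess}}(S_D)$ forces $0\in\sigma_{\mathrm{ess}}(S_D^{-1})$), the spectral mapping $\lambda\mapsto\lambda^{-1}$ transports this back to $\sigma_{\mathrm{ess}}(S_\beta)=\sigma_{\mathrm{ess}}(S_D)=(-\infty,-1]\cup[1,+\infty)$, which is precisely Theorem~\ref{thm:recap}'s essential spectrum restricted to a single partial-wave sector. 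For the spectral gap estimate, the point is that $E\in\rho(S_\beta)$ as soon as $\|S_\beta^{-1}\|<|E|^{-1}$, equivalently $|E|<\|S_\beta^{-1}\|^{-1}$; so I need an upper bound on $\|S_\beta^{-1}\|$. From \eqref{eq:Sbeta-1} and the triangle inequality, $\|S_\beta^{-1}\|\le \|S_D^{-1}\| + |\beta|^{-1}\|\Phi\|^{-2}\,\big\||\Phi\rangle\langle\Phi|\big\| = \|S_D^{-1}\| + |\beta|^{-1}$, since $\big\||\Phi\rangle\langle\Phi|\big\| = \|\Phi\|^2$. Hence $\|S_\beta^{-1}\|^{-1}\ge \big(\|S_D^{-1}\|+|\beta|^{-1}\big)^{-1} = \dfrac{|\beta|}{|\beta|\,\|S_D^{-1}\|+1} = E(\beta)$, so the open interval $(-E(\beta),E(\beta))$ lies in $\rho(S_\beta)$, i.e.\ in the spectral gap.

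The main obstacle is not any single computation but making sure the abstract Kre\u\i n-Vi\v sik-Birman inverse formula is quoted with the correct hypotheses and the correct normalisation of the rank-one term: the parametrisation in \eqref{eq:Sbeta} bundles $\beta$ into the vector $\beta S_D^{-1}\Phi+\Phi$ rather than presenting $T$ as bare multiplication, so one must carefully track how the label $\beta$ relates to the operator $T$ on $\ker S^*$ (here, happily, just $T=\beta\cdot$) and how $\|\Phi\|^{2}$ enters; a sign or a reciprocal slip here would corrupt both \eqref{eq:Sbeta-1} and the gap estimate. Everything else — Weyl's theorem for the compact (rank-one) perturbation of resolvents, the spectral mapping for $\lambda\mapsto\lambda^{-1}$, and the elementary norm bound giving $E(\beta)$ — is standard once the resolvent identity \eqref{eq:Sbeta-1} is in hand.
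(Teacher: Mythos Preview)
Your proposal is correct and follows essentially the same route as the paper: parts (i) and (ii) are deduced from the abstract Kre\u\i n--Vi\v sik--Birman inverse formula (the paper's Theorem~\ref{eq:thm_invertibility_and_resolvent}) specialised to $T=\beta\cdot$ on the one-dimensional $\ker S^*$, and part (iii) uses the rank-one (hence compact) resolvent difference for the essential spectrum together with the norm bound $\|S_\beta^{-1}\|\le\|S_D^{-1}\|+|\beta|^{-1}$ for the gap. The only cosmetic difference is that for the gap the paper phrases the same bound as a contraction argument on the fixed-point equation $f=E\,S_\beta^{-1}f$, which excludes eigenvalues in $(-E(\beta),E(\beta))$; your direct inequality $\|S_\beta^{-1}\|^{-1}\ge E(\beta)$ gives the same conclusion (indeed slightly more, since it places the whole interval in $\rho(S_\beta)$ without appealing separately to the location of $\sigma_{\mathrm{ess}}$).
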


We conclude this Section with the proof of Theorems \ref{thm:classification_structure} and \ref{thm:classification_bc}, and we defer the proof of the technical intermediate results and of Theorem \ref{thm:DC-invertibility-resolvent-gap} to the following Sections.

\begin{proof}[Proof of Theorem \ref{thm:classification_structure}]
One extension is surely the distinguished extension $S_D$, with domain $\mathcal{D}(S_D)=\mathcal{D}(\overline{S})\dotplus S_D^{-1}\ker S^*$ (Proposition \ref{prop:SD}(iii)), which is of the form \eqref{eq:Sbeta} for $\beta=\infty$: indeed, with respect to the general formula \eqref{eq:ST}, this is the extension that corresponds to an operator $T$ defined on $\{0\}\subset\ker S^*$.
Since $\dim\ker S^*=1$ (Proposition \ref{prop:kerS*}), for all other extensions of $S$ the parametrising operator $T$, in the sense of the general formula \eqref{eq:ST}, must be self-adjoint on the whole one-dimensional $\mathrm{span}\{\Phi\}$, and therefore is the multiplication operator by a scalar $\beta$. Then \eqref{eq:ST} takes the form \eqref{eq:Sbeta}. The uniqueness of the decomposition of $g\in\mathcal{D}(S_\beta)$ into $g\in\mathcal{D}(S_\beta)$ is a direct consequence of the direct sum decomposition \eqref{eq:Krein_decomp_formula} of Proposition \ref{prop:SD}(iii).
%
\end{proof}

\medskip

\begin{proof}[Proof of Theorem \ref{thm:classification_bc}]~

(i) It was determined in Propositions \ref{prop:kerS*}, \ref{prop:SD}, and \ref{prop:Sclosure} that a generic $g\in\mathcal{D}(S^*)$ decomposes with respect to \eqref{eq:Krein_decomp_formula} as
\begin{equation}\label{eq:decomp_g_DS*}
\begin{pmatrix} g^+ \\ g^-\end{pmatrix}\;=\;\begin{pmatrix} f^+ \\ f^-\end{pmatrix}+a\, S_D^{-1}\begin{pmatrix} \Phi^+ \\ \Phi^-\end{pmatrix}+\frac{b}{\gamma}\begin{pmatrix} \Phi^+ \\ \Phi^-\end{pmatrix}\quad \gamma\,:=\,\textstyle{\frac{\Gamma(2B)}{\Gamma(B)}\,\frac{1+\nu+B}{1+\nu}}
\end{equation}
for some $a,b\in\mathbb{C}$, and moreover, as $r\downarrow 0$,
\[
 \begin{split}
  f(r)\;&=\;o(r^{1/2})\,, \\
  (S_D^{-1}\Phi)(r)\;&=\; \begin{pmatrix} p^+\! \\ p^-\!\end{pmatrix} r^B+o(r^{1/2})\,, \\
  r^B\Phi(r)\;&=\;\begin{pmatrix} 1 \\ -\frac{1+\nu-B}{1+\nu+B}\end{pmatrix}
  \gamma+\begin{pmatrix}
    q^+\! \\q^-\!
   \end{pmatrix} r^{2B}+o(r^{1/2+B})
 \end{split}
\]
(see, respectively, \eqref{eq:f_vanishing}, \eqref{SDPhi_vanishing}, and \eqref{eq:Phi_asymptotics} above). Therefore, the limit in the first component yields $r^Bg^+\!(r)\xrightarrow[]{r\downarrow 0}b$, and also
\[
 \begin{split}
  r^{-B}&(g^+\!(r)-br^{-B})\;=\;r^{-B}\big(f^+\!(r)+a(S_D^{-1}\Phi)^+\!(r)+b\gamma^{-1}\Phi^+\!(r)-br^{-B}\big) \\
  &=\;a\,p^+ +b\,q^+\gamma^{-1} + o(r^{1/2-B})\,,
 \end{split}
\]
that is, $r^{-B}(g^+\!(r)-br^{-B})\xrightarrow[]{r\downarrow 0}a\,p^+ +b\,q^+\gamma^{-1}$. Thus, \eqref{eq:coeff_a_b} and \eqref{eq:coeff_a_b_BIS} follow by setting
\[\tag{*}
 g_0^+\;:=\;b\,,\qquad g_1^+\;:=\;a p^+ + b q^+\gamma^{-1}
\]
%
(an analogous argument holds for the lower components).

(ii) Necessary and sufficient condition for $g\in\mathcal{D}(S^*)$ to belong to the domain $\mathcal{D}(S_\beta)$ of the extension $S_\beta$ determined by \eqref{eq:Sbeta} of Theorem \ref{thm:classification_structure} is that in the decomposition \eqref{eq:decomp_g_DS*} above the coefficients $a$ and $b$ satisfy $a=\beta b\gamma^{-1}$. Owing to (*) and \eqref{eq:defcd}, the latter condition reads $g_1^+/g_0^+=c_\nu \beta+d_\nu$.
\end{proof}

Last, it is worth highlighting a couple of important remarks.

\begin{rem}
The proof of Theorem \ref{thm:classification_bc} shows that the decomposition of $g\in\mathcal{D}(S_\beta)$ determined by \eqref{eq:Sbeta}, and hence $c$ and $f$, are explicitly given by
 \begin{equation}\label{eq:comp_of_g}
 \begin{split}
  c\;&=\;(\textstyle{\frac{\Gamma(B)}{\Gamma(2B)}\,\frac{1+\nu}{1+\nu+B}})\cdot\displaystyle{\lim_{r\downarrow 0}} \;r^B g^+\!(r) \\
  f\;&=\;g-c(\beta S_D^{-1}\Phi+\Phi)\,.
 \end{split}
 \end{equation}
 Indeed, in the notation of \eqref{eq:decomp_g_DS*} therein, $b=\gamma c$. In fact, the same argument shows that the first equation in \eqref{eq:comp_of_g} determines the component $c\Phi\in\ker S^*$ of a generic $g\in\mathcal{D}(S^*)$, and hence defines the (non-orthogonal) projection $\mathcal{D}(S^*)\to\ker S^*$, $g\mapsto c\Phi$ induced by the decomposition formula \eqref{eq:Krein_decomp_formula}.
When $\beta\neq 0$, one has equivalently 
\begin{equation}\label{eq:comp_of_g_alt}
 c\;=\;\beta^{-1}\!\!\int_0^{+\infty}\!\langle\Phi(r),(\widetilde{S}g)(r)\rangle_{\mathbb{C}^2}\,\ud r \,.
\end{equation}
Indeed $\widetilde{S}g=S_\beta g=S^*g=\overline{S}f+c\beta\Phi$ and $\ran\overline{S}\perp\ker S^*$, whence it follows that $\langle\Phi,\widetilde{S}g\rangle_{L^2(\mathbb{R}^3,\mathbb{C}^2)}=c\beta$. 
\end{rem}

\begin{rem}
 As typical when the operator which one studies the self-adjoint extensions of is a differential operator, one interprets \eqref{eq:Krein_decomp_formula} as the canonical decomposition of an element $g\in\mathcal{D}(S^*)$ into a `regular' and a `singular' part
 \begin{equation}
  \begin{split}
   g_{\mathrm{reg}}\;&:=\;f+a\, S_D^{-1}\Phi\in\mathcal{D}(S_D) \\
   g_{\mathrm{sing}}\;&:=\;\frac{b}{\gamma}\,\Phi\in\ker S^*\,,
  \end{split}
 \end{equation}
 where $a,b\in\mathbb{C}$ and $f\in\mathcal{D}(\overline{S})$ are determined by $g$ and $\gamma=\frac{\Gamma(2B)}{\Gamma(B)}\,\frac{1+\nu-B}{1+\nu}$. Indeed $\mathcal{D}(S_D)$ has a higher regularity then $\ker S^*$: functions in the former space vanish at zero, as follows from \eqref{SDPhi_vanishing}-\eqref{eq:f_vanishing}, whereas $\Phi$ diverges at zero, as seen in \eqref{eq:Phi_asymptotics}. In this language, $r^{-B}g_{\mathrm{reg}}^+(r)\xrightarrow[]{r\downarrow 0}a p^+$ and $r^{B}g_{\mathrm{sing}}^+(r)\xrightarrow[]{r\downarrow 0}b$, and the self-adjointness condition \eqref{eq:Sbeta_bc} that selects, among the elements in $\mathcal{D}(S^*)$, only those in $\mathcal{D}(S_{\beta})$ reads
 \begin{equation}\label{eq:bc-reg-sing}
  \Big(\frac{\gamma}{p^+}\lim_{r\downarrow 0}\;r^{-B}g_{\mathrm{reg}}^+(r)\Big) \;=\; (c_\nu \beta+d_\nu)\,\Big(\lim_{r\downarrow 0}\;r^{B}g_{\mathrm{sing}}^+(r)\Big)\,,
 \end{equation}
 that is, the ratio between $\gamma(p^+\!)^{-1}$ times the coefficient of the leading vanishing term of $g_{\mathrm{reg}}^+$ and the coefficient of the leading divergent term of $g_{\mathrm{sing}}^+$ is indexed by the real extension parameter $\beta$.
\end{rem}

\section{Homogeneous problem $\widetilde{S}u=0$}\label{sec:kernel}

In this Section we identify the dimension and the basis of the subspace $\ker S^*$, and prove Proposition \ref{prop:kerS*}. One has to solve the homogeneous differential equation $\widetilde{S}u=0$, where $\widetilde{S}$ is the differential operator \eqref{eq:tildeS} and the function $r\mapsto u(r)=\begin{pmatrix} u^+\!(r) \\ u^-\!(r) \end{pmatrix}$ on $\mathbb{R}^+$ is the spinorial unknown.
The needed ODE technique is classical and we include it concisely for completeness. Observe, however, that for the application of von Neumann's theory of self-adjoint extension one has to solve the ODE problem $\widetilde{S}u=z u$ for non-real $z\in\mathbb{C}$, say, $z=\pm\ii$, which requires a somewhat more extended discussion -- see, e.g., \cite[Section 3]{Voronov-Gitman-Tyutin-TMP2007} or \cite[Sections 4-6]{Hogreve-2013_JPhysA}.

Upon transforming the unknown $u$ into $\varphi$, where
\begin{equation}\label{eq:u-varphi}
 \varphi(r)\;:=\;{\textstyle\frac{1}{2}}(\mathbf{A}u)({\textstyle\frac{r}{2}})\,e^{r/2}\,,\qquad \mathbf{A}:=\frac{1}{\sqrt{2}}\begin{pmatrix} 1 & 1 \\ 1 & -1 \end{pmatrix},
\end{equation}
the differential system $\widetilde{S}u=0$ takes the form
\begin{equation}\label{eq:diff_varphi}
 \begin{cases}
  \,(\varphi^+\!)'\,=\,\varphi^+\!-\frac{1-\nu}{r}\,\varphi^- \\
  \,(\varphi^-\!)'\,=\,-\frac{1+\nu}{r}\,\varphi^+\,.
 \end{cases}
\end{equation}
Therefore, $\varphi^-$ is a solution to
\begin{equation}
 r(\varphi^-\!)''+(1-r)(\varphi^-\!)'-\textstyle{\frac{\nu^2-1}{r}}\,\varphi^-\!\;=\;0\,,
\end{equation}
equivalently,
\begin{equation}\label{eq:xi-phi}
 \xi(r)\,:=\,r^B\varphi^-\!(r)
\end{equation}
is a solution to
\begin{equation}\label{eq:ODE_hyper}
 r\xi''+(1-2B-r)\xi'+B\,\xi\;=\;0\,.
\end{equation}
The second order ODE \eqref{eq:ODE_hyper} is the confluent hypergeometric equation \cite[(13.1.1) and (13.1.11)]{Abramowitz-Stegun-1964}, two linearly independent solutions of which are the confluent hypergeometric functions of first and second kind, that is, respectively, the Kummer function $M_{a,b}(r)$ \cite[(13.1.2)]{Abramowitz-Stegun-1964} and the Tricomi function $U_{a,b}(r)$ \cite[(13.1.3)]{Abramowitz-Stegun-1964}, with $a=-B$ and $b=1-2B$.

The solutions $\xi_0(r):=M_{-B,1-2B}(r)$ and $\xi_\infty(r):=U_{-B,1-2B}(r)$ to \eqref{eq:ODE_hyper} determine, via \eqref{eq:xi-phi} and the second of \eqref{eq:diff_varphi}, two linearly independent solutions $\varphi_0=\begin{pmatrix}\varphi_0^+\! \\ \varphi_0^-\! \end{pmatrix}$ and $\varphi_\infty=\begin{pmatrix}\varphi_\infty^+\! \\ \varphi_\infty^-\! \end{pmatrix}$ to \eqref{eq:diff_varphi}. Using the properties
\[
M_{a,b}'(r)\,=\,\frac{a}{b}\,M_{a+1,b+1}(r)\,,\qquad U_{a,b}'(r)\,=\,-a\,U_{a+1,b+1}(r)
\]
(\cite[(13.4.8) and (13.4.21)]{Abramowitz-Stegun-1964}), and the inverse transformation of \eqref{eq:u-varphi}, that is, $u(r)=2\,e^{-r/2}(\mathbf{A}^{-1}\varphi)(2r)$, where $\mathbf{A}^{-1}=\mathbf{A}$, yields the following two linearly independent solutions to the original problem $\widetilde{S}u=0$:
\begin{equation}\label{eq:solutions_u0_uinf}
\begin{split}
\!\!\!\!\!\!u_0(r)&:=\frac{1}{e^{r}r^{B}}\!\begin{pmatrix}
 \textstyle{\frac{1+\nu+B}{1+\nu}}\,M_{-B,1-2B}(2r)+\textstyle{\frac{2rB}{(1+\nu)(1-2B)}}\,M_{1-B,2-2B}(2r) \\
 \textstyle{-\frac{1+\nu-B}{1+\nu}}\,M_{-B,1-2B}(2r)+\textstyle{\frac{2rB}{(1+\nu)(1-2B)}}\,M_{1-B,2-2B}(2r)
                               \end{pmatrix} \\
 \!\!\!\!\!\!\!\!u_\infty(r)&:=\frac{1}{e^{r}r^{B}}\!\begin{pmatrix}
 \textstyle{\frac{1+\nu+B}{1+\nu}}\,U_{-B,1-2B}(2r)-\textstyle{\frac{2rB}{1+\nu}}\,U_{1-B,2-2B}(2r) \\
 \textstyle{-\frac{1+\nu-B}{1+\nu}}\,U_{-B,1-2B}(2r)-\textstyle{\frac{2rB}{1+\nu}}\,U_{1-B,2-2B}(2r)
                               \end{pmatrix}
\end{split}
\end{equation}
(in fact, an irrelevant common pre-factor $2^{-B}$ has been neglected). Both $u_0$ and $u_\infty$ are real-valued and smooth on $\mathbb{R}^+$.

Because of the asymptotics \cite[(13.1.2), (13.5.1), and (13.5.5)]{Abramowitz-Stegun-1964}
\begin{equation}
 \begin{split}
  M_{a,b}(r)\;&=\;\frac{\,e^r\,r^{a-b}\,}{\Gamma(a)}\,(1+O(r^{-1}))\qquad\textrm{as }\,r\to+\infty \\
  M_{a,b}(r)\;&=\;1+O(r)\qquad\qquad\;\;\textrm{as }\,r\downarrow 0\quad\textrm{ and }-\!b\notin\mathbb{N}
 \end{split}
\end{equation}
and \cite[(13.1.2), (13.1.3), (13.5.2), (13.5.8), and (13.5.10)]{Abramowitz-Stegun-1964}
\begin{equation}
 \begin{split}
  U_{a,b}(r)\;&=\;r^{-a}(1+O(r^{-1}))\qquad\qquad\qquad\qquad\qquad\,\textrm{as }\,r\to+\infty \\
  U_{a,b}(r)\;&=\;\frac{\Gamma(1-b)}{\Gamma(1+a-b)}+\frac{\Gamma(b-1)}{\Gamma(a)}\,r^{1-b}+O(r)\quad
  \begin{array}{l}
   \textrm{as }\,r\downarrow 0 \\
   \textrm{and }b\in(0,1)
  \end{array} \\
  U_{a,b}(r)\;&=\;\frac{\Gamma(b-1)}{\Gamma(a)}\,r^{-(b-1)}+O(1)\qquad\qquad\qquad\;\;
  \begin{array}{l}
   \textrm{as }\,r\downarrow 0 \\
   \textrm{and }b\in(1,2)\,,
  \end{array}
 \end{split}
\end{equation}
one deduces that both $u_0$ and $u_\infty$ are square-integrable around $r=0$, whereas only $u_\infty$ is square-integrable at infinity, and moreover
\begin{equation}\label{eq:asymptotics_for_u0}
 \begin{split}
  u_0(r)\;&=\;\begin{pmatrix}
               \frac{1+\nu+B}{1+\nu} \\ -\frac{1+\nu-B}{1+\nu}
              \end{pmatrix} r^{-B} +O(r^{1-B})\qquad\qquad\qquad\,\textrm{as } r\downarrow 0\\
  u_0(r)\;&=\;-\textstyle{\frac{2^B(1-2B)}{\Gamma(-B)(1+\nu)}}\begin{pmatrix}
               1 \\ 1
              \end{pmatrix}r^Be^r (1+O(r^{-1}))\qquad\qquad\textrm{as } r\to +\infty\,,
 \end{split}
\end{equation}
and
\begin{equation}\label{eq:asymptotics_for_uinfty}
 \begin{split}
  u_\infty(r)\;&=\;\textstyle{\frac{\Gamma(2B)}{\Gamma(B)}}\begin{pmatrix}
               \frac{1+\nu-B}{1+\nu} \\ -\frac{1+\nu+B}{1+\nu}
              \end{pmatrix} r^{-B} +\begin{pmatrix} q^+\! \\ q^-\! \end{pmatrix} r^B +O(r^{1-B})\quad\:\textrm{as } r\downarrow 0\\
  u_\infty(r)\;&=\;2^B\begin{pmatrix}
               1 \\ -1
              \end{pmatrix}r^{-B}e^{-r}
              (1+O(r^{-1}))\qquad\textrm{as } r\to +\infty\,,
 \end{split}
\end{equation}
where
\begin{equation}\label{eq:def_qpm}
 q^\pm\;:=\;\textstyle\frac{4^B(-B\pm(1+\nu))\Gamma(-2B)}{(1+\nu)\Gamma(-B)}\,.
\end{equation}
Observe that $q^\pm\neq 0$.

Therefore, there is only a \emph{one}-dimensional space of solutions to $\widetilde{S}u=0$ which are square integrable, and hence, owing to \eqref{DSclosureDS*}, $\ker S^*$ is one-dimensional. For convenience, let us choose as the spanning vector the function $\Phi:=u_\infty$. Then \eqref{eq:asymptotics_for_uinfty} implies \eqref{eq:Phi_asymptotics} and Proposition \ref{prop:kerS*} is proved.

\section{Distinguished extension $S_D$}\label{sec:distinguished}

In this Section we qualify the distinguished extension $S_D$ of the operator $S$, and prove Proposition \ref{prop:SD}. 

When comparing the approach based on von Neumann's theory, as developed, e.g., in \cite{Hogreve-2013_JPhysA}, with the present one based on the Kre{\u\i}n-Vi\v{s}ik-Birman theory, to solve the homogeneous problem $S^*u=0$ (in the KVB strategy) or to solve the deficiency space problem $S^*u=\pm\ii u$ (in the von Neumann strategy) are two essentially analogous versions of the same step, from the ODE viewpoint. In contrast, the qualification of $S_D$ (in view of Theorem \ref{thm:VB-representaton-theorem_Tversion}, strictly speaking one only needs to qualify the action of $S_D^{-1}$ on $\ker S^*$) is a specific step of the KVB strategy, and it boils down to solving the ODE problem $\widetilde{S}f=g$ for given $g$. Along this line, we adapt to our case the analysis done in \cite{B-Derezinski-G-AHP2011} for homogeneous Schr\"{o}dinger operators on half-line.

In order to set up the problem conveniently, let us first replace the pair $(u_0,u_\infty)$ of linearly independent solutions \eqref{eq:solutions_u0_uinf} to $\widetilde{S}u=0$ to the new pair $(v_0,v_\infty)$ given by
\begin{equation}\label{eq:v0_vinf}
 \begin{split}
  v_0\;&:=u_\infty-{\textstyle \frac{\Gamma(2B)}{\Gamma(B)}}\,u_0 \\
  v_\infty\;&:=\;u_\infty \,.
 \end{split}
\end{equation}
This preserves the linear independence of $v_0$ and $v_\infty$ with the virtue of having two solutions with different power-law in the asymptotics as $r\downarrow 0$: from \eqref{eq:v0_vinf} and \eqref{eq:asymptotics_for_u0}-\eqref{eq:asymptotics_for_uinfty} we find 
\begin{equation}\label{eq:asymptotics_for_v0vinf}
\begin{array}{l}
 \begin{split}
 v_0(r)\;&=\;\begin{pmatrix} q^+\! \\ q^-\! \end{pmatrix} r^B+O(r^{1-B})\\
 v_\infty(r)\;&=\;\textstyle{\frac{\Gamma(2B)}{\Gamma(B)}}\begin{pmatrix}
               \frac{1+\nu+B}{1+\nu} \\ -\frac{1+\nu-B}{1+\nu}
              \end{pmatrix} r^{-B} +O(r^B)
 \end{split}
\end{array}\qquad\textrm{as } r\downarrow 0\,,
\end{equation}
where $q^\pm$ is given by \eqref{eq:def_qpm}. At large distances, $v_0$ and $v_\infty$ have exponential 
asymptotics as $u_0$ and $u_\infty,$ namely
\begin{equation}\label{eq:asymptotics_for_v0vinf_large_distances}
\begin{array}{l}
 \begin{split}
 v_0(r)\;&=\;-\textstyle{\frac{1}{2} \,\frac{2^{B} B}{(1+\nu) \cos(B \pi)}} \begin{pmatrix}
               1 \\ 1
              \end{pmatrix}r^B e^r(1+O(r^{-1}))\\
 v_\infty(r)\;&=\;2^B\begin{pmatrix}
               1 \\ -1
              \end{pmatrix}r^{-B} e^{-r} 
              (1+O(r^{-1}))
 \end{split}
\end{array}\;\textrm{as } r\to+\infty\,.
\end{equation}

We then proceed with standard ODE arguments. With respect to the fundamental system $(v_0,v_\infty)$, the general solution to the inhomogeneous problem $\widetilde{S}f=g$ has the form
\begin{equation}\label{eq:ODE-decomp}
 f\;=\;A_0v_0+A_\infty v_\infty+f_\mathrm{part}\,,
\end{equation}
where $A_0$ and $A_\infty$ run over $\mathbb{C}$ and $f_\mathrm{part}$ is a particular solution, namely, $\widetilde{S}f_\mathrm{part}=g$. Let us determine it through the variation of constants \cite[Section 2.4]{Wasow_asympt_expansions}.

First we re-write $\widetilde{S}f=g$ in normal form as
\begin{equation}\label{eq:normal_form}
 y'+\mathbf{V}(r)y\,=\,g\,,\qquad y\,:=\,\mathbf{E}f\,,
\end{equation}
where
\begin{equation}\label{eq:normal_form_potential}
 \mathbf{V}(r)\,:=\frac{1}{r}
 \begin{pmatrix}
 -1 & \nu \\
 -\nu & 1 
 \end{pmatrix}+
 \begin{pmatrix}
  0 & 1 \\ 1 & 0
 \end{pmatrix},
 \qquad \mathbf{E}\,:=\begin{pmatrix}
                                                                                 0 & -1 \\1 & 0
                                                                                \end{pmatrix}\,.
\end{equation}
We also introduce the Wronskian
\begin{equation}
 \mathbb{R}^+\ni r\mapsto W_r(v_0,v_\infty)\;:=\;\det\begin{pmatrix}
                             v_0^+(r) & v_\infty^+(r) \\
                             v_0^-(r) & v_\infty^-(r)
                            \end{pmatrix}.
\end{equation}
This is precisely the Wronskian $ W_r(\mathbf{E}v_0,\mathbf{E}v_\infty)$ of two fundamental solutions of the problem written in normal form, because $\det\mathbf{E}=1$ and hence $ W_r(\mathbf{E}v_0,\mathbf{E}v_\infty)=W_r(v_0,v_\infty)$. Moreover, since $\mathbf{V}(r)$ is traceless for any $r\in\mathbb{R}^+$, Liouville's theorem implies that $W_r(-\mathbf{E}v_0,-\mathbf{E}v_\infty)$ is constant, and so is also $W_r(v_0,v_\infty)$. Therefore,
\begin{equation}\label{eq:computation_of_wronskian}
\begin{split}
 W_r(v_0,v_\infty)\;&=\;\lim_{r\downarrow 0}W_r(v_0,v_\infty)\;=\; \textstyle{\frac{4^B B}{(1+\nu)\cos(B \pi)}} \;=:\;W_0^\infty\,.
\end{split}
\end{equation}
The limit in \eqref{eq:computation_of_wronskian} above follows straightforwardly from the asymptotics \eqref{eq:asymptotics_for_v0vinf} and from the expression \eqref{eq:def_qpm} for $q^\pm$. Clearly, $W_0^\infty\neq 0$. Then a standard application of the method of variation of constants for the differential problem \eqref{eq:normal_form} and the further transformation $f=\mathbf{E}y$ yields eventually
\begin{equation}
 f_\mathrm{part}(r)\;=\;\int_0^{+\infty}\!\!G(r,\rho)\,g(\rho)\,\ud \rho\,,
\end{equation}
where
\begin{equation}\label{eq:Green}
 G(r,\rho)\;:=\;
 \begin{cases}
  \frac{1}{\:W_0^\infty}\begin{pmatrix}
                    v^+_\infty(r) v^+_0(\rho) & v^+_\infty(r) v^-_0(\rho) \\
                    v^-_\infty(r) v^+_0(\rho) & v^-_\infty(r) v^-_0(\rho)
                   \end{pmatrix} & \textrm{if } 0<\rho<r \\
                   \\
  \frac{1}{\:W_0^\infty}\begin{pmatrix}
                    v^+_0(r) v^+_\infty(\rho) & v^+_0(r) v^-_\infty(\rho) \\
                    v^-_0(r) v^+_\infty(\rho) & v^-_0(r) v^-_\infty(\rho)
                   \end{pmatrix} & \textrm{if } 0<r<\rho\,.
 \end{cases}
\end{equation}

Next, we observe the following.

\begin{lem}\label{lem:RG_bounded}
 The integral operator $R_G$ on $L^2(\mathbb{R}^+,\mathbb{C},\ud r)$ with kernel $G(r,\rho)$ given by \eqref{eq:Green} is bounded and self-adjoint.
\end{lem}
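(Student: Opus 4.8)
The plan is to establish boundedness first and then observe that symmetry is immediate, so that $R_G$, being bounded and everywhere defined, is automatically self-adjoint. For the symmetry: $v_0$ and $v_\infty$ are real-valued (built via \eqref{eq:v0_vinf} from the real-valued $u_0,u_\infty$), $W_0^\infty\in\mathbb{R}\setminus\{0\}$, and \eqref{eq:Green} exhibits $G$ as a transposed rank-one outer product, $G(r,\rho)=(W_0^\infty)^{-1}v_\infty(r)\,v_0(\rho)^T$ for $0<\rho<r$ and $G(r,\rho)=(W_0^\infty)^{-1}v_0(r)\,v_\infty(\rho)^T$ for $0<r<\rho$. Hence $G(\rho,r)^T=G(r,\rho)$ for $r\neq\rho$, so the integral kernel of the formal adjoint is $\overline{G(\rho,r)}^T=G(r,\rho)$, i.e. $R_G$ is symmetric; once we know it is bounded and everywhere defined it is then self-adjoint.

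For boundedness I would first extract uniform pointwise bounds on $G$ from the short- and long-distance asymptotics \eqref{eq:asymptotics_for_v0vinf}, \eqref{eq:asymptotics_for_v0vinf_large_distances} together with the smoothness (hence local boundedness) of $v_0,v_\infty$ on $(0,\infty)$. Writing $r_{\wedge}:=\min(r,\rho)$, $r_{\vee}:=\max(r,\rho)$, the rank-one structure gives $|G(r,\rho)|=|W_0^\infty|^{-1}\|v_0(r_{\wedge})\|\,\|v_\infty(r_{\vee})\|$ (any matrix norm on the left, Euclidean on $\mathbb{C}^2$ on the right). Since $B\in(0,\tfrac12)$ is in force, one deduces: for $r,\rho\leqslant 1$, $|G(r,\rho)|\lesssim r_{\wedge}^{\,B}\,r_{\vee}^{-B}$; for $r_{\wedge}\leqslant 1\leqslant r_{\vee}$, $|G(r,\rho)|\lesssim r_{\wedge}^{\,B}\,r_{\vee}^{-B}\,e^{-r_{\vee}}$; and for $r,\rho\geqslant 1$, $|G(r,\rho)|\lesssim (r_{\wedge}/r_{\vee})^{B}e^{-(r_{\vee}-r_{\wedge})}\leqslant e^{-|r-\rho|}$, using respectively the near-zero bounds $\|v_0(r)\|\lesssim r^{B}$ and $\|v_\infty(r)\|\lesssim r^{-B}$, the large-distance decay $\|v_\infty(r)\|\lesssim r^{-B}e^{-r}$, and the large-distance growth $\|v_0(r)\|\lesssim r^{B}e^{r}$.

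With these bounds the boundedness of $R_G$ follows from a decomposition $R_G=R_1+R_2$, where $R_1$ is the integral operator with kernel $G$ restricted to $\{r_{\wedge}\leqslant 1\}$ and $R_2$ the one with kernel $G$ restricted to $\{r_{\wedge}>1\}$. On $\{r_{\wedge}\leqslant 1\}$ the first two bounds give $\iint|G(r,\rho)|^2\,\ud r\,\ud\rho<\infty$: split this region into $\{r,\rho\leqslant 1\}$, where $\iint_{0<r<\rho<1}r^{2B}\rho^{-2B}\,\ud r\,\ud\rho=\tfrac{1}{2B+1}\int_0^1\rho\,\ud\rho<\infty$ (and symmetrically in the other triangle), and $\{r\leqslant 1<\rho\}\cup\{\rho\leqslant 1<r\}$, where the factor $e^{-r_{\vee}}$ makes the integral in the large variable converge; hence $R_1$ is Hilbert--Schmidt, in particular bounded. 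On $\{r_{\wedge}>1\}$ the third bound dominates each entry of $G$ by $C\,e^{-|r-\rho|}$, so $R_2$ is controlled entrywise by convolution against $e^{-|\cdot|}\in L^1(\mathbb{R})$ and is bounded on $L^2$ by Young's inequality (equivalently a one-line Schur test). Thus $R_G$ is bounded, hence self-adjoint by the remark above.

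The whole thing is essentially careful bookkeeping once the expansions \eqref{eq:asymptotics_for_v0vinf}, \eqref{eq:asymptotics_for_v0vinf_large_distances} are available; the one structural observation that makes it painless is the dichotomy between the behaviour near the origin, where $G$ produces a (locally square-integrable, hence globally Hilbert--Schmidt after truncation) kernel, and the behaviour at infinity, where the exponential gap $e^{-|r-\rho|}$ turns $G$ into a convolution-dominated kernel. A single Schur test with weight $h(r)=\min(r^{B},1)$ would work as well, at the cost of the same four regime estimates, so the split into a Hilbert--Schmidt part (near the origin) and a convolution part (at infinity) is, to my taste, the cleaner route.
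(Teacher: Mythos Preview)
Your proof is correct and follows essentially the same strategy as the paper: split the kernel according to whether the variables are near the origin or at infinity, use the asymptotics \eqref{eq:asymptotics_for_v0vinf}--\eqref{eq:asymptotics_for_v0vinf_large_distances} to get pointwise bounds, show the near-origin pieces are Hilbert--Schmidt, and handle the far piece by a Schur/convolution argument with $e^{-|r-\rho|}$. The only cosmetic difference is that the paper cuts $(0,\infty)^2$ into the four quadrants $\{r\lessgtr 1\}\times\{\rho\lessgtr 1\}$, whereas you partition according to $r_\wedge\lessgtr 1$; your split is arguably tidier since it isolates the convolution region $\{r,\rho>1\}$ in one stroke, but the underlying estimates and conclusions are identical.
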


\begin{proof}
 For each $r,\rho\in\mathbb{R}^+$, $G(r,\rho)$ is the sum of the four terms
 \begin{equation}\label{eq:four_term_G}
 \begin{split}
  G^{++}(r,\rho)\;&:=\;G(r,\rho)\,\mathbf{1}_{(1,+\infty)}(r)\,\mathbf{1}_{(1,+\infty)}(\rho) \\
  G^{+-}(r,\rho)\;&:=\;G(r,\rho)\,\mathbf{1}_{(1,+\infty)}(r)\,\mathbf{1}_{(0,1)}(\rho) \\
  G^{-+}(r,\rho)\;&:=\;G(r,\rho)\,\mathbf{1}_{(0,1)}(r)\,\mathbf{1}_{(1,+\infty)}(\rho) \\
  G^{--}(r,\rho)\;&:=\;G(r,\rho)\,\mathbf{1}_{(0,1)}(r)\,\mathbf{1}_{(0,1)}(\rho)\,, 
 \end{split}
 \end{equation}
where $\mathbf{1}_J$ denotes the characteristic function of the interval $J\subset\mathbb{R}^+$, and correspondingly $R_G$ splits into the sum of four integral operators with kernel given by \eqref{eq:four_term_G}.

Now, for each entry of $G^{LM}(r,\rho)$, with $L,M\in\{+,-\}$, a point-wise estimate in $(r,\rho)$ can be derived from the short and large distance asymptotics for $v_0$ and $v_\infty$. For example, the entry $G^{++}_{11}(r,\rho)$ in the first row and first column of $G^{++}(r,\rho)$ is controlled as
\[
 \begin{split}
  |v^+_\infty(r) \,v^+_0(\rho)\,\mathbf{1}_{(1,+\infty)}(r)\,\mathbf{1}_{(1,+\infty)}(\rho)|\;&\lesssim\; e^{-r} \,e^\rho \, (\rho/r)^B\qquad\textrm{if }\,0<\rho<r \\
  |v^+_0(r)\, v^+_\infty(\rho)\,\mathbf{1}_{(1,+\infty)}(r)\,\mathbf{1}_{(1,+\infty)}(\rho)|\;&\lesssim\; e^r\,e^{-\rho}\, (r/\rho)^B \qquad\textrm{if }\,0<r<\rho\,,
 \end{split}
\]
because $v_0$ diverges exponentially and $v_\infty$ vanishes exponentially as $r\to+\infty$, \eqref{eq:asymptotics_for_v0vinf_large_distances}; thus,
\[
 |G^{++}_{11}(r,\rho)|\;\lesssim\;e^{-|r-\rho|}\,(\rho r)^B \,.
\]
In fact, the asymptotics for $v_0$ and $v_\infty$ are the same for both components, so we can also conclude that 
\[
 \|G^{++}(r,\rho)\|_{M_2(\mathbb{C})}\;\lesssim\;e^{-|r-\rho|}\, (\rho r)^B\, ,
\]
where $\|\cdot\|_{M_2(\mathbb{C})}$ denotes the matrix norm. The estimate of the other kernels is perfectly analogous, and we find
\begin{equation}\label{eq:matrix_estimates}
 \begin{split}
  \|G^{++}(r,\rho)\|_{M_2(\mathbb{C})}\;&\lesssim\;(r \rho)^B e^{-|r-\rho|}\,\mathbf{1}_{(1,+\infty)}(r)\,\mathbf{1}_{(1,+\infty)}(\rho) \\
  \|G^{+-}(r,\rho)\|_{M_2(\mathbb{C})}\;&\lesssim\;r^Be^{-\rho}\,\mathbf{1}_{(1,+\infty)}(r)\,\mathbf{1}_{(0,1)}(\rho) \\
  \|G^{-+}(r,\rho)\|_{M_2(\mathbb{C})}\;&\lesssim\;e^{-r}\rho^B\,\mathbf{1}_{(0,1)}(r)\,\mathbf{1}_{(1,+\infty)}(\rho) \\
  \|G^{--}(r,\rho)\|_{M_2(\mathbb{C})}\;&\lesssim\;(r^B\rho^{-B}+r^{-B}\rho^B)\,\mathbf{1}_{(0,1)}(r)\,\mathbf{1}_{(0,1)}(\rho)\,.
 \end{split}
\end{equation}

The last three estimates in \eqref{eq:matrix_estimates} show at once that the kernels $G^{+-}(r,\rho)$, $G^{-+}(r,\rho)$, and $G^{--}(r,\rho)$ are in $L^2(\mathbb{R}^+\times\mathbb{R}^+,M_2(\mathbb{C}),\ud r\,\ud\rho)$ and therefore the corresponding integral operators are Hilbert-Schmidt operators, hence bounded, on $L^2(\mathbb{R}^+,\mathbb{C}^2,\ud r)$. The first estimate in \eqref{eq:matrix_estimates} allows to conclude, by an obvious Schur test, that also the integral operator with kernel $G^{++}(r,\rho)$ is bounded on $L^2(\mathbb{R}^+,\mathbb{C}^2,\ud r)$. This proves the overall boundedness of $R_G$.

The self-adjointness of $R_G$ is clear from \eqref{eq:Green}: the adjoint $R_G^*$ of $R_G$ has kernel $\overline{G(\rho,r)^T}$, but $G$ is real-valued and $G(\rho,r)=G(r,\rho)$, thus proving that $R_G^*=R_G$.
\end{proof}

The integral operator $R_G$ has a relevant mapping property that is more directly read out from the following alternative representation.
If $f_\mathrm{part}=R_G\,g$, then
\begin{equation}\label{eq:fpart_alternative_repr}
 f_\mathrm{part}(r)\;=\;\Theta_\infty^{(g)}(r)\,v_0(r)+\Theta_0^{(g)}(r)\,v_\infty(r)\,,
\end{equation}
where
\begin{equation}\label{eq:fpart_alternative_repr-Thetas}
 \begin{split}
  \Theta_0^{(g)}(r)\;&:=\;\frac{1}{\:W_0^\infty}\int_0^r\langle{\,\overline{v_0(\rho)}}\,,\,g(\rho)\,\rangle_{\mathbb{C}^2}\,\ud \rho \\
    \Theta_\infty^{(g)}(r)\;&:=\;\frac{1}{\:W_0^\infty}\int_r^{+\infty}\!\langle{\,\overline{v_\infty(\rho)}}\,,\,g(\rho)\,\rangle_{\mathbb{C}^2}\,\ud \rho
 \end{split}
\end{equation}
and $W_0^\infty$ is the constant computed in \eqref{eq:computation_of_wronskian}. Indeed, from \eqref{eq:Green},
\[
\begin{split}
 f_\mathrm{part}(r)\;&=\;\int_0^{+\infty}\!G(r,\rho)\,g(\rho)\,\ud \rho \\
 &=\;\frac{1}{W_0^{\infty}}\begin{pmatrix} 
 v_\infty^+(r) \\ v_\infty^-(r)
 \end{pmatrix}\int_0^r\big(v_0^+(\rho)g^+(\rho)+v_0^-(\rho)g^-(\rho)\big)\,\ud \rho \\
 &\qquad +\frac{1}{W_0^{\infty}}\begin{pmatrix} 
 v_0^+(r) \\ v_0^-(r)
 \end{pmatrix}\int_r^{+\infty}\!\!\big(v_\infty^+(\rho)g^+(\rho)+v_\infty^-(\rho)g^-(\rho)\big)\,\ud \rho\,,
\end{split}
\]
that is, \eqref{eq:fpart_alternative_repr}.

\begin{lem}\label{RG_maps_in_energyform}
For every $g\in L^2(\mathbb{R}^+,\mathbb{C}^2)$ one has
\begin{equation}\label{eq:Rgg_finite_pot_energy}
 \int_0^{+\infty}\frac{\|(R_G\,g)(r)\|^2_{\mathbb{C}^2}}{r}\,\ud r\;<\;+\infty\,,
\end{equation}
i.e.,
\begin{equation}
 \mathrm{ran}\,R_G\;\subset\;\mathcal{D}[r^{-1}]\,.
\end{equation}
\end{lem}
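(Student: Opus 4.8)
\textbf{Proof plan for Lemma \ref{RG_maps_in_energyform}.}

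The plan is to use the alternative representation \eqref{eq:fpart_alternative_repr}, which expresses $R_G\,g$ as a linear combination of $v_0$ and $v_\infty$ with scalar coefficients $\Theta_\infty^{(g)}$ and $\Theta_0^{(g)}$ given by truncated integrals against $v_0$ and $v_\infty$ respectively. The key observation is that the singular part of $r^{-1}\|(R_G\,g)(r)\|_{\mathbb{C}^2}^2$ near $r=0$ is governed entirely by the term $\Theta_0^{(g)}(r)\,v_\infty(r)$, since $v_\infty(r)\sim \mathrm{const}\cdot r^{-B}$ while $v_0(r)=O(r^B)$ as $r\downarrow 0$ by \eqref{eq:asymptotics_for_v0vinf}. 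At large $r$, on the other hand, $v_\infty$ decays exponentially and the whole expression is easily integrable once one checks that $\Theta_0^{(g)}$ and $\Theta_\infty^{(g)}$ do not grow too fast; near infinity $v_0$ grows like $r^Be^r$ but it is multiplied by $\Theta_\infty^{(g)}(r)$, which by Cauchy-Schwarz against $v_\infty(\rho)\sim r^{-B}e^{-\rho}$ on $(r,\infty)$ is at most $O(e^{-r}r^{-B})$ times $\|g\|_{L^2}$, so $\Theta_\infty^{(g)}v_0$ is bounded in $L^2$ near infinity and certainly contributes a finite integral of $r^{-1}\|\cdot\|^2$ on, say, $(1,+\infty)$. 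So the genuine issue is the behaviour on $(0,1)$, and there only of the term $\Theta_0^{(g)}(r)\,v_\infty(r)$.

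First I would reduce the claim to showing $\int_0^1 r^{-1}\,|\Theta_0^{(g)}(r)|^2\,\|v_\infty(r)\|_{\mathbb{C}^2}^2\,\ud r<+\infty$, the remaining three cross/self terms being handled by the exponential decay of $v_\infty$ at infinity, the boundedness of $v_0$ by $O(r^B)$ at the origin, and the a priori bound $R_G\,g\in L^2$ from Lemma \ref{lem:RG_bounded}. Since $\|v_\infty(r)\|_{\mathbb{C}^2}^2\lesssim r^{-2B}$ on $(0,1)$, it suffices to prove
\begin{equation*}
 \int_0^1 r^{-1-2B}\,\Big|\int_0^r\langle\overline{v_0(\rho)},g(\rho)\rangle_{\mathbb{C}^2}\,\ud\rho\Big|^2\,\ud r\;<\;+\infty\,.
\end{equation*}
By Cauchy-Schwarz, $\big|\int_0^r\langle\overline{v_0(\rho)},g(\rho)\rangle\,\ud\rho\big|^2\le \big(\int_0^r\|v_0(\rho)\|_{\mathbb{C}^2}^2\,\ud\rho\big)\big(\int_0^r\|g(\rho)\|_{\mathbb{C}^2}^2\,\ud\rho\big)\lesssim r^{2B+1}\,\|g\|_{L^2}^2$, using $\|v_0(\rho)\|_{\mathbb{C}^2}^2\lesssim\rho^{2B}$ near $0$. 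This gives an integrand bounded by $r^{-1-2B}\cdot r^{2B+1}\,\|g\|_{L^2}^2 = \|g\|_{L^2}^2$, which is finite on $(0,1)$. A slightly sharper version using the local tail $\int_0^r\|g\|^2\to 0$ as $r\downarrow 0$ shows the integrand in fact tends to $0$, so integrability is comfortable.

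The only mildly delicate point — and the step I expect to require the most care — is bookkeeping the \emph{cross terms} and the $v_0$-heavy term near $r=0$: one must confirm that $\Theta_\infty^{(g)}(r)\,v_0(r)$ and the mixed contributions do not themselves produce a non-integrable $r^{-1}$ singularity. Here $v_0(r)=O(r^B)$ with $B>0$, so $\Theta_\infty^{(g)}(r)\,v_0(r)=O(r^B)\cdot\Theta_\infty^{(g)}(r)$; and $\Theta_\infty^{(g)}$ is bounded near $0$ (it is the tail integral $\int_r^{\infty}$, which at $r\downarrow 0$ converges to a finite constant because $v_\infty\in L^2$ near $0$ and $v_\infty$ decays exponentially at infinity, so $\langle\overline{v_\infty},g\rangle\in L^1(\mathbb{R}^+)$). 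Hence that term is $O(r^B)$ near the origin, and $r^{-1}\cdot O(r^{2B})$ is integrable on $(0,1)$ precisely because, although $B<\frac12$, we only need $2B>0$. Assembling these pieces — the harmless $\Theta_\infty v_0$ term, the exponential decay controlling everything on $(1,+\infty)$, and the Cauchy-Schwarz estimate on the $\Theta_0 v_\infty$ term on $(0,1)$ — yields \eqref{eq:Rgg_finite_pot_energy}, i.e. $\mathrm{ran}\,R_G\subset\mathcal{D}[r^{-1}]$.
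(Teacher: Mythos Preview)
Your proposal is correct and follows essentially the same route as the paper: both use the representation \eqref{eq:fpart_alternative_repr}--\eqref{eq:fpart_alternative_repr-Thetas}, bound $\Theta_0^{(g)}$ via Cauchy--Schwarz against $\|v_0\mathbf{1}_{(0,r)}\|_{L^2}\lesssim r^{B+1/2}$ and observe that $\Theta_\infty^{(g)}$ is bounded near $0$, so that the two summands vanish like $r^{1/2}$ and $r^B$ respectively and $r^{-1}\|R_Gg\|^2$ is integrable on $(0,1)$. The only difference is that the paper dispatches the tail on $(1,+\infty)$ in one line via $\int_1^\infty r^{-1}\|R_Gg\|^2\leqslant\|R_G\|^2\|g\|^2$ from Lemma~\ref{lem:RG_bounded}, whereas you work out the exponential asymptotics explicitly; your argument is fine but unnecessarily detailed there.
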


\begin{proof}
It suffices to prove the finiteness of the integral in \eqref{eq:Rgg_finite_pot_energy} only for $r\in(0,1)$, since $\int_1^{+\infty}r^{-1}\|(R_G\,g)(r)\|^2_{\mathbb{C}^2}\,\ud r\leqslant\|R_G\|^{2}\|g\|^2_{L^2(\mathbb{R}^+,\mathbb{C}^+)}$\,. Let us represent $f=R_G\,g\in\mathrm{ran}\,R_G$ as in \eqref{eq:fpart_alternative_repr}-\eqref{eq:fpart_alternative_repr-Thetas}. For $r\in(0,1)$ one has
\[
 \begin{split}
   |\Theta_0^{(g)}(r)|\;&\leqslant\;|W_0^\infty|^{-1}\,\|v_0\,\mathbf{1}_{(0,r)}\|_{L^2(\mathbb{R}^+,\mathbb{C}^2)}\,\|g\|_{L^2(\mathbb{R}^+,\mathbb{C}^2)}\;\leqslant\;C_{g,\nu}\,r^{B+\frac{1}{2}}  \\
   |\Theta_\infty^{(g)}(r)|\;&\leqslant\;|W_0^\infty|^{-1}\,\|v_\infty\,\mathbf{1}_{(r,\infty)}\|_{L^2(\mathbb{R}^+,\mathbb{C}^2)}\,\|g\|_{L^2(\mathbb{R}^+,\mathbb{C}^2)}\;\leqslant\;C_{g,\nu}
 \end{split}
\]
for some constant $C_{g,\nu}>0$ depending on $g$ and $\nu$ only, having used the short distance asymptotics \eqref{eq:asymptotics_for_v0vinf} for $v_0$ and $v_\infty$.
%
%
%
%
%
%
Combining now the above bounds again with \eqref{eq:asymptotics_for_v0vinf} we see that on the interval $(0,1)$ the functions $r\mapsto \Theta_0^{(g)}(r)\,v_\infty(r)$ and $r\mapsto \Theta_\infty^{(g)}(r)\,v_0(r)$ are continuous and vanish when $r\to 0$, respectively, as $r^{1/2}$ and $r^B$, which makes the integral $\int_0^{1}r^{-1}\|(R_G\,g)(r)\|^2_{\mathbb{C}^2}\,\ud r$ finite.
%
%
%
\end{proof}

Combining Lemmas  \ref{lem:RG_bounded} and \ref{RG_maps_in_energyform} together, we are now in the condition to prove Proposition \ref{prop:SD}.

\medskip

\begin{proof}[Proof of Proposition \ref{prop:SD}]~

(i) and (ii). The integral operator $R_G$ on $L^2(\mathbb{R}^+,\mathbb{C})$ with kernel given by \eqref{eq:Green} is bounded and self-adjoint owing to Lemma \ref{lem:RG_bounded}, and by construction satisfies $\widetilde{S}\,R_G\,g=g$ $\forall g\in L^2(\mathbb{R}^+,\mathbb{C})$. Therefore, there is one self-adjoint extension $\mathscr{S}$ of $S_\mathrm{min}=\overline{S}$ such that $\mathscr{S}R_G\,g=g$ $\forall g\in L^2(\mathbb{R}^+,\mathbb{C})$, whence, by self-adjointness, also $R_G\mathscr{S}h=h$ $\forall h\in\mathcal{D}(\mathscr{S})$. Thus, $R_G=\mathscr{S}^{-1}$ for some invertible self-adjoint realisation $\mathscr{S}$ of $S$. Because of Lemma \ref{RG_maps_in_energyform}, the space $\mathcal{D}(\mathscr{S})=\mathrm{ran}\,R_G$ is contained in the potential energy form domain $\mathcal{D}[r^{-1}]$: owing to Theorem \ref{thm:recap}(ii) then $\mathscr{S}$ must be the reduction to the subspace $\cH_{\frac{1}{2},1}$ of the distinguished self-adjoint extension of the Dirac-Coulomb operator $H$: we shall denote it with $S_D$. As such, $S_D$ is the unique self-adjoint realisation of $S$ satisfying the property \eqref{eq:SD_uniqueness_properties}, it is invertible, and its kernel is precisely given by \eqref{eq:Green}.

(iii) The decompositions \eqref{eq:decomp_DSD} and \eqref{eq:Krein_decomp_formula} are canonical, once a self-adjoint extension of $S$ is given with everywhere defined and bounded inverse: see, for instance, \cite[(2.4) and (2.5)]{GMO-KVB2017}.

(iv) From the previous discussion, $\Phi=u_\infty=v_\infty$ and $S_D^{-1}\Phi=R_Gv_\infty$. A closed expression for the latter function is given by \eqref{eq:fpart_alternative_repr} above, which now reads
\[
 S_D^{-1}\Phi\;=\;\Theta_\infty^{(v_\infty)}(r)\,v_0(r)+\Theta_0^{(v_\infty)}(r)\,v_\infty(r)\,.
\]
From \eqref{eq:fpart_alternative_repr-Thetas} and \eqref{eq:asymptotics_for_v0vinf} we deduce
\[
 \begin{split}
  |\Theta_0^{(v_\infty)}(r)|\;&\leqslant\;|W_0^\infty|^{-1}\int_0^r\big|\langle{\,\overline{v_0(\rho)}}\,,\,v_\infty(\rho)\,\rangle_{\mathbb{C}^2}\big|\,\ud \rho \\
  &\lesssim\;\int_0^r(\rho^B+O(\rho^{1-B}))(\rho^{-B}+O(\rho^{B}))\,\ud \rho \\
  &=\;r+o(r)\quad\textrm{as }\;r\downarrow 0
 \end{split}
\]
and 
 \[
 \begin{split}
  \Theta_\infty^{(v_\infty)}(r)\;&=\;\frac{1}{\,W_0^\infty}\int_r^{+\infty}\langle{\,\overline{v_\infty(\rho)}}\,,\,v_\infty(\rho)\,\rangle_{\mathbb{C}^2}\,\ud \rho \\
  &=\;\frac{1}{\,W_0^\infty}\,\|v_\infty\|_{L^2(\mathbb{R}^+,\mathbb{C}^2)}^2(1+o(1))\quad\textrm{as }\;r\downarrow 0\,.
 \end{split}
\]
Therefore, using again the short distance asymptotics \eqref{eq:asymptotics_for_v0vinf},
\[\tag{*}
 \begin{split}
  (S_D^{-1}\Phi)(r)\;=\;\frac{\|v_\infty\|_{L^2(\mathbb{R}^+,\mathbb{C}^2)}^2}{\,W_0^\infty}\,
  \begin{pmatrix}
   q^+\! \\ q^-\!
  \end{pmatrix}
  r^{B}+o(r^{B})
 \end{split}
\]
where $q^{\pm}$ is given by \eqref{eq:def_qpm}.
Upon setting
\begin{equation}\label{ed:defppm}
 p^{\pm}\;:=\; q^{\pm}\,(W_0^\infty)^{-1}\|v_\infty\|_{L^2(\mathbb{R}^+,\mathbb{C}^2)}^2 
\end{equation}
we then obtain the leading term of \eqref{SDPhi_vanishing}. The remainder is in fact smaller than $o(r^B)$. This can be seen by comparing the above asymptotics for $S_D^{-1}\Phi$ with the expansion \eqref{eq:decomp_a0ainf_b0binf} established in the next Section (which is valid because $S_D^{-1}\Phi\in\mathcal{D}(S_D)\subset\mathcal{D}(S^*)$), namely
\[
 S_D^{-1}\Phi\;=\;a_0^{(S_D^{-1}\Phi)}\,v_0+a_\infty^{(S_D^{-1}\Phi)}\,v_\infty+b_\infty^{(S_D^{-1}\Phi)}\,v_0+ b_0^{(S_D^{-1}\Phi)}\,v_\infty\,.
\]
For the latter, we have the asymptotics
\[\tag{**}
\begin{split}
 \!\!(S_D^{-1}\Phi)(r)\;&=\;\mathbf{c}_0 \,a_0^{(S_D^{-1}\Phi)}\,(r^B+O(r^{1-B}))+\mathbf{c}_\infty\, a_\infty^{(S_D^{-1}\Phi)}\,(r^{-B}+O(r^B)) \\
 &\qquad +o(r^{1/2})\quad\textrm{as }\;r\downarrow 0\,.
\end{split}
\]
as follows from \eqref{eq:asymptotics_for_v0vinf} and \eqref{eq:bvbg_asympt} for some non-zero constants $\mathbf{c}_0,\mathbf{c}_\infty\in\mathbb{C}^2$. In order for (*) and (**) to be compatible, necessarily $a_\infty^{(S_D^{-1}\Phi)}=0$. This implies that after the leading order $r^B$ there comes a remainder $o(r^{1/2})$, thus completing the proof of \eqref{SDPhi_vanishing}.
\end{proof}

\section{Operator closure $\overline{S}$}\label{sec:closure}

This Section is devoted to the proof of Proposition \ref{prop:Sclosure}. In fact we will prove a stronger result of characterisation of $\mathcal{D}(\overline{S})$, namely Proposition \ref{prop:characterisationDS} below, from which Proposition \ref{prop:Sclosure} follows as a corollary.

Let us start with another useful representation of $\mathcal{D}(S^*)$. It is analogous to the operator-theoretic decomposition \eqref{eq:decomp_DSD}, but its formulation (and proof) is more in the ODE spirit.

\begin{lem}\label{lem:decomp_a0ainf_b0binf}
 For each $g\in\mathcal{D}(S^*)$ there exist, uniquely determined, constants $a_0^{(g)},a_\infty^{(g)}\in\mathbb{C}$ and functions
 \begin{equation}\label{eq:b0binf}
  \begin{split}
   b_0^{(g)}(r)\;&:=\;\frac{1}{W_0^\infty}\int_0^r\langle\, \overline{v_0(\rho)}\,,\,(S^*\!g)(\rho)\,\rangle_{\mathbb{C}^2}\,\ud\rho \\
   b_\infty^{(g)}(r)\;&:=\;-\frac{1}{W_0^\infty}\int_0^r\langle \,\overline{v_\infty(\rho)}\,,\,(S^*\!g)(\rho)\,\rangle_{\mathbb{C}^2}\,\ud\rho
  \end{split}
 \end{equation}
 on $\mathbb{R}^+$ such that
 \begin{equation}\label{eq:decomp_a0ainf_b0binf}
  g\;=\;a_0^{(g)}\,v_0+a_\infty^{(g)}\,v_\infty+b_\infty^{(g)}\,v_0+ b_0^{(g)}\,v_\infty\,,
 \end{equation}
 where $v_0$ and $v_\infty$ are the two linearly independent solutions \eqref{eq:v0_vinf} to the homogeneous  problem $\widetilde{S}v=0$ (recall that they are real and smooth on $\mathbb{R}^+$) and $W_0^\infty$ is the constant computed in \eqref{eq:computation_of_wronskian}. Moreover, both $ b_0^{(g)}(r)$ and $b_\infty^{(g)}(r)$ vanish as $r\downarrow 0$, and
 \begin{equation}\label{eq:bvbg_asympt}
  b_\infty^{(g)}(r)\,v_0(r)+ b_0^{(g)}(r)\,v_\infty(r)\;=\;o(r^{1/2})\qquad\textrm{as }\;r\downarrow 0\,.
 \end{equation}
\end{lem}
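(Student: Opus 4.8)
The plan is to prove the decomposition \eqref{eq:decomp_a0ainf_b0binf} by the method of variation of constants applied to the inhomogeneous ODE $\widetilde S g = S^*g =: h$, where $h \in L^2(\mathbb{R}^+,\mathbb{C}^2)$ for $g \in \mathcal{D}(S^*)$, and then to extract the short-distance asymptotics of the particular solution thus produced. First I would observe that since $g$ solves $\widetilde S g = h$ in the distributional sense and $h$ is locally $L^2$, the general solution of this inhomogeneous equation on $\mathbb{R}^+$ is $g = A_0 v_0 + A_\infty v_\infty + g_{\mathrm{part}}$ for constants $A_0, A_\infty \in \mathbb{C}$, exactly as in \eqref{eq:ODE-decomp}. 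The particular solution produced by variation of constants, with the fundamental system $(v_0,v_\infty)$ of $\widetilde S v = 0$ (real and smooth on $\mathbb{R}^+$, with known short-distance asymptotics \eqref{eq:asymptotics_for_v0vinf}) and constant Wronskian $W_0^\infty$ from \eqref{eq:computation_of_wronskian}, takes the form $g_{\mathrm{part}}(r) = b_\infty^{(g)}(r)\,v_0(r) + b_0^{(g)}(r)\,v_\infty(r)$ with $b_0^{(g)}, b_\infty^{(g)}$ defined exactly by \eqref{eq:b0binf} --- the integrals being taken from $0$, which is the natural base point near the singular endpoint. This is the same computation as the one yielding \eqref{eq:fpart_alternative_repr}-\eqref{eq:fpart_alternative_repr-Thetas} for $R_G$; here one simply takes the lower limit of integration to be $0$ in both pieces rather than splitting at $r$. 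Setting $a_0^{(g)} := A_0$, $a_\infty^{(g)} := A_\infty$ then gives \eqref{eq:decomp_a0ainf_b0binf}, and uniqueness of $a_0^{(g)}, a_\infty^{(g)}$ follows from linear independence of $v_0, v_\infty$ together with the fact that $b_0^{(g)}, b_\infty^{(g)}$ are pinned down by \eqref{eq:b0binf} once one insists that they vanish at $0$.

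The substantive part is the estimate \eqref{eq:bvbg_asympt} and the claim that $b_0^{(g)}(r), b_\infty^{(g)}(r) \to 0$ as $r \downarrow 0$. For this I would use Cauchy--Schwarz on the defining integrals together with the square-integrability of $h = S^*g$ near $0$ and the short-distance asymptotics \eqref{eq:asymptotics_for_v0vinf}. Since $v_0(r) = O(r^B)$ and $v_\infty(r) = O(r^{-B})$ with $B \in (0,\tfrac12)$, one has $v_0 \in L^2(0,1)$ (indeed $\|v_0 \mathbf 1_{(0,r)}\|_{L^2} \lesssim r^{B+1/2}$) and $v_\infty \in L^2(0,1)$ (indeed $\|v_\infty \mathbf 1_{(0,r)}\|_{L^2} \lesssim r^{1/2 - B}$, using $2B < 1$). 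Hence
\[
 |b_0^{(g)}(r)| \;\lesssim\; \|v_0 \mathbf 1_{(0,r)}\|_{L^2}\,\|h\|_{L^2} \;\lesssim\; r^{B+1/2},\qquad
 |b_\infty^{(g)}(r)| \;\lesssim\; \|v_\infty \mathbf 1_{(0,r)}\|_{L^2}\,\|h\|_{L^2} \;\lesssim\; r^{1/2-B},
\]
both of which vanish as $r \downarrow 0$. Multiplying back by the asymptotics of $v_0$ and $v_\infty$,
\[
 |b_\infty^{(g)}(r)\,v_0(r)| \;\lesssim\; r^{1/2 - B}\cdot r^{B} \;=\; r^{1/2},\qquad
 |b_0^{(g)}(r)\,v_\infty(r)| \;\lesssim\; r^{B+1/2}\cdot r^{-B} \;=\; r^{1/2},
\]
so each term is $O(r^{1/2})$; to upgrade $O$ to $o$ one notes that $\|h\mathbf 1_{(0,r)}\|_{L^2} \to 0$ as $r \downarrow 0$ (dominated convergence), which turns every one of the above $\lesssim$ into a genuine $o$ of the stated power, giving \eqref{eq:bvbg_asympt}.

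The main obstacle I anticipate is purely bookkeeping: making sure the variation-of-constants formula with base point $0$ is legitimate, i.e. that the improper integrals in \eqref{eq:b0binf} converge (they do, by the Cauchy--Schwarz bounds above, since both $v_0$ and $v_\infty$ are in $L^2(0,1)$) and that the resulting $g_{\mathrm{part}}$ really is an $L^2$ solution with the right distributional derivative, so that the difference $g - g_{\mathrm{part}}$ is a genuine (classical, hence $L^2_{\mathrm{loc}}$) solution of the homogeneous equation and thus a combination of $v_0, v_\infty$ with constant coefficients. Once the two-sided $L^2$-integrability of the fundamental solutions near $0$ is in hand --- which is exactly where $B < \tfrac12$ enters decisively --- the rest is the routine Cauchy--Schwarz estimate sketched above. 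I would also remark that the identification of $a_0^{(g)}, a_\infty^{(g)}$ and $b_0^{(g)}, b_\infty^{(g)}$ here is the ODE-side shadow of the operator-theoretic splitting \eqref{eq:decomp_DSD}--\eqref{eq:Krein_decomp_formula}: the ``$v_0$-regular plus $v_\infty$-regular'' part $b_\infty^{(g)} v_0 + b_0^{(g)} v_\infty$ is $o(r^{1/2})$ and will land in $\mathcal D(S_D)$, while the leftover $a_0^{(g)} v_0$ contributes the $r^B$ tail and $a_\infty^{(g)} v_\infty$ the singular $r^{-B}$ tail, matching $\ker S^*$.
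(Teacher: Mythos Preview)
Your proposal is correct and follows essentially the same route as the paper's proof: variation of constants for $\widetilde S g = h$ with fundamental system $(v_0,v_\infty)$, followed by Cauchy--Schwarz estimates on the coefficients $b_0^{(g)},b_\infty^{(g)}$ combined with the short-distance asymptotics \eqref{eq:asymptotics_for_v0vinf} and $\|h\,\mathbf 1_{(0,r)}\|_{L^2}\to 0$ to obtain the $o(r^{1/2})$ bound. The only cosmetic difference is that the paper first writes the particular solution as $R_G h$ via \eqref{eq:fpart_alternative_repr}--\eqref{eq:fpart_alternative_repr-Thetas} (one integral from $0$ to $r$, the other from $r$ to $+\infty$) and then shifts the upper limit of $\Theta_\infty^{(h)}$ down to $0$, absorbing the resulting constant into $a_0^{(g)}$, whereas you take base point $0$ in both integrals from the outset; the resulting constants $a_0^{(g)},a_\infty^{(g)}$ are of course the same.
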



\begin{proof}
Let $h:=S^*g=\widetilde{S}g$. Then, as already argued in \eqref{eq:ODE-decomp} and \eqref{eq:fpart_alternative_repr}-\eqref{eq:fpart_alternative_repr-Thetas}, $g$ is expressed in terms of $h$ as
\[\tag{*}
 g\;=\;A_0\,v_0+A_\infty\,v_\infty+\Theta^{(h)}_\infty\,v_0+ \Theta_0^{(h)}\,v_\infty
\]
for some $A_0,A_\infty\in\mathbb{C}$ that are now uniquely identified by $g$. From \eqref{eq:fpart_alternative_repr-Thetas} and \eqref{eq:b0binf} we see that
\[
 \begin{split}
  \Theta_\infty^{(h)}(r)\;&=\;b_\infty^{(g)}(r) \\
  \Theta_0^{(h)}(r)\;&=\;-\frac{1}{\:W_0^\infty}\int_0^r\langle{\,\overline{v_0(\rho)}}\,,\,(S^*\!g)(\rho)\,\rangle_{\mathbb{C}^2}\,\ud \rho \\
  &= \;b_0^{(g)}(r)-\frac{1}{\:W_0^\infty}\int_0^{+\infty}\!\!\langle{\,\overline{v_0(\rho)}}\,,\,(S^*\!g)(\rho)\,\rangle_{\mathbb{C}^2}\,\ud \rho\,.
 \end{split}
\]
Then (*) implies \eqref{eq:decomp_a0ainf_b0binf} at once, upon setting 
\[
 \begin{split}
  a_0^{(g)}\;&:=\;A_0-\frac{1}{\:W_0^\infty}\int_0^{+\infty}\!\!\langle{\,\overline{v_0(\rho)}}\,,\,(S^*\!g)(\rho)\,\rangle_{\mathbb{C}^2}\,\ud \rho \\
  a_\infty^{(g)}\;&:=\;A_\infty
 \end{split}
\]
Observe that the constant added above to $A_0$ is finite and bounded by $|W_0^\infty|^{-1}\|v_0\|_{L^2(\mathbb{R}^+,\mathbb{C}^2)}\|S^*\!g\|_{L^2(\mathbb{R}^+,\mathbb{C}^2)}$.
As for the proof of \eqref{eq:bvbg_asympt}, by means of the short distance asymptotics \eqref{eq:asymptotics_for_v0vinf} for $v_0$ and $v_\infty$ we find
\[
 \begin{split}
  |b_\infty^{(g)}(r)\,v_0(r)|\;&\lesssim\;r^{-B}\!\int_0^{r}\rho^B\|g(\rho)\|_{\mathbb{C}^2}\,\ud \rho\;\leqslant\;\int_0^{r}\|g(\rho)\|_{\mathbb{C}^2}\,\ud \rho \\
  &\;\leqslant\;r^{1/2}\,\|g\|_{L^2([0,r],\mathbb{C}^2)}\;=\;o(r^{1/2})
 \end{split}
\]
and 
\[
 \begin{split}
  |b_0^{(g)}(r)\,v_\infty(r)|\;&\lesssim\;r^{B}\!\int_0^{r}\rho^{-B}\|g(\rho)\|_{\mathbb{C}^2}\,\ud \rho\;\leqslant\;r^B\|\rho^{-B}\|_{L^2[0,r]}\|g\|_{L^2([0,r],\mathbb{C}^2)} \\
  &\;\lesssim\;r^{1/2}\,\|g\|_{L^2([0,r],\mathbb{C}^2)}\;=\;o(r^{1/2})\,,
 \end{split}
\]
and \eqref{eq:bvbg_asympt} then follows.
\end{proof}

The next preparatory step is to introduce, for later convenience, the Wronskian of any two square-integrable functions,
\begin{equation}\label{eq:def_wronskian}
 \mathbb{R}^+\ni r\mapsto W_r(\psi,\phi)\;:=\;\det
 \begin{pmatrix}
  \psi^{+}(r) & \phi^+(r) \\
  \psi^{-}(r) & \phi^-(r)
 \end{pmatrix},\quad\psi,\phi\in L^2(\mathbb{R}^+,\mathbb{C}^2)\,,
\end{equation}
and the boundary form for any two functions in $\mathcal{D}(S^*)$,
\begin{equation}\label{eq:boundary_form}
 \omega(g,h)\;:=\;\langle S^*g,h\rangle-\langle g,S^*h\rangle\,,\qquad g,h\in\mathcal{D}(S^*)\,.
\end{equation}
The boundary form is antisymmetric, i.e.,
\begin{equation}\label{eq:omega_antisymm}
 \omega(h,g)\;=\;-\overline{\omega(h,g)}\,,
\end{equation}
and it is related to the Wronskian by
\begin{equation}\label{eq:omegaWr}
 \omega(g,h)\;=\;-\lim_{r\downarrow 0}W_r(\overline{g},h)\,.
\end{equation}
Indeed, using $\widetilde{S}=\mathbf{E}\,{\textstyle\frac{\ud}{\ud r}}+\mathbf{V}(r)$ from \eqref{eq:normal_form}-\eqref{eq:normal_form_potential}, one has
\[
\begin{split}
 \omega(g,h)\;&=\;\int_0^{+\infty}\!\!\ud r\,\big( \langle (\widetilde{S}g)(r),h(r)\rangle_{\mathbb{C}^2}-\langle g(r),(\widetilde{S}h)(r) \rangle_{\mathbb{C}^2} \big) \\
 &=\;\int_0^{+\infty}\!\!\ud r\,\big( \langle \mathbf{E}g'(r),h(r)\rangle_{\mathbb{C}^2}-\langle g(r),\mathbf{E} h'(r) \rangle_{\mathbb{C}^2} \big) \\
 &=\;\lim_{r\downarrow 0}\,\big(\,\overline{g^-(r)}\,h^+(r)-\overline{g^+(r)}\,h^-(r)\big)\;=\;-\lim_{r\downarrow 0}W_r(\overline{g},h)\,.
\end{split}
\]

It is also convenient to introduce the (two-dimensional) space of solutions to the differential problem $\widetilde{S}v=0$,
\begin{equation}
 \mathcal{L}\;:=\;\{v:\mathbb{R}^+\to\mathbb{C}^2\,|\,\widetilde{S}v=0\}\;=\;\mathrm{span}\{v_0,v_\infty\}\,,
\end{equation}
As well known, $r\mapsto W_r(u,v)$ is constant whenever $u,v\in\mathcal{L}$, and this constant is zero if and only if $u$ and $v$ are linearly dependent. It will be important also to keep into account that any $v\in\mathcal{L}$ is square-integrable around $r=0$, as determined in \eqref{eq:asymptotics_for_v0vinf}.

\begin{lem}\label{lem:Lv_vanishes_DS}
For given $v\in\mathcal{L}$, 
\begin{equation}\label{eq:def_Lv}
 \begin{split}
  & L_v:\mathcal{D}(S^*)\to\mathbb{C} \\
  & \;\;\quad\qquad g\longmapsto L_v(g)\;:=\;\lim_{r\downarrow 0}W_r(\overline{v},g)
 \end{split}
\end{equation}
defines a linear functional on $\mathcal{D}(S^*)$ which vanishes on $\mathcal{D}(\overline{S})$.
\end{lem}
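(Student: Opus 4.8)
The plan is to establish two things: that $L_v$ is well-defined and linear on $\mathcal{D}(S^*)$, and that it annihilates $\mathcal{D}(\overline{S})$. The linearity, once well-definedness is granted, is immediate from the bilinearity of the Wronskian determinant in its two column arguments and the linearity of the limit. So the crux is to show the limit $\lim_{r\downarrow 0}W_r(\overline{v},g)$ exists (and is finite) for every $g\in\mathcal{D}(S^*)$ and every $v\in\mathcal{L}$, and then to identify it as zero on the closure of the minimal domain.

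First I would observe that the Wronskian of $\overline{v}$ and $g$ can be computed using the expansions available from the preceding results. Since $v\in\mathcal{L}=\mathrm{span}\{v_0,v_\infty\}$, by bilinearity it suffices to treat $v=v_0$ and $v=v_\infty$ separately; both are real, so $\overline{v}=v$. For a generic $g\in\mathcal{D}(S^*)$ I would use the decomposition \eqref{eq:decomp_a0ainf_b0binf} from Lemma \ref{lem:decomp_a0ainf_b0binf}, namely $g=a_0^{(g)}v_0+a_\infty^{(g)}v_\infty+b_\infty^{(g)}v_0+b_0^{(g)}v_\infty$. The Wronskian $W_r(v,\cdot)$ kills the terms proportional to $v$ itself and the other ``pure'' solution contributes a constant multiple of $W_r(v_0,v_\infty)=W_0^\infty$; the remaining contributions involve the functions $b_0^{(g)}$, $b_\infty^{(g)}$, which by \eqref{eq:b0binf} vanish as $r\downarrow 0$, and by \eqref{eq:bvbg_asympt} the combination $b_\infty^{(g)}v_0+b_0^{(g)}v_\infty=o(r^{1/2})$. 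Using the short-distance asymptotics \eqref{eq:asymptotics_for_v0vinf} for $v_0,v_\infty$ (which behave like $r^B$ and $r^{-B}$ respectively with $B\in(0,\frac12)$), one checks that $W_r(v,b_\infty^{(g)}v_0+b_0^{(g)}v_\infty)\to 0$: in each case the product of the explicit power of $r$ coming from $v$ with the $o(r^{1/2})$ (and its derivative, controlled similarly via the ODE $\widetilde{S}g=h\in L^2$) tends to zero because $B<\frac12$. Hence $L_v(g)=a_0^{(g)}W_r$-type constant exists and is finite; concretely $L_{v_\infty}(g)$ is a constant multiple of $a_0^{(g)}$ and $L_{v_0}(g)$ of $a_\infty^{(g)}$, up to the sign and the value $W_0^\infty$.

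For the vanishing on $\mathcal{D}(\overline{S})$, I would argue as follows. If $f\in\mathcal{D}(\overline{S})$ then by Proposition \ref{prop:Sclosure} (or directly by the bound $f(r)=o(r^{1/2})$ as $r\downarrow 0$, which in this context can also be read off from Lemma \ref{lem:decomp_a0ainf_b0binf} applied to $f$, where the ``free'' coefficients $a_0^{(f)},a_\infty^{(f)}$ must vanish since $f$ is a limit in graph norm of compactly supported functions whose Wronskian pairings with $v_0,v_\infty$ at $0$ are zero). Then $W_r(\overline{v},f)$ involves $v$ — behaving like a fixed power $r^{\pm B}$ — paired against $f=o(r^{1/2})$ and $f'$; since $B<\frac12$, both $r^{-B}\cdot o(r^{1/2})\to 0$ and the derivative term vanishes likewise (the relation $\mathbf{E}f'=h-\mathbf{V}f$ from \eqref{eq:normal_form} expresses $f'$ in terms of $f$ and $h=\overline{S}f\in L^2$, giving the needed control near $0$). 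Therefore $L_v(f)=0$. Alternatively, and more cleanly, one notes that for $\varphi\in C^\infty_0(\mathbb{R}^+,\mathbb{C}^2)=\mathcal{D}(S)$ the function $\varphi$ vanishes identically near $r=0$, so $W_r(\overline{v},\varphi)=0$ for small $r$ and hence $L_v(\varphi)=0$; since $L_v$ is continuous with respect to the graph norm of $S^*$ (which on $\mathcal{D}(\overline{S})$ coincides with that of $\overline{S}$) — continuity being exactly the content of the estimates just described, i.e. $|L_v(g)|\lesssim \|g\|_{L^2}+\|S^*g\|_{L^2}$ — it extends to $0$ on the graph-norm closure $\mathcal{D}(\overline{S})$.

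The main obstacle I anticipate is establishing the continuity/convergence estimate cleanly, i.e. bounding $W_r(\overline{v},g)$ and its limiting behaviour uniformly in terms of $\|g\|_{L^2}$ and $\|S^*g\|_{L^2}$. This requires handling the derivative entries in the Wronskian, which are not themselves in $L^2$ a priori; the remedy is to eliminate derivatives via the first-order system $\mathbf{E}g' = S^*g - \mathbf{V}(r)g$ from \eqref{eq:normal_form}, so that $W_r(\overline{v},g)$ is rewritten purely in terms of $v$, $v'$, $g$ and $S^*g$ (and $v'$ is explicit and smooth), after which the short-distance asymptotics \eqref{eq:asymptotics_for_v0vinf} and the crucial inequality $B<\frac12$ close the argument. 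Once that bookkeeping is done, everything else is routine.
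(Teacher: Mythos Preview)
Your argument for well-definedness is essentially the paper's, but you are making it harder than it is because of a misreading of the Wronskian. In this paper $W_r(\psi,\phi)$ is the $2\times 2$ determinant of the \emph{spinor components} $\psi^\pm,\phi^\pm$ (see \eqref{eq:def_wronskian}); no derivatives appear. So your ``main obstacle'' (handling derivative entries, eliminating them via $\mathbf{E}g'=S^*g-\mathbf{V}g$) is a non-issue. With this in mind, the $b$-terms are disposed of instantly by pulling out the scalar factor:
\[
W_r\big(v_0,\,b_0^{(g)}v_\infty\big)=b_0^{(g)}(r)\,W_r(v_0,v_\infty)=b_0^{(g)}(r)\,W_0^\infty\to 0,\qquad W_r\big(v_0,\,b_\infty^{(g)}v_0\big)=b_\infty^{(g)}(r)\,W_r(v_0,v_0)=0,
\]
using only that $b_0^{(g)},b_\infty^{(g)}$ vanish at $0$ (Lemma~\ref{lem:decomp_a0ainf_b0binf}); no asymptotics for $v_0,v_\infty$ or the condition $B<\tfrac12$ are needed here. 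You then correctly obtain $L_{v_0}(g)=a_\infty^{(g)}W_0^\infty$ and $L_{v_\infty}(g)=-a_0^{(g)}W_0^\infty$.

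For the vanishing on $\mathcal{D}(\overline{S})$, however, both of your routes have problems. Route (a) invokes Proposition~\ref{prop:Sclosure} (the $o(r^{1/2})$ decay) or, equivalently, the claim that $a_0^{(f)}=a_\infty^{(f)}=0$ for $f\in\mathcal{D}(\overline{S})$; but in the paper's logical order these facts are consequences of Proposition~\ref{prop:characterisationDS}, whose proof \emph{uses} the present lemma --- so this is circular. Route (b) (density of $C^\infty_0$ plus graph-norm continuity of $L_v$) is sound in principle, but your justification of continuity rests on the derivative estimates that, as noted, are not actually relevant; you have not given an independent bound $|L_v(g)|\lesssim\|g\|+\|S^*g\|$. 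The paper closes exactly this gap with a cutoff trick that in effect \emph{proves} the continuity you want: pick $\chi\in C^\infty_0([0,\infty))$ with $\chi\equiv 1$ near $0$, check that $v\chi\in\mathcal{D}(S^*)$ (since $\widetilde{S}(v\chi)=\mathbf{E}v\chi'\in L^2$), observe $L_v=L_{v\chi}$, and then use the boundary-form identity \eqref{eq:omegaWr} to write
\[
L_v(f)=-\omega(v\chi,f)=\langle v\chi,S^*f\rangle-\langle S^*(v\chi),f\rangle=\langle v\chi,\overline{S}f\rangle-\langle v\chi,\overline{S}f\rangle=0
\]
for $f\in\mathcal{D}(\overline{S})=\mathcal{D}((S^*)^*)$. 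This is both the missing continuity estimate and the vanishing statement in one stroke.
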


\begin{proof}
The linearity of $L_v$ is obvious, and the finiteness of $L_v(g)$ for $g\in\mathcal{D}(S^*)$ is checked as follows. Let us decompose $g=a_0^{(g)}\,v_0+a_\infty^{(g)}\,v_\infty+b_\infty^{(g)}\,v_0+ b_0^{(g)}\,v_\infty$ as in \eqref{eq:decomp_a0ainf_b0binf} and $v=c_0v_0+c_\infty v_\infty$ in the basis of $\mathcal{L}$. Owing to \eqref{eq:def_Lv}, it suffices to control the finiteness of $L_{v_0}(g)$ and $L_{v_\infty}(g)$. By linearity,
\[
 L_{v_0}(g)\;=\;a_0^{(g)}\,L_{v_0}(v_0)+a_\infty^{(g)}\,L_{v_0}(v_\infty)+L_{v_0}(b_\infty^{(g)}\,v_0+ b_0^{(g)}\,v_\infty)\,;
\]
moreover, $L_{v_0}(v_0)=\lim_{r\downarrow 0}W_r(v_0,v_0)=0$, $L_{v_0}(v_\infty)=W_0^{\infty}$, $L_{v_0}(b_\infty^{(g)}\,v_0)=\lim_{r\downarrow 0}\,W_r(v_0,b_\infty^{(g)}\,v_0)=\lim_{r\downarrow 0}\,b_\infty^{(g)}(r)W_r(v_0,v_0)=0$, and $L_{v_0}(b_0^{(g)}\,v_\infty)=\lim_{r\downarrow 0}\,W_r(v_0,b_0^{(g)}\,v_\infty)=\lim_{r\downarrow 0}\,b_0^{(g)}(r)W_r(v_0,v_\infty)=0$. The conclusion is $L_{v_0}(g)=a_\infty^{(g)}W_0^{\infty}$. Analogously, $L_{v_\infty}(g)=-a_0^{(g)}W_0^{\infty}$, and this establishes the finiteness of $L_v(g)$.
Let us now prove now that if $f\in\mathcal{D}(\overline{S})$, then $L_v(f)=0$. Let $\chi\in C^\infty_0([0,+\infty))$ be such that $\chi(r)=1$ for $r\in[0,\frac{1}{2}]$ and $\chi(r)=0$ for $r\in[1,+\infty)$. One has that $v\chi\in\mathcal{D}(S^*)$, indeed $v\chi\in L^2(\mathbb{R}^+,\mathbb{C}^2)$ and
\[
 \begin{split}
  \widetilde{S}(v\chi)\;&=\;(\mathbf{E}\,{\textstyle\frac{\ud}{\ud r}}+\mathbf{V}(r))v\chi\;=\;\chi(\mathbf{E}\,{\textstyle\frac{\ud}{\ud r}}+\mathbf{V}(r))v+\mathbf{E}v\chi' \\
  &=\;(\widetilde{S}v)\chi+\mathbf{E}v\chi'\;=\;\mathbf{E}v\chi'\;\in\;L^2(\mathbb{R}^+,\mathbb{C}^2)\,,
 \end{split}
\]
where we used $\widetilde{S}=\mathbf{E}\,{\textstyle\frac{\ud}{\ud r}}+\mathbf{V}(r)$ and $\widetilde{S}v=0$. Moreover, because of the behaviour of $\chi$ around $r=0$, the Wronskians $W_r(\overline{v\chi},g)$ and $W_r(\overline{v},g)$ are asymptotically equal as $r\downarrow 0$, that is, $ L_{v\chi}=L_{v}$
As a consequence of this latter fact and  of \eqref{eq:omegaWr},
\[
 \begin{split}
  L_{v}(f)\;&=\;L_{v\chi}(f)\;=\;\lim_{r\downarrow 0}W_r(\overline{v\chi},f)\;=\;-\omega(v\chi,f) \\
  &=\;\langle v\chi,S^*f\rangle-\langle S^*(v\chi),f\rangle\;=\;\langle v\chi,\overline{S}f\rangle-\langle v\chi,\overline{S}f\rangle\;=\;0\,,
 \end{split}
\]
which completes the proof.
\end{proof}

We come now to the characterisation of the space $\mathcal{D}(\overline{S})$ which constitutes the main result of this Section.

\begin{prop}\label{prop:characterisationDS}
 Let $f\in\mathcal{D}(S^*)$. The following conditions are equivalent:
 \begin{itemize}
  \item[(i)] $f\in\mathcal{D}(\overline{S})$.
  \item[(ii)] $\omega(f,g)=0$ for all $g\in\mathcal{D}(S^*)$.
  \item[(iii)] $L_v(f)=0$ for all $v\in\mathcal{L}$.
  \item[(iv)] With respect to the decomposition \eqref{eq:decomp_a0ainf_b0binf} for $f$, $a_0^{(f)}=a_\infty^{(f)}=0$.
 \end{itemize}
\end{prop}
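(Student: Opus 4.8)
The plan is to establish the cycle of implications $(i)\Rightarrow(ii)\Rightarrow(iii)\Rightarrow(iv)\Rightarrow(i)$, exploiting the decomposition \eqref{eq:decomp_a0ainf_b0binf} of Lemma \ref{lem:decomp_a0ainf_b0binf} and the properties of the functionals $L_v$ from Lemma \ref{lem:Lv_vanishes_DS} as the main technical engine. The implication $(i)\Rightarrow(ii)$ is the easy half of von Neumann's abstract characterisation of the domain of the closure: if $f\in\mathcal{D}(\overline{S})$ then for every $g\in\mathcal{D}(S^*)$ one has $\omega(f,g)=\langle S^*f,g\rangle-\langle f,S^*g\rangle = \langle\overline{S}f,g\rangle-\langle f,S^*g\rangle=0$ by the very definition of the adjoint. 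The implication $(ii)\Rightarrow(iii)$ is immediate from the identity \eqref{eq:omegaWr}: since $\mathcal{L}\subset\mathcal{D}(S^*)$ (every $v\in\mathcal{L}$ is square-integrable near $0$ and decays suitably at infinity, cf. the discussion around \eqref{eq:asymptotics_for_v0vinf}, and $\widetilde S v=0\in L^2$), taking $g=v\in\mathcal{L}$ gives $L_v(f)=\lim_{r\downarrow 0}W_r(\overline v,f)=-\omega(v,f)=\overline{\omega(f,v)}=0$ using the antisymmetry \eqref{eq:omega_antisymm}.

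For $(iii)\Rightarrow(iv)$ I would use the explicit computation carried out inside the proof of Lemma \ref{lem:Lv_vanishes_DS}: there it is shown that $L_{v_0}(g)=a_\infty^{(g)}W_0^\infty$ and $L_{v_\infty}(g)=-a_0^{(g)}W_0^\infty$ for every $g\in\mathcal{D}(S^*)$. Hence the hypothesis $L_v(f)=0$ for all $v\in\mathcal{L}$, applied in particular to $v=v_0$ and $v=v_\infty$, forces $a_\infty^{(f)}W_0^\infty=0$ and $a_0^{(f)}W_0^\infty=0$; since $W_0^\infty\neq 0$ by \eqref{eq:computation_of_wronskian}, we conclude $a_0^{(f)}=a_\infty^{(f)}=0$.

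The step $(iv)\Rightarrow(i)$ is the one I expect to be the main obstacle, since it is the only implication that genuinely requires approximating $f$ by test functions rather than merely manipulating boundary terms. Assuming $a_0^{(f)}=a_\infty^{(f)}=0$, the decomposition \eqref{eq:decomp_a0ainf_b0binf} reduces to $f=b_\infty^{(f)}v_0+b_0^{(f)}v_\infty$, with $b_0^{(f)},b_\infty^{(f)}$ the integral functionals of \eqref{eq:b0binf}, both vanishing as $r\downarrow 0$, and with the combined short-distance bound \eqref{eq:bvbg_asympt} giving $f(r)=o(r^{1/2})$. To show $f\in\mathcal{D}(\overline S)$ I would argue that $\mathcal{D}(\overline S)$ coincides with $\{g\in\mathcal{D}(S^*)\,:\,a_0^{(g)}=a_\infty^{(g)}=0\}$ by a dimension count: $\mathcal{D}(S^*)/\mathcal{D}(\overline S)$ has dimension $2\cdot(\text{deficiency index})=2$ (or one can invoke \eqref{eq:Krein_decomp_formula}, which exhibits $\mathcal{D}(S^*)=\mathcal{D}(\overline S)\dotplus S_D^{-1}\ker S^*\dotplus\ker S^*$ with the last two summands one-dimensional), the two functionals $f\mapsto a_0^{(f)}$ and $f\mapsto a_\infty^{(f)}$ are linearly independent on $\mathcal{D}(S^*)$ and vanish on $\mathcal{D}(\overline S)$ by the already-proven chain $(i)\Rightarrow(iii)\Rightarrow(iv)$, so their common kernel has codimension exactly $2$ in $\mathcal{D}(S^*)$ and therefore equals $\mathcal{D}(\overline S)$. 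Alternatively, and more constructively, one can build an explicit approximating sequence: multiply $f$ by a cutoff $\chi_\varepsilon$ vanishing near $0$ and near $\infty$, check using the $o(r^{1/2})$ decay together with the $H^1_{\mathrm{loc}}$ regularity of $f$ (Proposition \ref{prop:Sclosure}, whose statement we may already invoke, or directly from \eqref{DSclosureDS*}) that $\widetilde S(\chi_\varepsilon f)\to\widetilde S f$ in $L^2$, the crossing term $\mathbf{E}f\chi_\varepsilon'$ being controlled by $\int r^{-2}|f(r)|^2\,\ud r\lesssim\int r^{-1}\,\ud r$ on the shrinking annulus near $0$ and hence $\to 0$; this shows $f$ is a graph-norm limit of elements of $\mathcal{D}(S^*)$ supported away from $0$ and $\infty$, which are then standardly mollified into $C^\infty_0(\mathbb{R}^+,\mathbb{C}^2)=\mathcal{D}(S)$, giving $f\in\overline{\mathcal{D}(S)}^{\|\cdot\|_S}=\mathcal{D}(\overline S)$. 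I would present the dimension-count route as the primary argument and mention the cutoff construction as the concrete verification of the nontrivial codimension claim.
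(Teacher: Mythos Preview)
Your implication $(ii)\Rightarrow(iii)$ contains a genuine error: the claim $\mathcal{L}\subset\mathcal{D}(S^*)$ is false. The space $\mathcal{L}=\mathrm{span}\{v_0,v_\infty\}$ is two-dimensional, but only $v_\infty$ is square-integrable on $\mathbb{R}^+$; the solution $v_0$ diverges exponentially at infinity (see \eqref{eq:asymptotics_for_v0vinf_large_distances}), so you cannot take $g=v_0$ in the hypothesis $\omega(f,g)=0$. This is exactly why the paper, in Lemma~\ref{lem:Lv_vanishes_DS}, introduces the cutoff $\chi$ and works with $v\chi\in\mathcal{D}(S^*)$, showing $L_{v\chi}=L_v$; that device is what you are missing. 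Once you insert the cutoff, your argument for $(ii)\Rightarrow(iii)$ goes through.

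With that repair, your cycle is correct, though it differs from the paper's organisation. The paper proves $(i)\Leftrightarrow(ii)$ directly (your $(i)\Rightarrow(ii)$ plus the observation that $\omega(f,\cdot)\equiv 0$ means $f\in\mathcal{D}(S^{**})=\mathcal{D}(\overline S)$), then $(i)\Rightarrow(iii)$ via Lemma~\ref{lem:Lv_vanishes_DS}, and for $(iii)\Rightarrow(i)$ it expands $\omega(f,g)$ using the decomposition \eqref{eq:decomp_a0ainf_b0binf} of $g$ and shows each term vanishes, thereby reducing to $(ii)\Rightarrow(i)$. Your route $(iv)\Rightarrow(i)$ by a codimension count is a legitimate and arguably cleaner alternative to the paper's explicit boundary-form computation: once $(i)\Rightarrow(iv)$ is known (obtainable straight from Lemma~\ref{lem:Lv_vanishes_DS} and your $(iii)\Rightarrow(iv)$, without passing through the faulty step), the two independent functionals $g\mapsto a_0^{(g)}$, $g\mapsto a_\infty^{(g)}$ vanish on $\mathcal{D}(\overline S)$, and since $\dim\mathcal{D}(S^*)/\mathcal{D}(\overline S)=2$ their common kernel must be exactly $\mathcal{D}(\overline S)$. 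The explicit cutoff approximation you sketch as a second option is workable but unnecessary once the dimension count is in place; note also that invoking Proposition~\ref{prop:Sclosure} there would be circular, so the $H^1_{\mathrm{loc}}$ regularity must indeed be read off directly from $f,\widetilde S f\in L^2$ as you indicate.
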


\begin{proof} The implication (i)$\Rightarrow$(ii) follows at once from
\[
 \omega(f,g)\;=\;\langle S^*f,g\rangle-\langle f,S^*g\rangle\;=\;\langle\overline{S}f,g\rangle-\langle\overline{S}f,g\rangle\;=\;0\,.
\]
For the converse implication (ii)$\Rightarrow$(i), we observe that
\[
 0\;=\;\omega(f,g)\;=\;\langle S^*f,g\rangle-\langle f,S^*g\rangle\qquad\forall g\in\mathcal{D}(S^*)
\]
is equivalent to $\langle S^*f,g\rangle=\langle f,S^*g\rangle$ $\forall g\in\mathcal{D}(S^*)$, which implies that $f\in\mathcal{D}(S^{**})=\mathcal{D}(\overline{S})$.

The implication (i)$\Rightarrow$(iii) is given by Lemma \ref{lem:Lv_vanishes_DS}. Conversely, let us assume that $L_v(f)=0$ for all $v\in\mathcal{L}$, and let us prove that for such $f$ one has $\omega(f,g)=0$ for all $g\in\mathcal{D}(S^*)$. Since we already established the equivalence (i)$\Leftrightarrow$(ii), we would then conclude that $f\in\mathcal{D}(\overline{S})$, and hence (iii)$\Rightarrow$(i). Owing to the decomposition \eqref{eq:decomp_a0ainf_b0binf} for $g$,
\[
 \omega(f,g)\;=\;a_0^{(g)}\omega(f,v_0)+a_\infty^{(g)}\omega(f,v_\infty)+\omega(f,b_\infty^{(g)}\,v_0)+\omega(f,b_0^{(g)}\,v_\infty)\,.
\]
One has
\[
 \overline{\omega(f,v_0)}\;=\;-\omega(v_0,f)\;=\;\lim_{r\downarrow 0}W_r(\overline{v_0},f)\;=\;L_{v_0}(f)\;=\;0\,,
\]
having used \eqref{eq:omega_antisymm} in the first step, \eqref{eq:omegaWr} in the second, \eqref{eq:def_Lv} in the third, and the assumption $L_v(f)=0$ for all $v\in\mathcal{L}$ in the last step. Analogously,
\[
 \overline{\omega(f,v_\infty)}\;=\;-\omega(v_\infty,f)\;=\;\lim_{r\downarrow 0}W_r(\overline{v_\infty},f)\;=\;L_{v_\infty}(f)\;=\;0\,.
\]
Therefore, $\omega(f,v_0)=\omega(f,v_\infty)=0$, and one is left with
\[
\begin{split}
 \overline{\omega(f,g)}\;&=\;\overline{\omega(f,b_\infty^{(g)}\,v_0)}+\overline{\omega(f,b_0^{(g)}\,v_\infty)} \;=\;-\omega(b_\infty^{(g)}\,v_0,f)-\omega(b_0^{(g)}\,v_\infty,f) \\
 &=\;\lim_{r\downarrow 0}\Big( W_r\big(\,\overline{b_\infty^{(g)}\,v_0}\,,f\big)+W_r\big(\overline{b_0^{(g)}\,v_\infty}\,,f\big)\Big) \\
 &=\;\lim_{r\downarrow 0}\Big( \,\overline{b_\infty^{(g)}(r)}\,W_r(\overline{v_0},f)+\overline{b_0^{(g)}(r)}\,W_r(\overline{v_\infty},f)\Big)\,.
\end{split}
\]
As $r\downarrow 0$, $W_r(\overline{v_0},f)\to L_{v_0}(f)=0$ and $W_r(\overline{v_\infty},f)\to  L_{v_\infty}(f)=0$, and also (as seen in Lemma \ref{lem:decomp_a0ainf_b0binf}) $b_\infty^{(g)}(r)\to 0$ and $b_0^{(g)}(r)\to 0$, whence $\omega(f,g)=0$. This completes the proof of the implication (iii)$\Rightarrow$(i).

Last, in order to establish the equivalence (i)$\Leftrightarrow$(iv), let us decompose $f$ as in \eqref{eq:decomp_a0ainf_b0binf}, namely,
\[
  f\;=\;a_0^{(f)}\,v_0+a_\infty^{(f)}\,v_\infty+b_\infty^{(f)}\,v_0+ b_0^{(f)}\,v_\infty\,,
\]
and let us compute
\[
 \begin{split}
  L_{v_0}(f)\;&=\;\lim_{r\downarrow 0}W_r(\overline{v_0},f)\;=\;a_0^{(f)}\,\lim_{r\downarrow 0}W_r(\overline{v_0},v_0)+a_\infty^{(f)}\,\lim_{r\downarrow 0}W_r(\overline{v_0},v_\infty) \\
  &\qquad\qquad +\lim_{r\downarrow 0}b_\infty^{(f)}(r)\,W_r(\overline{v_0},v_0)+\lim_{r\downarrow 0}b_0^{(f)}(r)\,W_r(\overline{v_0},v_\infty) \\
  &\;=\;a_\infty^{(f)}\,W_0^{\infty}\,.
 \end{split}
\]
Indeed, $W_r(\overline{v_0},v_0)=W_r(v_0,v_0)=0$, and  $W_r(\overline{v_0},v_\infty)=W_r(v_0,v_\infty)\to W_0^{\infty}$, $b_\infty^{(f)}(r)\to 0$, and $b_0^{(f)}(r)\to 0$ as $r\downarrow 0$. Similarly,
\[
 L_{v_\infty}(f)\;=\;-a_0^{(f)}\,W_0^{\infty}\,.
\]
Because of the already proved equivalence (i)$\Leftrightarrow$(iii), we then conclude that $f\in\mathcal{D}(\overline{S})$ if and only if $L_{v_0}(f)=L_{v_\infty}(f)=0$, which from the above computation is tantamount as $a_0^{(f)}=a_\infty^{(f)}=0$. This completes the proof.
\end{proof}

We thus see that Proposition \ref{prop:Sclosure} is therefore an immediate corollary of Lemma \ref{lem:decomp_a0ainf_b0binf} and Proposition \ref{prop:characterisationDS} above.

\begin{proof}[Proof of Proposition \ref{prop:Sclosure}]
The vanishing limit \eqref{eq:f_vanishing} for a generic $f\in\mathcal{D}(\overline{S})$ follows from the fact that, owing to Proposition \ref{prop:characterisationDS}(iv), $f=b_\infty^{(f)}\,v_0+ b_0^{(f)}\,v_\infty$, and from the asymptotics \eqref{eq:bvbg_asympt} of Lemma \ref{lem:decomp_a0ainf_b0binf}. The $H^1$-regularity of $f$ on any interval $[\varepsilon,+\infty)$, with $\varepsilon>0$, follows from the fact that on such interval the $r^{-1}$ potential is bounded and hence the closure of $\mathcal{D}(S)$ in the graph norm is in fact the closure of the smooth and compactly supported functions in the $H^1$-norm.
\end{proof}

\section{Resolvents and spectral gap}\label{sec:resolvent}

In this Section we give the details of the derivation of a couple of relevant consequences from the general classification Theorem \ref{thm:VB-representaton-theorem_Tversion}, which concern the invertibility of each member of the family of self-adjoint extensions and the expression of the resolvent. As an application to the Dirac-Coulomb Hamiltonian under consideration, we then prove Theorem \ref{thm:DC-invertibility-resolvent-gap}.

In fact, Theorem \ref{eq:thm_invertibility_and_resolvent} below is standard within the Kre{\u\i}n-Vi\v{s}ik-Birman extension theory for semi-bounded operators (see, e.g., \cite[Section 6]{GMO-KVB2017}): we present for completeness the proof in the more general framework of self-adjoint extensions of a symmetric operator with a distinguished, invertible extension. An analogous argument, from a somewhat different perspective, can be found in \cite[Theorems 13.8, 13.23, and 13.25]{Grubb-DistributionsAndOperators-2009}.

\begin{thm}[Invertibility of extensions and resolvents]\label{eq:thm_invertibility_and_resolvent}
 Let $S$ be a densely defined symmetric operator on a Hilbert space $\mathcal{H}$ which admits a self-adjoint extension $S_D$ that has everywhere defined and bounded inverse on $\cH$.  In terms of the parametrisation \eqref{eq:ST} of Theorem \ref{thm:VB-representaton-theorem_Tversion}, let $S_T$ be a generic self-adjoint extension of $S$ and $P_T:\mathcal{H} \to \mathcal{H}$ be the orthogonal projection onto $\overline{\mathcal{D}(T)}$, where the operator $T$ is the extension parameter.
 \begin{itemize}
  \item[(i)] $S_T$ is invertible on the whole $\mathcal{H}$ if and only if $T$ is invertible on the whole $\overline{\mathcal{D}(T)}$.
  \item[(ii)] When $S_T$ is invertible, and so is $T$, because of (i), one has
\begin{equation}\label{eq:InversionFormula}
S_T^{-1} = S_D^{-1} + P_T T^{-1} P_T\,.
\end{equation}
  \item[(iii)] Assume further that $\dim\ker S^*=1$, i.e., $S$ has deficiency indices (1,1). Let $\widehat{S}$ be a self-adjoint extension of $S$ other than the distinguished extension $S_D$. Let $\Phi \in \ker S^* \setminus \{0\}$ and for each $z \in\rho(\widehat{S}) \cap \mathbb{R}$ set
\begin{equation}\label{eq:6.4}
\Phi(z):= \Phi + z(S_D -z I)^{-1} \Phi\;\in \;\ker(S^*-z\mathbbm{1})\,.
\end{equation}
Then there exists an analytic function $\eta:\rho(\widehat{S}) \cap \mathbb{R} \to \mathbb{R}$ with $\eta(z) \neq 0$, such that
\begin{equation}\label{eq:KreinResolvent}
(\widehat{S} - z I)^{-1} =(S_D-z I)^{-1} + \eta(z) | \Phi(z) \rangle \langle \Phi(z) |\,.
\end{equation}
$\eta(z)$, $\Phi(z)$ and (\ref{eq:KreinResolvent}) admit an analytic continuation to $\rho(S_D) \cap \rho(\widehat{S})$.
 \end{itemize}
\end{thm}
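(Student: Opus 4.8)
\emph{Strategy.} The whole argument rests on an explicit formula for the action of $S_T$ on its domain. Starting from the parametrisation \eqref{eq:ST}, one first notes that for $g=f+S_D^{-1}(Tv+w)+v$ one has $S_Tg=S^*g=\overline{S}f+Tv+w$: indeed $S^*$ extends $\overline S$, $S^*\!\upharpoonright\!\mathcal{D}(S_D)=S_D$ so that $S^*S_D^{-1}=\mathbbm{1}$, and $v\in\mathcal{D}(T)\subset\ker S^*$ annihilates the last term. The second preliminary ingredient is the pair of orthogonal decompositions
\[
 \cH\;=\;\ran\overline{S}\oplus\ker S^*\,,\qquad \ker S^*\;=\;\overline{\mathcal{D}(T)}\oplus\big(\ker S^*\cap\mathcal{D}(T)^\perp\big)\,,
\]
the first of which I would obtain by applying $S^*$ to \eqref{eq:Krein_decomp_formula} and using $\ran S_D=\cH$ (this yields $\ran S^*=\ran\overline S+\ker S^*$, an orthogonal sum exhausting $\cH$, whence $\ran\overline S=(\ker S^*)^\perp$ is \emph{closed}), the second being immediate. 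Note also that $\overline S$ is injective as a restriction of $S_D$, so $\mathcal{D}(\overline S)=S_D^{-1}(\ran\overline S)$.

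\emph{Parts (i)--(ii).} With these facts I would prove the invertibility criterion and the resolvent formula at once by solving $S_Tg=h$ constructively. Decompose $h=h_S+P_Th_0+w$ with $h_S\in\ran\overline S$, $h_0\in\ker S^*$, $w\in\ker S^*\cap\mathcal{D}(T)^\perp$. If $T$ is invertible on $\overline{\mathcal{D}(T)}$, then $T^{-1}$ is everywhere defined and bounded there (closed graph theorem), and setting $v:=T^{-1}P_Th_0$, $f:=S_D^{-1}h_S$ the vector $g:=f+S_D^{-1}(Tv+w)+v$ lies in $\mathcal{D}(S_T)$ with $S_Tg=\overline S f+Tv+w=h$; unwinding the construction gives $g=S_D^{-1}h+P_TT^{-1}P_Th$, which is \eqref{eq:InversionFormula}. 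Conversely, from $S_Tg=0$ one projects onto $\ker S^*$ to get $Tv+w=0$ and $\overline S f=0$; the orthogonality $\ran T\subset\overline{\mathcal{D}(T)}\perp w$ and injectivity of $\overline S$ force $f=w=0$ and $Tv=0$, so $\ker S_T\cong\ker T$; and a symmetric bookkeeping of $S_Tg=h$ with $h\in\overline{\mathcal{D}(T)}$ shows that surjectivity of $S_T$ implies surjectivity of $T$ onto $\overline{\mathcal{D}(T)}$. This settles (i) and (ii), with boundedness of $S_T^{-1}$ automatic for a bijective self-adjoint operator.

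\emph{Part (iii).} When $\dim\ker S^*=1$, Theorem \ref{thm:VB-representaton-theorem_Tversion} says the only extensions are $S_D$ (parameter on $\{0\}$) and the operators with $T$ a real multiplication constant on $\ker S^*$; thus $\widehat S\neq S_D$ means $\widehat S=S_T$ with $T=\beta\mathbbm{1}$ on $\mathrm{span}\{\Phi\}$. The point is to apply (i)--(ii) not to $S$ but to $S-z\mathbbm{1}$ for real $z$, taking $S_D-z\mathbbm{1}$ as reference extension, which has everywhere defined bounded inverse precisely when $z\in\rho(S_D)\cap\mathbb{R}$. I would first verify that $\Phi(z)$ of \eqref{eq:6.4} spans $\ker(S^*-z\mathbbm{1})$: $(S^*-z)\Phi(z)=-z\Phi+z(S_D-z)(S_D-z)^{-1}\Phi=0$, $\Phi(z)=S_D(S_D-z)^{-1}\Phi\neq0$, and the deficiency index forces $\dim\ker(S^*-z)\leq1$. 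The parametrising operator of $\widehat S-z$ relative to $S_D-z$ is then multiplication by some $\beta(z)\in\mathbb{R}$ on $\mathrm{span}\{\Phi(z)\}$ (it cannot act on $\{0\}$, else $\widehat S=S_D$); by (i), $z\in\rho(\widehat S)$ iff $\beta(z)\neq0$, and then (ii) delivers \eqref{eq:KreinResolvent} with $\eta(z)=(\beta(z)\|\Phi(z)\|^2)^{-1}\neq0$. For the analyticity I would argue that $z\mapsto(\widehat S-z)^{-1}-(S_D-z)^{-1}$ is holomorphic on $\rho(S_D)\cap\rho(\widehat S)$, that $z\mapsto\Phi(z)=\Phi+z(S_D-z)^{-1}\Phi$ is a holomorphic $\cH$-valued map, and that $z\mapsto\langle\Phi(\bar z),u\rangle=\langle\Phi,u\rangle+z\langle\Phi,(S_D-z)^{-1}u\rangle$ is holomorphic; pairing the identity against a fixed $u$ with $\langle\Phi(\bar z_0),u\rangle\neq0$ expresses $\eta$ locally as a quotient of holomorphic functions with non-vanishing denominator, giving the claimed analytic continuation, and $\eta$ is nowhere zero there since $\eta(z)=0$ would give $(\widehat S-z)^{-1}=(S_D-z)^{-1}$, hence $\widehat S=S_D$.

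\emph{Main obstacle.} The only genuinely delicate point is the closedness of $\ran\overline S$, equivalently the orthogonal splitting $\cH=\ran\overline S\oplus\ker S^*$: this is false for a general symmetric operator and really uses the invertible reference extension $S_D$. Everything else is careful but routine bookkeeping with \eqref{eq:ST}, together with the standard device in (iii) of reducing the extension problem for $S$ to that for $S-z$ with reference extension $S_D-z$.
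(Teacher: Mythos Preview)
Your proposal is correct and follows the paper's overall strategy closely: the computation $S_Tg=\overline{S}f+Tv+w$, the orthogonal splitting $\cH=\ran\overline{S}\oplus\ker S^*$, and in part (iii) the reduction to $S-z\mathbbm{1}$ with reference extension $S_D-z\mathbbm{1}$ are all exactly what the paper does.

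The one genuine methodological difference is in part (ii). The paper verifies \eqref{eq:InversionFormula} as an identity between bounded self-adjoint operators by computing quadratic forms: for $h=S_Tg$ with $g=F+v$, $F\in\mathcal{D}(S_D)$, it checks $\langle h,S_T^{-1}h\rangle=\langle F,S_DF\rangle+\langle v,Tv\rangle=\langle h,S_D^{-1}h\rangle+\langle h,P_TT^{-1}P_Th\rangle$. You instead solve $S_Tg=h$ constructively, decomposing $h$ along $\ran\overline{S}\oplus\overline{\mathcal{D}(T)}\oplus(\ker S^*\cap\mathcal{D}(T)^\perp)$ and reading off $g=S_D^{-1}h+P_TT^{-1}P_Th$ directly. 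Your route is arguably more transparent (it produces the inverse rather than verifying it) and gives (i) and (ii) in one stroke; the paper's quadratic-form computation is shorter once (i) is in hand but separates the two parts. Both are standard and equally rigorous.

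Your justification of the closedness of $\ran\overline{S}$ via $\ran S^*\supset\ran S_D=\cH$ is fine; the paper simply asserts $\ran\overline{S}=\overline{\ran S}$ without comment. One minor remark: in part (iii) you (like the paper) implicitly need $z\in\rho(S_D)$ for the shifted reference extension to be invertible, so the argument really runs on $\rho(\widehat S)\cap\rho(S_D)\cap\mathbb{R}$ first and then extends; this is consistent with how the paper handles it.
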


\begin{proof}
 (i) Let us show first that $S_T$ is injective if and only if $T$ is injective. Assume that $S_T$ is injective and pick $v \in \mathcal{D}(T)$ such that $Tv=0$. Then $v$ is an element in $\mathcal{D}(S_T)$, because it is a vector of the form \eqref{eq:ST}, namely $g = f + S_D^{-1}(Tv+w)+v$, with $f=w=0$. Since $S_Tv=0$ by injectivity one concludes that $v=0$. Conversely if $T$ is injective and for some $g=f+S_D^{-1}(Tv+w)+v \in \mathcal{D}(S_T)$ one has $S_T g=0$, then $\bar S f+Tv+w=0$ Since $\bar S f + Tv +w \in \ran \bar S \boxplus \ran T \boxplus (\ker S^* \cap \mathcal{D}(T)^\perp )$, one must have $\bar S f = Tv = w =0$. Owing to the injectivity of $\bar S$ and $T$, $f=v=0$ and hence $g=0$. Next, let us show that $S_T$ is surjective if and only if $T$ is surjective. One has $\ran S_T=\ran \bar S \boxplus \ran T \boxplus (\ker S^* \cap \mathcal{D}(T)^\perp )$ and in fact $\ran \bar S = \overline{\ran S}$. Thus $T$ is surjective if and only if $\ran T \boxplus (\ker S^* \cap \mathcal{D}(T)^\perp)=\overline{\ran T} \oplus (\ker S^* \cap \mathcal{D}(T)^\perp)=\ker S^*$, if and only if $\ran S_T=\overline{\ran S} \oplus \ker S^* = \mathcal{H}$ if and only if $S_T$ is surjective. The proof of (i) is thus completed.
 
 (ii) (\ref{eq:InversionFormula}) is an identity between bounded self-adjoint operators. For a generic $h \in \ran S_T$ one has $h=S_Tg$ for some $g=f+S_D^{-1}(Tv + w) + v = F+v$, where $f \in \mathcal{D}(S_{min})$, $v \in \mathcal{D}(T)$, $w = \ker S^* \cap \mathcal{D}(T)$ (Theorem \ref{thm:VB-representaton-theorem_Tversion}), and hence $F \in \mathcal{D}(S_D)$. Then
\begin{equation*}
\langle h, S_T^{-1} h \rangle = \langle g, S_T g \rangle =\langle F, S_D F \rangle + \langle v, Tv \rangle.
\end{equation*}
On the other hand
\begin{equation*}
\langle F, S_D F \rangle = \langle S_D F, S_D^{-1} S_D F \rangle = \langle S_T g, S_D^{-1} S_T g \rangle = \langle h, S_D^{-1} h \rangle
\end{equation*}
and
\begin{equation*}
\langle v, T v \rangle = \langle Tv, T^{-1} T v \rangle = \langle P_T S_T g, T^{-1} P_T S_T g \rangle = \langle h, P_T T^{-1} P_T h \rangle
\end{equation*}
whence the conclusion $\langle h, S_T^{-1} h \rangle = \langle h, S_D^{-1} h \rangle + \langle h, P_T T^{-1} P_T h \rangle$.

 (iii) Even without assuming for the moment unital deficiency indices, for $z\in\rho(\widehat{S})\cap\rho(S_D)$  let $T(z)$ be the extension parameter, in the sense of KVB parametrisation \eqref{eq:ST} of Theorem \ref{thm:VB-representaton-theorem_Tversion}, of the operator $\widehat{S} - z \mathbbm{1}$ considered as a self-adjoint extension of the densely defined operator $S(z)=S-z\mathbbm{1}$. Correspondingly, let $P(z)$ be the orthogonal projection onto $\overline{\mathcal{D}(T(z))}$. Then
\begin{equation*}\tag{*}
(\widehat{S} -z \mathbbm{1})^{-1}= (S_D -z\mathbbm{1})^{-1} +P(z) \,T(z)^{-1} P(z)\,,
\end{equation*}
 which follows from part (ii), due to the fact that the distinguished extension of $S-z\mathbbm{1}$ is $S_D-z\mathbbm{1}$.
 Assuming now $\dim\ker S^*=1$, one has $\dim\ker(S^*-z\mathbbm{1})=1$, because of the constancy of the deficiency indices. Moreover, $\widehat{S} -z \mathbbm{1}$ is a self-adjoint extension of $S -z \mathbbm{1}$, whose extension parameter $T(z)$, in the sense of KVB parametrisation of Theorem \ref{thm:VB-representaton-theorem_Tversion}, acts as the multiplication by a real number $t(z)$ on the one-dimensional space $\ker(S^*-z\mathbbm{1})$.  The fact that $(S^*-z\mathbbm{1})\Phi(z)=0$ is obvious by construction. Moreover $\Phi(z)\neq 0$ for each admissible $z$: this is obviously true if $z=0$, and if it was not true  for $z\neq 0$, then 
 $z (S_D-z\mathbbm{1})^{-1} \Phi=-\Phi\neq 0$, which would contradict $\mathcal{D}(S_D-z\mathbbm{1})\cap\ker(S^*-z\mathbbm{1})=\{0\}$. Thus, $\Phi(z)$ spans  $\ker(S^*-z\mathbbm{1})$ and $P_T:=\|\Phi(z)\|^{-2}|\Phi(z)\rangle\langle \Phi(z)|:\cH\to\cH$ is the orthogonal projection onto $\ker(S^*-z\mathbbm{1})$. In this case, the resolvent formula (*) above takes precisely the form \eqref{eq:KreinResolvent} where $\eta(z):=\|\Phi(z)\|^{-2} \,t(z)^{-1}$. Being a product of non-zero quantities, $\eta(z)\neq0$. Moreover,  $z\mapsto(\widehat{S}-z\mathbbm{1})^{-1}$ and $z\mapsto(S_D-z\mathbbm{1})^{-1}$ are analytic operator-valued functions on the whole $\rho(S_D)\cap\rho(\widehat{S})$ (because of the analyticity of resolvents) and so is the vector-valued function $z\mapsto \Phi(z)$ (because of the construction \eqref{eq:6.4}). Therefore, taking the expectation of both sides of (*) on $\Phi(z)$ shows at once that 
 $z\mapsto\eta(z)$ is analytic on $\rho(S_D)\cap\rho(\widehat{S})$, and real analytic on $\mathbb{R}\cap\rho(\widehat{S})$.
\end{proof}

\begin{proof}[Proof of Theorem \ref{thm:DC-invertibility-resolvent-gap}]
 Part (i) is an immediate consequence of Theorem \ref{eq:thm_invertibility_and_resolvent}(i), since the KVB-extension parameter in the present case is the multiplication by $\beta$. This is of course consistent with the representation formula \eqref{eq:Sbeta}, which clearly implies that when $\beta=0$ the extension $S_{\beta=0}$ has a kernel. Analogously, part (ii) is an immediate consequence of Theorem \ref{eq:thm_invertibility_and_resolvent}(ii), because the orthogonal projection $P_T$ has in the present case the expression $P_T=\|\Phi\|^{-2}|\Phi\rangle\langle\Phi|$. Concerning part (iii), \eqref{eq:sigmaess} is a consequence of the fact that, as stated in \eqref{eq:Sbeta-1}, the resolvent difference between the $\beta$-extension and the distinguished extension is compact. Moreover, using \eqref{eq:Sbeta-1} we re-write $S_\beta f = Ef$ as
\begin{equation*}
f=E \,S_\beta^{-1} f=E\,\Big(S_{D}^{-1} +\frac{1}{\,\beta\|\Phi\|^2}\:|\Phi \rangle \langle \Phi |\Big)f\,.
\end{equation*}
This equation is surely solved by $f=0$ and, if $E\in(-E(\beta),E(\beta))$, then the operator acting on the r.h.s.~is a contraction. Thus $f=0$ is the only function which satisfies the eigenvalue equation $S_\beta f = Ef$ and therefore there cannot be eigenvalues in such a regime of $E$.
\end{proof}

\section{Concluding remarks}

We would like to end our analysis with some observations on our overall approach also in comparison with the previous literature.

As documented already, the literature concerning the problem of realising self-adjointly the Dirac-Coulomb operator is vast and unfolds uninterrupted over many decades until recent times, across different disciplines such as ODEs, functional inequalities, operator theory, etc. In fact, the Dirac-Coulomb Hamiltonian is known since long not to be uniquely realised for large Coulomb couplings, with a dominant part of the literature devoted to the study of the properties of the distinguished extension in the critical regime. The perspective of the general classification of the extensions is relatively recent \cite{Voronov-Gitman-Tyutin-TMP2007,Hogreve-2013_JPhysA}, and what we found that was missing was a comprehension of the structure of the family of extensions through the Kre{\u\i}n-Vi\v{s}ik-Birman and Grubb scheme, as opposite to the standard von Neumann scheme. 

In the former framework we could establish Theorems \ref{thm:classification_structure}, \ref{thm:classification_bc}, and \ref{thm:DC-invertibility-resolvent-gap} in a form and through steps that, to our taste, in comparison with \cite[Sections 3 and 4]{Voronov-Gitman-Tyutin-TMP2007} and \cite[Sections 3-7]{Hogreve-2013_JPhysA}, let emerge more straightforwardly the overall extension picture and the meaning of the self-adjointness boundary condition as a multiplicative constraint between regular and singular part of the functions in the domain of the extension, the multiplicative constant giving also immediate information on the invertibility property and on the resolvent and spectral gap of the extension.

This is evident comparing our form \eqref{eq:Sbeta_bc} of the (asymptotic) boundary condition of self-adjointness with \cite[Theorem 7.1, Eq.~(56)]{Hogreve-2013_JPhysA}, purely based on von Neumann's extension theory.

It is also worth pointing out that \cite[Eq.~(67)]{Voronov-Gitman-Tyutin-TMP2007} expresses the vanishing rate of elements of what is here the space $\mathcal{D}(\overline{S})$ only as $O(r^{1/2})$, as $r\downarrow 0$, whereas we proved that the correct vanishing rate is $o(r^{1/2})$ (Proposition \ref{prop:Sclosure}). We record that the $o(r^{1/2})$-rate was mentioned, but not substantiated, already in \cite[Section 2]{Burnap-Brysk-Zweifel-NuovoCimento1981}.

As far as our use of techniques from ODE theory is concerned, most of what we did is somewhat standard, but it has to be highlighted that our analysis of the space $\mathcal{D}(\overline{S})$ in Section \ref{sec:closure} is very much inspired to that of the recent work \cite{B-Derezinski-G-AHP2011} on the `twin' problem of the (scalar) homogeneous Schr\"{o}dinger operator $h=-\frac{\ud^2}{\ud r^2}+\nu r^{-2}$ on half-line. Despite the difference of goals with \cite{B-Derezinski-G-AHP2011}, where the Kre{\u\i}n-Vi\v{s}ik-Birman scheme is not exploited, the resemblance of results is not surprising: in \cite[Proposition 4.17]{B-Derezinski-G-AHP2011} the family of self-adjoint realisations of $h$ is qualified to be a collection $(h_{\theta})_{\theta\in[0,2\pi)}$ where $\mathcal{D}(h_\theta)$ is formed by elements that as $r\downarrow 0$ have the form
\[
 f+c(r^{\frac{1}{2}-m}\cos\theta+r^{\frac{1}{2}+m}\sin\theta)
\]
for some $c\in\mathbb{C}$ and some function $f$ with $f(r)\sim r^{-3/2}$ as $r\downarrow 0$, where $m:=\sqrt{\nu+\frac{1}{4}}$\,. In fact, one would say in the present language that also in that case it is possible to identify a distinguished extension in the regime $\nu>-\frac{1}{4}$, the one with $\theta=\frac{\pi}{2}$, which has the property that $\mathcal{D}(h_{\theta=\pi/2})\subset\mathcal{D}[r^{-2}]$. 

\medskip

\subsection*{Acknowledgment}
We are indebted to Naiara Arrizabalaga, Gianfausto Dell'Antonio, Marko Erceg, Diego Noja, and Giulio Ruzza for many instructive and inspirational discussions on the subject of this work.

\bibliographystyle{siam}
\def\cprime{$'$}

\end{document}